\documentclass[a4paper,11pt,twoside,reqno]{amsart}
\usepackage{preamble}
\usepackage{amsmath,amssymb,amsfonts}
\usepackage{algorithmic}
\usepackage{graphicx}
\usepackage{textcomp}
\usepackage{xcolor}
\usepackage{caption}
\usepackage{subcaption}
\usepackage{xurl}

\usepackage{listings}
\def\BibTeX{{\rm B\kern-.05em{\sc i\kern-.025em b}\kern-.08em
    T\kern-.1667em\lower.7ex\hbox{E}\kern-.125emX}}

\begin{document}

%%
%% The "title" command has an optional parameter,
%% allowing the author to define a "short title" to be used in page headers.
\title[]{Understanding Disclosure Risk in Differential Privacy with Applications to Noise Calibration and Auditing\\{\small \textit{E\lowercase{xtended version}}}}

\author{Patricia Guerra-Balboa}
\email{patricia.balboa@kit.edu}
\address{Karlsruhe Institute of Technology, KASTEL SRL}
\author{Annika Sauer}
\email{annika.sauer@student.kit.edu}
\author{Héber H. Arcolezi}
\address{Inria Centre at the University Grenoble Alpes and ÉTS Montréal}
\email{heber.hwang-arcolezi@etsmtl.ca}
\author{Thorsten Strufe}
\email{thorsten.strufe@kit.edu}

\subjclass[2020]{68P27}

\begin{abstract}
Differential Privacy (DP) is widely adopted in data management systems to enable data sharing with formal disclosure guarantees. A central systems challenge is understanding how DP noise translates into effective protection against inference attacks, since this directly determines achievable utility.
Most existing analyses focus only on membership inference---capturing only a threat---or rely on reconstruction robustness (ReRo). However, under realistic assumptions, we show that ReRo can yield misleading risk estimates and violate claimed bounds, limiting their usefulness for principled DP calibration and auditing.

This paper introduces reconstruction advantage, a unified risk metric that consistently captures risk across membership inference, attribute inference, and data reconstruction. We derive tight bounds that relate DP noise to adversarial advantage and characterize optimal adversarial strategies for arbitrary DP mechanisms and attacker knowledge. These results enable risk-driven noise calibration and provide a foundation for systematic DP auditing.
We show that reconstruction advantage improves the accuracy and scope of DP auditing and enables more effective utility-privacy trade-offs in DP-enabled data management systems.

\end{abstract}

\maketitle
\section{Introduction}

Differential Privacy (DP)~\cite{dwork2006calibrating} and its distributed variant, local DP (LDP), have emerged as the de facto standard to mitigate privacy risk---that is, the extent to which a learning process allows sensitive information about participants to be inferred. DP aims to make participation as safe as not participating~\cite{Dwork2006Differential}, and its privacy-utility trade-off is governed by the privacy budget $\varepsilon$ (smaller values provide stronger guarantees) and by $\delta$, which captures the probability mass of outcomes in which the guarantee may fail, weighted by the severity of their deviation from $\varepsilon$~\cite{Meiser2018Cryptology}. Despite this solid theoretical foundation, a central practical question remains: How do these formal parameters, especially $\varepsilon$, translate into concrete protection against real-world attacks?~\cite{Nanayakkara2023What} This question is critical for calibrating $\varepsilon$: if set too high, sensitive information may be exposed; if too low, utility is unnecessarily compromised. Furthermore, understanding this relationship is essential for DP auditing, which aims to empirically estimate privacy~\cite{Jagielski2020Auditing}, test the tightness of DP mechanisms~\cite{Nasr2021Adversary}, and detect bugs~\cite{Tramer2022Debugging}.

Motivated by its applications in noise calibration and auditing, there is growing interest in the data management community in risk assessment for DP mechanisms~\cite{Bernau2021Quantifying,Arcolezi2023On,Cormode2025Synthetic,carey2023measuring}. Significant progress has been made in connecting DP to the risk of \textit{membership inference attacks} (MIAs)~\cite{Bernau2021Quantifying,Yeom2017Privacy, Erlingsson2019That, Humphries2023Investigating}, even enabling direct noise calibration for desired MIA risk levels~\cite{kulynych2024attack} without explicitly choosing $\varepsilon$. However, MIAs capture only one aspect of privacy risk and may be less relevant in deployments such as census data releases. In particular, \textit{attribute inference attacks} (AIAs)~\cite{Yeom2017Privacy}, which can expose sensitive information even when membership is public~\cite{Balle2022Reconstructing}, remain less understood. Recently, \textit{data reconstruction attacks} (DRAs)~\cite{Balle2022Reconstructing} were proposed as a unifying framework subsuming both MIAs and AIAs, while also accounting for partial or imperfect reconstruction, e.g., revealing a car’s license plate may suffice to compromise privacy even if the background is inaccurate.

\citeauthor{Balle2022Reconstructing}~\cite{Balle2022Reconstructing} introduced the first metric for DRAs, \textit{reconstruction robustness} (ReRo), providing a pioneering unified view of DP attack resilience. 
ReRo was foundational, but has limitations as a comprehensive adversarial metric. 
First, ReRo and existing bounds~\cite{Hayes2023Bounding,Balle2022Reconstructing} assume attackers have no target-specific auxiliary knowledge, ignoring partial information such as demographic attributes or social media data—information that real-world attacks often exploit~\cite{Sweeney2000Simple, Montjoye2013Unique, Narayanan2008Robust}. 
We empirically confirm this limitation: when target-specific auxiliary information is available, the empirical ReRo exceeds the existing ReRo bounds (see~\Cref{fig:dp-sgd_mnist}).
Second, ReRo is a success probability, which penalizes mechanisms for providing global statistical knowledge---the end goal of data release---and incorrectly accounts for success from background knowledge or statistical imputation as participation risk~\cite{Bun2021Statistical,Kifer2022Bayesian}, leading to unnecessary utility loss when used for noise calibration (\Cref{fig:laplace_accu}).

We address such limitations by introducing \textit{reconstruction advantage} (RAD), which extends advantage metrics to the unifying DRA framework. RAD overcomes ReRo's limitations, naturally incorporating auxiliary knowledge and avoiding risk overestimation. We establish tight bounds linking DP noise to RAD, enabling noise injection calibrated to a participant's true risk of information disclosure. Specifically, we provide: (i) a worst-case bound independent of the attacker’s auxiliary knowledge (\Cref{th:dp_implies_aux-urero}), and (ii) an auxiliary-dependent, universally tight bound (\Cref{th:optimal_bound}). To assess tightness, we construct and prove the optimal attack strategy for any reconstruction goal, auxiliary knowledge, and mechanism---which also serves as a practical tool for DP auditing.

\Cref{th:optimal_bound} is universally tight and cannot be further improved. However, it requires full knowledge of the mechanism $\mathcal{M}$, limiting its applicability in auditing external software. While \Cref{th:dp_implies_aux-urero} can serve as a fallback in such scenarios, it may strongly overestimate risk when no auxiliary information is available.  To address this, we provide closed-form, black-box upper bounds for RAD without auxiliary knowledge (i.e., when the entire target record is considered secret, as in~\cite{Arcolezi2024Revealing,Balle2022Reconstructing,Hayes2023Bounding}) and for the case of perfect reconstruction, which is particularly relevant for categorical data where sensitive attributes (e.g., diseases, political opinions, or religious beliefs) cannot be partially reconstructed~\cite{Fredrikson2015Model, Fredrikson2014Privacy}. All our bounds substantially reduce the required noise compared to existing ReRo bounds, and we validate these improvements experimentally.

These results provide the theoretical foundation for practical DP auditing. Modern DP systems deployed in industry~\cite{Erlingsson2014RAPPOR,Lu2024Eureka}, government~\cite{Abowd2018CensusBureau}, and data-processing pipelines~\cite{McSherry2009PINQ} still lack general-purpose tools for quantifying real-world privacy leakage. Existing auditing tools either focus on a narrow attack class (often MIAs)~\cite{Jagielski2020Auditing,Tramer2022Debugging,Nasr2021Adversary,Mahloujifar2024Auditing,Arcolezi2024Revealing} or rely on learning-based strategies requiring extensive tuning without mechanism-independent guarantees~\cite{Lu2024Eureka}. RAD fills this gap, offering a principled, mechanism-agnostic characterization of reconstruction risk. Building on our novel bounds, we introduce a RAD-based auditing framework that generalizes beyond prior tools~\cite{Ding2018Detecting,Arcolezi2024Revealing}, capturing all reconstruction risks and providing more accurate, actionable privacy assessments. While our auditing framework is general in scope, in this paper we instantiate it for LDP and address key limitations of the state-of-the-art tool, \textsc{\textsc{LDP Auditor}}~\cite{Arcolezi2024Revealing}. Unlike \textsc{\textsc{LDP Auditor}}, which relies on perfect reconstruction without target-specific auxiliary knowledge---and thus misses important threats such as AIAs---our method is both more general and produces \emph{tighter empirical estimates} of the privacy budget for all the tested LDP mechanisms as demonstrated in our empirical study (see~\Cref{fig:audit_porto}).

Our contributions are summarized as follows:
\begin{itemize}
    \item We empirically show that ReRo and its existing bounds fail to account for imputation-based success and target-specific auxiliary knowledge, limiting applicability.
    \item We introduce \textit{Reconstruction Advantage (RAD)} as a consistent, unifying risk metric that naturally incorporates auxiliary knowledge.
    \item We establish tight worst-case and auxiliary-dependent bounds for RAD, along with black-box bounds for attackers lacking auxiliary knowledge.
    \item We construct the optimal attack strategy for any reconstruction goal, mechanism, and prior distribution, proving its optimality and demonstrating empirical utility for auditing.
    \item We propose a RAD-based DP auditing framework that provides broader threat analyses and more accurate privacy-budget estimates than existing LDP auditing techniques.
\end{itemize}

This is an extended version of the paper accepted in the Proceedings of the VLDB Endowment
(PVLDB), 2026. The code used for our experiments is accessible in \url{https://github.com/PatriciaBalboaKIT/Understanding-Risk-in-DP}.

%------------------------------------------------------------
\section{Background}\label{sec:background}
In this section, we introduce the relevant concepts for this work and present the notation used throughout the manuscript.

\subsection{Differential Privacy}\label{sec:background_dp}
We assume each record $z\in\Z$ to be drawn independently from an underlying prior distribution $\Z\sim \pi$. Let $\mathcal{D}(\Theta)$ denote the space of probability distributions over the output space $\Theta$. We consider a mechanism $\M \colon \mathcal{Z}^n \to \mathcal{D}(\Theta)$ which, given an input database $D \in \Z^n$, produces a global output (e.g., an aggregate statistic or a trained model) $\theta \in \Theta$ with probability/density function $p_{\M}(\theta \mid D)$.
In this context, DP is formalized as follows:

\begin{definition}[\cite{dwork2006calibrating}]\label{def:dp}
    A mechanism $\M\colon\Z^n \rightarrow \mathcal{D}(\Theta)$ is $(\varepsilon, \delta)$-differentially private if for all $S\subseteq \Theta$  and for every pair of datasets $D_0,D_1 \in \Z^n$ such that $d_{H}(D_0,D_1)\leq 1$: \[
    \Pr(\M(D_0)\in S) \leq \e^{\varepsilon}  \Pr(\M(D_1)\in S) + \delta \] 
    where $d_{H}(D_0,D_1)$ denotes the Hamming distance~\cite{mackay2003information}.
\end{definition}

If $\delta=0$ we speak of \textit{pure} DP ($\varepsilon$-DP). If $n=1$, i.e., $\M$ takes as input a single data record $z\in\Z$, we obtain \textit{Local Differential Privacy} (LDP).
LDP is a rigorous and increasingly relevant privacy model in which data is randomized on the client side before being transmitted to a data collector~\cite{Dwork2014Algorithmic}.
Consequentially, it is especially suitable for privacy-sensitive applications such as telemetry and location-based services where no trusted data curator is considered~\cite{Erlingsson2014RAPPOR,Lu2024Eureka}.

The privacy budget $\varepsilon$ determines how closely the probabilities of observing the same output on databases $D_0$ and $D_1$ must align, hence bounding their statistical ``indistinguishability''. A smaller $\varepsilon$ provides stronger privacy guarantees but typically comes at the cost of utility~\cite{Dwork2014Algorithmic}. The parameter $\delta$ allows certain violations of $\varepsilon$-DP while characterizing how likely such failures are to occur and the degree of such failures. Consequently, we aim to parameterize the attack performance based on the privacy parameters.

Many real-world deployments apply multiple DP mechanisms sequentially~\cite{bu2008privacy,cunningham2021real}. By DP’s adaptative composition property, the total privacy loss is determined by the parameters of the individual mechanisms~\cite{kairouz15composition}. 
Formally, given $[T]=\{1,\dots,T\}$, for each $i \in [T]$, let $\overline{\Theta}_{i-1} = \prod_{j=1}^{i-1} \Theta_j$ denote the space of previous outputs, and define $\mathcal{M}_i \colon \mathcal{Z}^n \times \overline{\Theta}_{i-1} \to \Theta_i$. The \emph{$T$-fold composed} mechanism is $\mathcal{M}(D) = (\mathcal{M}_1(D), \mathcal{M}_2(D, Y_1), \ldots, \mathcal{M}_T(D, Y_{T-1}))$, where $Y_i = (\mathcal{M}_1(D), \ldots, \mathcal{M}_i(D, Y_{i-1}))$ denotes the first $i$ outputs. \citeauthor{dwork2006calibrating}~\cite{dwork2006calibrating} established the first general bound on the privacy loss under $T$-fold adaptive composition: Composing $(\varepsilon,\delta)$-DP mechanisms yields to $(T\varepsilon, T\delta)$-DP. Subsequent refinements led to the tighter composition bounds as presented in~\cite{kairouz15composition}.

\subsection{Differential Privacy and Attack Resilience}\label{sec:background_attacks}
Following previous work we consider for any target record $z$ an \textit{informed adversary}~\cite{Balle2022Reconstructing} with access to:
 the fixed dataset $D_{-} = D \setminus \{z\}$,  
 the distribution of data records $\pi$,  
 the output $\theta$ of the model trained on $D_{z}=D_{-}\cup\{z\}$,  
 the mechanism $\mathcal{M}$, and  
 optional target-specific auxiliary knowledge $a(z)$ about target record $z$.  We adopt this adversary model because, under the assumption that records are independently drawn from $\pi$, bounding the performance of such an attacker also bounds the performance of any attacker with less information~\cite{Balle2022Reconstructing}.

Our analysis focuses on DRAs, where the adversary's goal is to correctly reconstruct completely or partially the target record $z$, potentially given auxiliary knowledge $a(z)\in aux$ about the target. DRAs cover AIAs and MIAs as particular cases~\cite{Balle2022Reconstructing}:
In an MIA, the attacker knows the entire target record $a(z)=z$ and seeks only to infer its participation in the dataset. In an AIA, records are structured as $z=(x,y)$, where $a(z)=x$ is considered public and the attacker aims to perfectly reconstruct the sensitive attribute $y$. More generally, in a DRA setting, it is natural to assume access to target-specific auxiliary knowledge. For example, when reconstructing a license plate number from a target’s car image, the attacker may already know the color of the car.  
Hence, DRAs cover the broad range of commonly discussed privacy risks, including MIAs and AIAs as a particular instance~\cite{Balle2022Reconstructing}. Formally, a DRA, denoted by $A\colon \Theta\times aux\to\Z$ uses the output of a DP mechanism $\theta\sim\M(D)$ and the target auxiliary information to produce a candidate $\Tilde{z}=A(\theta,a(z))$. Note that, in case of composing several mechanisms, we consider the final output after the whole process. 

The attack is considered successful if the output is similar enough (according to a success threshold $\eta$) to the real input $z$: $\ell(\Tilde{z},z)\leq \eta$. The error function $\ell$ depends on the context, for instance, in a classic AIA, given $z=(x,y)$ we define $\phi(z)=y$ and
$
 \ell(\Tilde{z},z)=
     0$ if $\phi(\Tilde{z})=\phi(z)$ and equals one otherwise.
In a MIA, $\ell$ is the characteristic function such that $\ell(\Tilde{z},z)=0$ when $\Tilde{z}=z$ and equals one otherwise. However, it may be sensitive enough to partially reconstruct the target. For instance, in the image domain, even if not all pixels are correct, we may gather sensitive information such as the action performed in the image. Consequently, $\ell$ may be chosen as an image-specific metric, such as the Learned Perceptual Image Patch Similarity (LPIPS)~\cite{Balle2022Reconstructing}.   
Given the error function $\ell$ and the threshold $\eta$, we define the \textit{success set} of a target $z$ as
\[S_{\eta}(z)=\{z'\in\Z\colon \ell(z,z')\leq \eta\}.\]

After defining a DRA, the question of how to evaluate its performance arises. For the particular cases of AIA and MIAs, the current literature~\cite{Guerra2024Analysis,Yeom2017Privacy} agrees on the following metric:

\begin{definition}[Adapted from~\cite{Yeom2017Privacy}]\label{def:att_adv}
Given $\pi$ the distribution of data records and  $\M,\phi(z),a(z),A$ as defined above, the \textit{attribute advantage}, $\mathrm{Adv}_{AIA}$, is defined as
\[
\Pr_{\substack{z_0\sim\pi\\\theta\sim\M(D_{z_0})}}\left[A(\theta,a(z_0))=\phi(z_0)\right]-\Pr_{\substack{z_0,z_1\sim\pi\\\theta\sim\M(D_{z_1})}}\left[A(\theta,a(z_0))=\phi(z_0)\right].
\]
\end{definition}

The attribute advantage measures the adversary’s gain in correctly inferring a sensitive attribute 
$\phi(z)$ when a record is in the input dataset $z_0\in D$, compared to when it is drawn from the underlying distribution $\pi$. The second term in~\Cref{def:att_adv} corrects for cases where the attribute could be inferred even without the record being in the database (e.g., through imputation~\cite{Jayaraman2022Are}). In the context of MIAs, the advantage reduces to the true positive rate (TPR) minus the false positive rate (FPR), effectively discounting trivial attacks, such as always predicting ``member'', which achieve high success probability (the probability to correctly identify a member is one) without revealing any meaningful private information.\footnote{Strictly speaking, \Cref{def:att_adv} reduces to the true positive rate (TPR) minus the probability of identifying any arbitrary record as a member; by normalizing instead by the resampling probability, we obtain the TPR–FPR (see~\cite[Prop. 8.1]{Guerra2024Analysis} for details).}

The current proposed performance metric for general DRAs~\cite{Balle2022Reconstructing} does not define an advantage but instead only accounts for the success probability of an attack that has as input solely the output of the DP mechanism and the known dataset $D_{-}$, ignoring any possible target-specific auxiliary knowledge:
\begin{definition}[ReRo~\cite{Balle2022Reconstructing}]\label{def:rero}
    Let $\pi$ be a prior over $\mathcal{Z}$ and $\ell: \mathcal{Z} \times \mathcal{Z} \rightarrow \mathbb{R}_{\geq 0}$ a error function. Mechanism $\mathcal{M}: \mathcal{Z}^n \rightarrow \D(\Theta)$ is $(\eta, \gamma)$-reconstruction robust with respect to $\pi, \ell$ if for any dataset $D_- \in \mathcal{Z}^{n-1}$ and any reconstruction adversary $A: \Theta \rightarrow \mathcal{Z}$,  \[
    \Pr_{Z \sim \pi, \theta \sim \mathcal{M}(D_{Z}) } [\ell(Z, A(\theta)) \leq \eta] \leq \gamma.
    \]
\end{definition}

The first bound for ReRo under $\varepsilon$-DP was given by~\cite{Balle2022Reconstructing}: 
\begin{equation}\label{eq:balle_bound}
    \gamma\leq \kappa^{+}_{\pi,\ell}(\eta)\e^{\varepsilon},
\end{equation}
where $\kappa^{+}_{\pi,\ell}(\eta)=\sup_{z_0}\Pr_{Z\sim\pi}[\ell(z_0,Z)\leq \eta]$. Intuitively, $\kappa^{+}_{\pi,\ell}(\eta)$ represents the success probability of an oblivious attack that always selects the most likely reconstruction under the prior $\pi$.

Recent work~\cite{Hayes2023Bounding} refined this bound using the $f$-DP~\cite{dong2019Gaussiana}, a characterization of DP that captures the exact statistical indistinguishability between neighbors through the functional $f$. Formally,
\begin{definition}[\cite{Kifer2022Bayesian}]\label{def:f-DP}
Let $f\colon[0,1]\to[0,1]$ be a continuous, convex, non-increasing function such that $f(x)\leq 1-x$. 
A mechanism $\mathcal{M}$ satisfies $f$-DP if for all $D_0,D_1 \in \Z^n$ such that $d_{H}(D_0,D_1)\leq 1$ and all post-processing algorithms $A\colon\mathrm{Range}(\M)\to\D(\{0,1\})$,  
\[
\Pr(A(\mathcal{M}(D_0)) = 1) \leq 1 - f\big(\Pr(A(\mathcal{M}(D_1)) = 1)\big).
\]
\end{definition}

Here, $f$ is known as a \emph{trade-off function}~\cite{dong2019Gaussiana}, named for its interpretation in the context of hypothesis testing. 
Specifically, consider $A$ as a test of
\[
\begin{cases}
    H_0:& \text{ the input is } D_0\\
    H_1:&  \text{ the input is } D_1,
\end{cases}
\]
applied to the output of $\mathcal{M}$. Then $\Pr(A(\mathcal{M}(D_0))=1)$ is the significance level and $\Pr(A(\mathcal{M}(D_1))=1)$ is the power of the test.   Under this interpretation, for a given significance level, $f$ bounds the maximum achievable power. When \(f\) is the trade-off function between two normal distributions with different means, namely
$
f(\alpha) = \Phi\!\left(\Phi^{-1}(1-\alpha) - \mu\right),
$
where \(\Phi\) denotes the standard normal CDF, the resulting notion is known as \emph{Gaussian DP} (\(\mu\)-GDP).

The $f$-DP framework facilitates the computation of quantities such as the \emph{total variation} distance:

\begin{definition}\label{def:TV}
   A mechanism $\M$ has total variation at most $\mathrm{TV}(\M)$ if, for all neighboring datasets $D_0, D_1$,
   \[
   \sup_{S\subseteq \Theta}|\Pr(\M(D_0)\in S)-\Pr(\M(D_1)\in S)|\leq \mathrm{TV}(\M).
   \]
\end{definition}

For any $\M$ satisfying $(\varepsilon,\delta)$-DP, its TV is bounded~\cite{kairouz15composition} as
\begin{equation}\label{eq:f-to-tv}
 \mathrm{TV}(\M)\leq \max_{\alpha\in[0,1]} \big(1 - f(\alpha) - \alpha\big) \leq \frac{e^{\varepsilon}-1+2\delta}{e^{\varepsilon}+1}.    
\end{equation}
Both $f$-DP and TV are preserved under composition. Specifically, the $T$-fold  composition of an $f$-DP mechanism satisfies  $f^{\otimes T}$-DP, where $f \otimes f$ denotes the trade-off function $T(P \times P, Q \times Q)$ for $f=T(P,Q)$. For instance, if a mechanism is $\mu$-GDP, then its $T$-fold composition is $(\mu \sqrt{T})$-GDP~\cite{dong2019Gaussiana}. Moreover, if $\mathrm{TV}(\mathcal{M}_i) = \Delta$, then the $T$-fold  composition satisfies $\mathrm{TV}(\mathcal{M}) \le 1 - (1-\Delta)^T$~\cite{ghazi2024total}. This bound can be sharpened to $\max_{\alpha}\bigl(1 - f^{\otimes T}(\alpha) - \alpha\bigr)$ when $f$ is known.

\citeauthor{Hayes2023Bounding}~\cite{Hayes2023Bounding} present the first bound for any $f$-DP mechanism:
\begin{equation}\label{eq:hayes_bound}
    \gamma\leq 1-f(\kappa^{+}_{\pi,\ell}(\eta)).
\end{equation}
which they showed empirically nearly tight for DP-SGD, the most known DP algorithm for private learning~\cite{Abadi2016Deep}. 

\subsection{Measure Theory Results}
In this section, we present the disintegration theorem, a fundamental result in measure theory that plays a key role in the proofs of this paper.

In continuous probability spaces, events of the form $X = x$ have probability zero, so conditional probabilities defined via ratios are not well-defined. The disintegration theorem provides a rigorous substitute: Any joint probability measure $\mu$ on $X \times Y$ can be decomposed as
\[
\mu(dy\,dx) = \mu_x(dy)\,\mu_X(dx),
\]
where $\mu_X$ is the marginal of $X$ and $\mu_x$ is a probability measure on $Y$ representing the conditional law of $Y$ given $X = x$. This decomposition allows conditional distributions to be defined point-wise (almost everywhere), despite conditioning on null events.

This intuition extends from Cartesian products to general measurable maps $a \colon \Z \to \X$, where disintegration allows one to define conditional measures $\mu_x$ supported on the fibers $a^{-1}(x)$, providing a rigorous notion of conditioning on $a(z)=x$ even when $\mu(a^{-1}(x))=0$. Formally:

\begin{theorem}[Disintegration Theorem~\cite{baccelli24random}]~\label{th:disintegration} Let $(\Z,\mathcal{B}(\Z))$ and $(\X,\mathcal{B}(\X))$ be standard Borel spaces and  $(\Z,\mathcal{B}(\Z),\mu)$ be a probability space. Let
$a: \Z \to \X$ be a measurable map.
Denote by $\nu = \mu \circ a^{-1}$ the push-forward measure of $\mu$ through $a$.
Then there exists a $\nu$-almost everywhere uniquely determined family of probability measures
$\{\mu_x\}_{x \in \X}$ on $\Z$ such that:
\begin{enumerate}
    \item For $\nu$-a.e.\ $x \in \X$, $\mu_x$ is supported on the fiber
    $a^{-1}(x)$, i.e.
    \[
    \mu_x\!\left(\Z \setminus a^{-1}(x)\right) = 0 .
    \]

    \item For every measurable set $B \subseteq\Z$,
    \[
    \mu(B)
    =
    \int_{\X} \mu_x(B)\, d\nu(x).
    \]

    \item For every integrable function $f \in L^1(\Z,\mu)$,
    \[
    \int_\Z f(z)\,d\mu(z)
    =
    \int_\X\left(
        \int_{a^{-1}(x)} f(z)\,d\mu_x(z)
    \right)d\nu(x).
    \]
\end{enumerate}
\end{theorem}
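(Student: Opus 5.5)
The plan is to build the family $\{\mu_x\}$ as a regular conditional distribution of the identity map on $\Z$ given the $\sigma$-algebra generated by $a$, and then to exploit that $\X$ is standard Borel to force concentration on fibers. First I would use that $\Z$ is standard Borel to fix a Borel isomorphism of $\Z$ with a Borel subset of $[0,1]$. This reduces the construction to real-valued random variables, transported back at the end. Put $\nu=\mu\circ a^{-1}$. For each rational $q$, the finite measure $B\mapsto \mu\bigl(\{z\le q\}\cap a^{-1}(B)\bigr)$ on $\X$ is absolutely continuous with respect to $\nu$. The Radon--Nikodym theorem then gives a density $F_q(x)$ with $0\le F_q\le 1$.

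Since there are only countably many rationals, I would discard a single $\nu$-null set outside which $q\mapsto F_q(x)$ is nondecreasing, has limits $0$ and $1$ at $\mp\infty$, and is right-continuous along rationals. Off that null set, $F(t,x)=\inf_{q>t}F_q(x)$ is a genuine distribution function; on it, use a fixed probability measure. Each $F(\cdot,x)$ defines a probability measure $\mu_x$. A $\pi$--$\lambda$ argument starting from half-lines then shows that $x\mapsto\mu_x(B)$ is measurable for every Borel $B$ and that
\[
\mu\bigl(B\cap a^{-1}(C)\bigr)=\int_C \mu_x(B)\,d\nu(x)
\]
for all Borel $B\subseteq\Z$ and $C\subseteq\X$. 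Taking $C=\X$ yields property (2).

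For property (1) I would use a countable family $\{C_k\}$ of Borel sets that generates $\mathcal{B}(\X)$ and separates points, which exists because $\X$ is standard Borel. Take $B=a^{-1}(C_k)$ and $C=C_k$ in the displayed identity, and separately $C=\X\setminus C_k$. This gives $\mu_x(a^{-1}(C_k))=\mathbf{1}_{C_k}(x)$ for $\nu$-a.e.\ $x$. Intersecting countably many full-measure sets, for a.e.\ $x$ we get $\mu_x\bigl(\bigcap_{k:x\in C_k}a^{-1}(C_k)\bigr)=1$. Point separation shows this intersection is exactly $a^{-1}(x)$, since $\bigcap_{k:x\in C_k}C_k=\{x\}$. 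I expect this to be the main obstacle, because it is the only place where the standard Borel hypothesis on $\X$ is essential. The delicate point is arranging the countable separating family so that the exceptional null sets stay countable in number.

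Property (3) then follows by the usual extension: it holds for indicators by (2), hence for simple functions by linearity, and for nonnegative measurable $f$ by monotone convergence. Splitting $f=f^+-f^-$ handles $f\in L^1$, where integrability of the inner integral for $\nu$-a.e.\ $x$ comes from the nonnegative case. The inner integral may be taken over $a^{-1}(x)$ by (1). Uniqueness also uses (1). If $\{\mu'_x\}$ is another such family, then for every Borel $B$ and $C$, (1) gives $\mu_x(B\cap a^{-1}(C))=\mathbf{1}_C(x)\,\mu_x(B)$ for $\nu$-a.e.\ $x$, and likewise for $\mu'_x$. Integrating with (2) gives
\[
\int_C \mu_x(B)\,d\nu(x)=\mu\bigl(B\cap a^{-1}(C)\bigr)=\int_C \mu'_x(B)\,d\nu(x)\quad\text{for all Borel } C\subseteq\X,
\]
so $\mu_x(B)=\mu'_x(B)$ $\nu$-a.e. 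Running this over a countable $\pi$-system generating $\mathcal{B}(\Z)$ and applying Dynkin's lemma gives $\mu_x=\mu'_x$ for $\nu$-a.e.\ $x$.
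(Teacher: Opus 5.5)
The paper does not prove this statement; it quotes it from the cited monograph of Baccelli et al.\ and uses it as a black box, so there is no in-paper argument to compare against. Your proposal is the standard regular-conditional-distribution construction, and it is correct. You take Radon--Nikodym densities $F_q$ at rational levels, regularise them off a single $\nu$-null set, and extend the identity $\mu(B\cap a^{-1}(C))=\int_C\mu_x(B)\,d\nu(x)$ by a $\pi$--$\lambda$ argument. Fiber concentration then follows from a countable generating family of $\mathcal{B}(\X)$. Your uniqueness argument rightly uses (1) and not only (2). This matters, because the constant family $\mu_x\equiv\mu$ already satisfies (2).

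Two details need tightening.

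First, the equality $\bigcap_{k:x\in C_k}C_k=\{x\}$ requires $\{C_k\}$ to be closed under complements. For example, with $\X=\{0,1\}$, the single generator $\{1\}$ and $x=0$, the intersection is all of $\X$. Either add the complements to the family, or also intersect with $\bigcap_{k:x\notin C_k}a^{-1}(\X\setminus C_k)$. That set has full $\mu_x$-measure, because you already showed $\mu_x(a^{-1}(C_k))=0$ for $x\notin C_k$.

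Second, to transport back from $[0,1]$ you need $\mu_x(E)=1$ for $\nu$-a.e.\ $x$, where $E$ is the Borel image of $\Z$. This follows from your identity with $B=E$ and $C=\X$: the integral equals $1$ and $\mu_x(E)\le 1$. You then redefine $\mu_x$ on the exceptional null set.

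Finally, the step you flag as the main obstacle is routine. The family is countable, so only countably many null sets are ever discarded.
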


Note that when $\mu$ is a product measure $\mu_Z \otimes \mu_X$ and $a$ is the projection onto $Z$, the conditional measures $\mu_z$ can be taken equal to $\mu_X$ for $\nu$-almost every $z$, and the above reduces to the classical Fubini--Tonelli theorem.

\begin{Remark}\label{re:discrete}
The disintegration theorem applies straightforwardly when $\mu$ is the counting
measure on a discrete space. Let $\Z$ and $\X$ be  discrete finite sets and let
$a : \Z \to \X$ be any measurable function.

For each $x \in \X$, define the fiber
\[
z_x := a^{-1}(x) = \{ z \in \Z : a(z) = x \}.
\]
Let $\mu$ be the counting measure on $\Z$, i.e., $\mu(B)=\#B$ for $B \subseteq \Z$.
The push-forward measure is then
\[
\nu(x) = \mu(a^{-1}(x)) = \# z_x.
\]

For each $x$ with $\nu(x)>0$, define the conditional measure $\mu_x$ on $\Z$ by
\[
\mu_x(B) = \frac{\#(B \cap z_x)}{\# z_x},
\]
i.e., the uniform distribution on the fiber $z_x$.
Then for any $B \subseteq \Z$,
\[
\mu(B) = \sum_{x \in \X} \mu_x(B)\, \nu(x).
\]

Consequently, for any function $f : \Z \to [0,\infty]$,
\[
\sum_{z\in \Z} f(z)
= \sum_{x \in \X} \left( \int f(z)\, \mu_x(dz) \right) \nu(x)
= \sum_{x \in \X} \left( \sum_{z \in z_x} f(z) \right).
\]
\end{Remark}

\Cref{th:disintegration} is crucial to understand and prove novel properties on attack resilience of DP that we present in~\Cref{sec:main}.

\section{Review of the Related Work}\label{sec:related_work}
In this section, we review relevant previous work on measuring the effective attack resilience of DP mechanisms for calibration and auditing. Particularly, we discuss novel insights and gaps that motivate our work.

\textbf{Attack-Based DP Noise Calibration.} Several recent studies~\cite{Bernau2021Quantifying,kulynych2024attack,chatzikokolakis2023bayes} demonstrate that calibrating DP noise based on resilience to specific attacks can significantly improve utility. Such approaches, however, primarily target MIAs, which may lead to unnecessary utility degradation without offering meaningful privacy benefits when membership is public or considered non-sensitive~\cite{Balle2022Reconstructing}. 

Beyond MIAs, privacy concerns often involve AIA, where the adversary aims to infer sensitive attributes of individuals from released data~\cite{Pyrgelis2017What, Jayaraman2022Analyzing}. 
A common metric for evaluating such attacks is the attribute advantage~\cite{Yeom2017Privacy}. Existing works that provide theoretical bounds for AIAs either analyze specific attack strategies~\cite{Yeom2017Privacy} or adopt more general DRA frameworks~\cite{Balle2022Reconstructing, Guerra2024Analysis}. Within the latter, the notion of ReRo has emerged as the metric for measuring the risk of DRAs, under which attribute inference can be modeled as a special case~\cite{Balle2022Reconstructing}. Moreover, ~\Cref{eq:balle_bound}~\cite{Balle2022Reconstructing} and~\Cref{eq:hayes_bound}~\cite{Hayes2023Bounding} provide ReRo-based DP noise calibration methods. 

\textbf{A note on limitations of ReRo.}
%NEW (be nicer with ReRo)
\citeauthor{Balle2022Reconstructing}'s pioneering work~\cite{Balle2022Reconstructing} introduced ReRo and linked it to DP, providing a framework to assess the risks of DRAs and enabling risk analysis beyond MIAs. ReRo is suitable when the adversary’s reconstruction capability is entirely based on the participation of the record, yet extending it to broader settings introduces significant limitations.

A general-purpose risk metric  would be expected to cover all relevant attack scenarios.
However, ReRo does not formally account for the impact of target-specific auxiliary knowledge, hence excluding MIAs, AIAs and targeted DRAs as introduced in~\Cref{sec:background_attacks}. 
Formally, the attack  considered in \cite{Balle2022Reconstructing} (see~\Cref{def:rero} for details), only has access to the mechanism output $\M(D)$, i.e., $A\colon \Theta \to \D(\Z)$, implying that $\Pr(A(\M(D), a(z))\in S) = \Pr(A(\M(D), a(z'))\in S)$ for any pair of possible targets  $z, z'$ and output set  $S$.
Under this assumption, the attacker \(A\) cannot adapt its strategy to a specific target \(z\). This choice is reasonable for attacks that attempt to reconstruct a record without relying on auxiliary knowledge. However, it  fundamentally prevents assessing the risk of MIA and AIA, as they use full or partial knowledge of some target records.
This is a relevant limitation since most real-world privacy attacks historically exploit publicly available information about the target~\cite{Sweeney2000Simple, Montjoye2013Unique, Narayanan2008Robust}. 
Moreover, we show in~\Cref{sec:main}  several attacks that leverage target-specific auxiliary knowledge, and its success highly depends on it.

All formal bounds connecting ReRo and DP were proven under this restrictive exclusion.  
The requirement that the attack depends only on $\M(D)$---ignoring target-specific information---is critical to establishing both~\Cref{eq:balle_bound,eq:hayes_bound}. This is not merely a theoretical limitation: we show in~\Cref{sec:experiments} that these bounds do not hold for attacks that exploit target-specific knowledge against well-known mechanisms such as DP-SGD.

A direct extension of ReRo to targeted attacks $A(\theta, a(x))$ may also lead to problematic assessments: not only do the original bounds no longer hold, but the metric also collapses to a substantial overestimation of risk due to imputation and background knowledge. For instance, the trivial MIA that identifies every target as a member regardless of the mechanism output, $A(\theta, z) = z$, has success probability $1$, which ReRo would interpret as a catastrophic privacy risk, even though no actual leakage occurs. This is not a negligible edge case; it has caused misleading overestimation of risk in black-box attacks on classification models~\cite{Jayaraman2022Are}, where much of the reported success arose from data imputation rather than exploiting the mechanism’s output. Such overestimation obscures the true leakage and can lead to unnecessary utility loss when ReRo is used to calibrate noise in DP. 

Even under the original assumption that the attacker has no target-specific knowledge, ReRo still overestimates risk, as we discussed in our preliminary work~\cite{Guerra2024Analysis}. The mechanism output $\M(D)$ inherently reveals distributional information and population-level statistics, which are the primary goals of any learning process. This information can be used to perform imputation and infer attributes of individual records---even those not in $D$---with high accuracy, particularly when strong correlations exist (e.g., smoking correlating with cancer). In this case, the apparent attack success is driven by statistical inference rather than actual privacy violations, a phenomenon often referred to as a \emph{privacy fallacy}~\cite{Dwork2006Differential,Kifer2022Bayesian}. Indeed, several works establish that it is impossible to simultaneously provide utility and eliminate absolute information gain~\cite{Dwork2006Differential,Kifer2022Bayesian}.

We conclude that, while foundational, ReRo may be misleading as a general-purpose attack resilience metric, as it overlooks key statistical phenomena that distort privacy risk assessment, such as data imputation and targeted attacks. Both cases are very common and have an impact in practice (see~\Cref{sec:experiments}), motivating the need for a novel framework to more accurately assess the risk of DP mechanisms with respect to attacks.

\textbf{DP Auditing.} 
DP auditing~\cite{Annamalai2025Hitchhiker} seeks to demonstrate tight estimates of the privacy budget, discover implementation flaws, and estimate empirical privacy.
% Black-box methods for discovery of DP violations, 
However, auditing in practice remains a significant challenge. For instance, implementation bugs or design flaws can severely degrade privacy guarantees in ways that are not immediately obvious. To address this, black-box discovery methods such as DP-Sniper~\cite{Bichsel2021DPSniper} and Eureka~\cite{Lu2024Eureka} have been developed to detect DP violations by training classifiers to distinguish between mechanism outputs from ``worst-case'' adjacent inputs. 
This methodology implicitly assumes that the mechanism’s output distribution lies in a low-dimensional, learnable representation.
While effective at uncovering certain classes of violations, this assumption breaks down for frequency-oracle mechanisms over high-dimensional categorical domains, where outputs are discrete randomized encoding~\cite{Arcolezi2023On} with inherently combinatorial structure. Consequently, the learned classifiers fail to scale, becoming prohibitively slow or ineffective as the domain dimension grows.
 
% When trying to assess empirical privacy, most existing auditing approaches focus on 
Beyond identifying bugs, existing empirical privacy auditing approaches primarily focus on MIAs~\cite{Bernau2021Quantifying, Annamalai2024Nearly, Jagielski2020Auditing, Steinke2023Privacy}, which limits their ability to detect broader forms of privacy leakage. Some auditing techniques extend beyond MIAs to consider AIAs, but these are restricted to specific contexts—such as Label DP~\cite{Malek2021Antipodes} or synthetic data generation~\cite{Houssiau2022Tapas}.
In the LDP setting, the state-of-the-art framework \textsc{\textsc{LDP Auditor}}~\cite{Arcolezi2024Revealing} relies specifically on perfect reconstruction without target-specific auxiliary knowledge for auditing. 
 
Summarizing, despite its practical importance, no existing auditing framework incorporates auxiliary information or supports a DRA-based analysis that goes beyond MIAs and enables systematic evaluation across diverse DP mechanisms. Our preliminary work~\cite{Guerra2024Analysis} made partial progress by analyzing adversaries that rely solely on the mechanism output; however, it did not account for the impact of target-specific auxiliary information, which is often decisive in real-world privacy breaches, such as the classical census re-identification~\cite{Sweeney2000Simple}.
This gap motivates the development of a general auditing methodology designed to capture realistic adversaries and to quantify broader classes of privacy risks. 

\section{Reconstruction Advantage}\label{sec:main}
In this section, we introduce reconstruction advantage (RAD) as a novel, unifying metric for adversarial risk assessment. 
We first establish a worst-case bound on RAD that holds for any mechanism, data distribution, and auxiliary knowledge, ensuring robustness when the attacker’s prior knowledge is unknown. We then refine this result by deriving a tighter bound under known auxiliary knowledge and prove its tightness by constructing the corresponding optimal attack that achieves it. Together, these results provide a noise calibration method to optimize utility for a given risk. We empirically validate the practical tightness of our bounds in~\Cref{sec:experiments}.

In order to address ReRo’s lack of accounting for the impact of target-specific auxiliary knowledge, we explicitly incorporate this concept into RAD. Formally, each record $z \in \Z$ may be associated with target-specific auxiliary information $a(z) \in \mathit{aux}$. The auxiliary information can take different forms. For instance, in the classical AIA setting, where records are pairs $z=(x,y)$, one may define $a(z)=x$ and attempt to infer $y$. Alternatively, in the image reconstruction setting, the target may be the full record $z$, while $a(z)$ could correspond to a label such as ``image of a person'' or ``image of an animal''. The only structural assumption we impose is that the type of auxiliary information is consistent across all records: if $a(z)$ corresponds to a set of pixels, then for any other record $z'$, $a(z')$ must also be a set of pixels (and not, for example, a semantic label).
Having established this formalization, we introduce our metric.

\begin{definition}[$\eta$-RAD]\label{def:new_u-rero}
    Let $\pi$ be a prior over $\Z$, $\ell\colon \Z\times\Z \to\R_{\geq 0}$ an error function, and $a(z)\in aux$ the target-specific auxiliary information for each $z\in\Z$.  
     Given a mechanism $\mathcal{M}: \Z^n \rightarrow \D(\Theta)$, any dataset $D_- \in \Z^{n-1}$ and any adversary $A: \Theta \times aux \rightarrow \D(\Z)$ we define the $\eta$-reconstruction advantage, $\eta$-RAD, as

    \[
     \eta\text{-}\mathrm{RAD}=  \Pr_{\substack{Z_1 \sim \pi\\ \theta \sim \mathcal{M}(D_{Z_1})}}[\ell(Z_1, A(\theta, a(Z_1))) \leq \eta]
    -
            \Pr_{\substack{Z_0,Z_1 \sim \pi\\ \theta \sim \mathcal{M}(D_{Z_0})}}
                [\ell(Z_1, A(\theta, a(Z_1)) \leq \eta].
    \]
    
\end{definition}
RAD explicitly accounts for target-specific auxiliary knowledge, providing a generalization of the membership and attribute advantages to arbitrary reconstruction attacks. Importantly,
RAD outputs values between $-1$ and $(1-\kappa_{\pi})\leq 1$ where $\kappa_{\pi}=\Pr_{Z,Z'\sim\pi}[Z=Z']$, i.e., the probability of resampling from the distribution $\pi$, analogously to membership and attribute advantage (see~\Cref{sec:background}). Intuitively, $\kappa_{\pi}$ reflects the fact that if dataset members are drawn from a finite universe, when we randomly sample a record from the universe to simulate non-members, there is a probability, $\kappa_{\pi}$, that it coincides with the record of the actual participant. 

Intuitively, RAD measures the increase in the attacker’s success probability that arises solely from the target’s participation in the private learning process. In this way, RAD avoids the overestimation of risk that is inherent in ReRo. If $\mathrm{RAD} \leq 0$, participation carries no risk, since the attacker’s probability of correctly reconstructing the record is no greater than if the individual had not participated. Larger values of RAD indicate higher participation risk. In the extreme case where $\mathrm{RAD} = 1 - \kappa_{\pi}$, participation entails absolute risk: the attacker always succeeds in reconstructing the participant’s record, while no sensitive information can be reconstructed from non-participants. RAD can be normalized by $(1-\kappa_{\pi})$ to obtain an upper bound of $1$.

Previous bounds for ReRo assume that DRAs perform equally for every target. This assumption holds when the adversary has no target-specific auxiliary  knowledge ($aux=\{\varnothing\}$), but breaks once \emph{aux} is available: for instance, knowing that a target's surname is ``Smith'' might give less information than knowing that it is ``Sainthorpe-Burton'', as the latter is less frequent and hence carries more information. 
Such differences are not captured by ReRo, nor reflected in the proofs of the corresponding bounds~\cite{Balle2022Reconstructing, Guerra2024Analysis}. 
Hence, we provide the first theoretical bound that explicitly accounts for \emph{aux} and covers any possible attack from MIAs to the most general DRAs:

\begin{theorem}[$(\varepsilon, \delta)$-DP implies $\eta$-RAD]\label{th:dp_implies_aux-urero}
    Let $\pi,\ell,\eta \geq 0$ as in Def. \ref{def:new_u-rero}, and $\kappa_{\pi}=\Pr_{Z,Z'\sim\pi}[Z=Z']$. If a mechanism $\mathcal{M}\colon\Z^{n}\to\D(\Theta)$ satisfies $(\varepsilon,\delta)$-DP, then for any attack $A\colon \Theta\times aux\to\D(\Z)$, and database $D_{-}$ we have 
    \[
    \eta\text{-}\mathrm{RAD}\leq \mathrm{TV}(\mathcal{M})(1-\kappa_\pi) \leq  \frac{e^{\varepsilon}-1+2\delta}{e^{\varepsilon}+1}(1-\kappa_\pi).
    \]
\end{theorem}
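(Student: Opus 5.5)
The plan is to rewrite $\eta$-RAD as an expectation over independent pairs $(Z_0,Z_1)\sim\pi\otimes\pi$. Inside it sits a difference of the same success probability under two neighbouring datasets. That difference vanishes when $Z_0=Z_1$ and is bounded by $\mathrm{TV}(\M)$ otherwise.

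\textbf{Step 1: success function.} Fix $D_-$ and the attack $A$. For each target $z$ and output $\theta$, define the success probability
\[
g_z(\theta)=\Pr\bigl[\ell(z,A(\theta,a(z)))\le\eta\bigr]\in[0,1],
\]
where the probability is over the internal randomness of $A$. Then the first term of Definition~\ref{def:new_u-rero} equals $\mathbb{E}_{Z_1\sim\pi}\,\mathbb{E}_{\theta\sim\M(D_{Z_1})}[g_{Z_1}(\theta)]$. The first term does not involve $Z_0$, so we may also draw an independent $Z_0\sim\pi$ in it. The second term equals $\mathbb{E}_{Z_0,Z_1}\,\mathbb{E}_{\theta\sim\M(D_{Z_0})}[g_{Z_1}(\theta)]$. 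This gives
\[
\eta\text{-}\mathrm{RAD}=\mathbb{E}_{Z_0,Z_1\sim\pi}\Bigl[\mathbb{E}_{\theta\sim\M(D_{Z_1})}g_{Z_1}(\theta)-\mathbb{E}_{\theta\sim\M(D_{Z_0})}g_{Z_1}(\theta)\Bigr].
\]
The interchange of integrals is justified by Fubini--Tonelli: all integrands lie in $[0,1]$ and $Z_0,Z_1$ are independent, which is the product-measure case of Theorem~\ref{th:disintegration}. We need joint measurability of $(z,\theta)\mapsto g_z(\theta)$. This holds if $A$ is a Markov kernel and $\ell$ is measurable; I would state this as a standing assumption.

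\textbf{Step 2: split on $\{Z_0=Z_1\}$.} On the event $\{Z_0=Z_1\}$ the two inner expectations coincide, so the integrand is $0$. On the complement, $D_{Z_0}$ and $D_{Z_1}$ differ in one entry, so $d_H(D_{Z_0},D_{Z_1})\le 1$. For any measurable $g\colon\Theta\to[0,1]$ and any measures $P,Q$, we have
\[
\int g\,dP-\int g\,dQ\le \sup_S\,\bigl(P(S)-Q(S)\bigr).
\]
This follows from the layer-cake representation $g=\int_0^1\mathbf 1\{g>t\}\,dt$, or alternatively by taking $S=\{p>q\}$ for densities $p,q$. By Definition~\ref{def:TV}, the integrand is therefore at most $\mathrm{TV}(\M)$. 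Integrating over the pair gives
\[
\eta\text{-}\mathrm{RAD}\le \Pr[Z_0\ne Z_1]\,\mathrm{TV}(\M)=(1-\kappa_\pi)\,\mathrm{TV}(\M).
\]
Note that $g_{Z_1}$ depends on $Z_1$ through both $a(Z_1)$ and the success set $S_\eta(Z_1)$. This is harmless because the TV bound is uniform over all test functions.

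\textbf{Step 3: second inequality.} This follows directly from \eqref{eq:f-to-tv}, which bounds $\mathrm{TV}(\M)$ for any $(\varepsilon,\delta)$-DP mechanism by $\frac{e^\varepsilon-1+2\delta}{e^\varepsilon+1}$.

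The main obstacle is not the inequality itself but the rigor of Step 1. We must handle randomized, target-dependent attacks and continuous $\pi$, and confirm that the event $\{Z_0=Z_1\}$ is measurable so that its probability is exactly $\kappa_\pi$. It is measurable because the diagonal is a Borel set in a standard Borel space. I would address these points via the disintegration/Fubini argument above rather than by conditioning on null events.
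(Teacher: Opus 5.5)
Your proposal is correct and follows essentially the same route as the paper: write $\eta$-RAD as an expectation over independent $(Z_0,Z_1)\sim\pi\otimes\pi$, observe that the integrand vanishes on $\{Z_0=Z_1\}$ and is at most $\mathrm{TV}(\mathcal{M})$ otherwise, and conclude with \eqref{eq:f-to-tv}. The only difference is in how the per-pair bound is obtained: the paper uses a data-processing inequality for the total variation of the attack's output distribution (via densities and Minkowski's inequality), whereas you use the equivalent test-function bound $\int g\,dP-\int g\,dQ\le\sup_S\bigl(P(S)-Q(S)\bigr)$ for $g\colon\Theta\to[0,1]$, which avoids needing a density for the attack's output.
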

\begin{proof} 
We use $\int f(z)\diff\mu(z)$ as unified notation that represents either a sum (if $\mu$ is the counting measure) or an integral (if 
$\mu$ is the Lebesgue measure), aggregating both the discrete and continuous case in one single notation. 

First, note that for every $z\in\Z$ and target-specific knowledge $a(z)$, any attack admits the representation $ A(D,a(z))\equiv\A_{z}(\M(D))$, verifying
\[
p_{\A_{z}}(s\mid D)\equiv p_{A}\big(s\mid a(z),D\big)=\int_{\Theta} p_{\M}(\theta\mid D)\, p_{A}(s\mid \theta,a(z))\diff \mu(\theta).
\]

Note that the attack outputs values in $\Z$. Therefore,
\begin{align}
    \MoveEqLeft[5]\mathrm{TV}(\A_x(D),\A_x(D'))
    \coloneqq \sup_{S\subseteq\Z}\;\big|\Pr(\A_x(D)\in S)-\Pr(\A_x(D')\in S)\big| \notag\\
    &= \tfrac{1}{2}\int_{\Z}\Big|\,p_{A}(s\mid \mathcal M(D),a(z))-p_{A}(s\mid \mathcal M(D'),a(z))\Big| \diff \mu(s)\label{eq:mixing}\\
    &= \tfrac{1}{2}\int_{\Z}\Big|\int_{\Theta} p_{A}\big(s\mid \theta,a(z)\big)\,
       \big(p_{\mathcal M}(\theta\mid D)-p_{\mathcal M}(\theta\mid D')\big)\diff\mu(\theta)\,\Big|\diff\mu(s)\notag \\
    &\leq
       \tfrac{1}{2}\int_{\Z}\int_{\Theta} p_{A}\big(s\mid \theta,a(z)\big)\,
       \big|p_{\mathcal M}(\theta\mid D)-p_{\mathcal M}(\theta\mid D')\big|\diff\mu(\theta)\,\Big|\diff\mu(s) \label{eq:minkowski}\\
    &= \tfrac{1}{2}\int_{\Theta}\big|p_{\mathcal M}(\theta\mid D)-p_{\mathcal M}(\theta\mid D')\big|\diff\mu(\theta)
       \int_{\Z}p_{A}\big(s\mid \theta,a(z)\big)\diff \mu(s)\notag \\
    &= \tfrac{1}{2}\int_{\theta}\big|p_{\mathcal M}(\theta\mid D)-p_{\mathcal M}(\theta\mid D')\big|\diff\mu(\theta)\notag \\
    &= \mathrm{TV}(\mathcal M(D),\mathcal M(D')),
    \label{eq:tv_postprocessing}
\end{align}

where \Cref{eq:mixing} follows from~\cite[Proposition~4.2, p.~48]{levin2017Markov} and~\Cref{eq:minkowski} from Minkowski's inequality. 

Moreover, given any success set $S_{\eta}(z)=\{z'\in\Z\colon \ell(z,z')\leq\eta\}$, and using the notation $ A(D,a(z))\equiv\A_{z}(\M(D))$, we have
\[
\Pr_{\substack{Z_1 \sim \pi\\ \theta \sim \mathcal{M}(D_{Z_0})}}
        [\ell(Z_1, A(\theta, a(Z_1))) \leq \eta]=\Pr_{Z_1 \sim \pi}
        [ \A_{Z_1}(D_{Z_0})\in S_{\eta}(Z_1)].
\]
Hence, applying~\Cref{eq:tv_postprocessing} and~\Cref{def:TV} 
to RAD~\Cref{def:new_u-rero} we obtain:
\begin{gather*}
   \eta\text{-RAD}= \Pr_{Z_1 \sim \pi}
        [ \A_{Z_1}(D_{Z_1})\in S_{\eta}(Z_1)]
    -
            \Pr_{Z_0,Z_1 \sim \pi}
                 [ \A_{Z_1}(D_{Z_0})\in S_{\eta}(Z_1)]\\
    =
    \E_{Z_0\sim\pi}\left[\Pr_{Z_1 \sim \pi}
        [ \A_{Z_1}(D_{Z_1})\in S_{\eta}(Z_1)]-\Pr_{Z_1 \sim \pi}
                 [ \A_{Z_1}(D_{Z_0})\in S_{\eta}(Z_1)]\right]\\
    =\E_{Z_0,Z_1\sim\pi}\left[\boldsymbol{1}_{\{Z_0\neq Z_1\}}\big(\,
    \Pr[\A_{Z_1}(D_{Z_1})\in S_{\eta}(Z_1)]-\Pr[\A_{Z_1}(D_{Z_0})\in S_{\eta}(Z_1)]\,\big)
    \right]\\
    \overset{\text{Eq.}\ref{eq:tv_postprocessing}}{\leq} \mathrm{TV}(\M)\E_{Z_0,Z_1\sim\pi}\left[\boldsymbol{1}_{\{Z_0\neq Z_1\}}\right].
\end{gather*}
The result follows from the fact that $\E_{Z_0,Z_1\sim\pi}\left[\boldsymbol{1}_{\{Z_0\neq Z_1\}}\right]=1-\sum_{z}\pi^2_{z}$ for discrete variables and $1$ for continuous ones. Finally,~\Cref{eq:f-to-tv} completes the proof.
 \end{proof}

Note that in the discrete case, $\kappa_{\pi}=\sum_{z}\pi_{z}^2$, which is maximized when $\pi$ is uniform over two possible records ( e.g., $\pi=U\{Z_0,Z_1\}$). In the continuous case, the resampling probability is, by definition, zero. Consequently, the  result simplifies to $\eta$-RAD $\leq \mathrm{TV}(\M)$, unaffected by the prior distribution.

\Cref{th:dp_implies_aux-urero} is the first bound for RAD under the strongest threat model, where the attacker may leverage auxiliary knowledge. 
Particularly, this results states that if the mechanism is fully known, we can determine the attack mitigation it provides  by its total variation. When only the DP parameters $\varepsilon$ and $\delta$ are available, the bound quantifies how each parameter contributes to the attacker’s advantage. Consequently, this theorem serves as a key tool for DP noise calibration, improving on ReRo by encompassing a broader spectrum of potential attackers.

Moreover, \Cref{th:dp_implies_aux-urero} allows upper bounding RAD under composition. As we discussed in~\Cref{sec:background}, given $\mathrm{TV}(\M_i)=\Delta$, the $T$-adaptive composition satisfies $\mathrm{TV}(M)\leq(1-(1-\Delta)^T)$. Hence, $\eta$-RAD $\leq(1-(1-\Delta)^T)(1-\kappa_\pi)$.

\Cref{th:dp_implies_aux-urero} does not depend on the attacker’s auxiliary knowledge. Therefore, the same bound holds whether the attacker has no auxiliary information ($aux=\{\varnothing\}$) or complete knowledge of the record ($a(z)=z$), since the result is derived in a worst-case manner. 
However, when the attacker's goal is to reconstruct an entire record (as in DRA) or infer parts of it (as in AIA), it is unreasonable to assume that the attacker already knows the full record ($a(z)=z$)---as assumed for MIA. 
Therefore, we next provide a tighter bound that explicitly incorporates the target-specific auxiliary knowledge.

\begin{theorem}\label{th:optimal_bound}
Given $\mathcal{M}\colon\Z^{n}\to\D(\Theta)$ and $a\colon\Z\to aux$ measurable, then for any attack $A\colon \Theta\times aux\to\D(\Z)$, we have
\[
\eta\text{-}\mathrm{RAD}\leq \int_{\Theta}\int_{ aux}\max_{z_{\theta}\in\Z}\left(\int_{S^x_{\eta}(z_{\theta})} w(\theta,z)\,\pi_z\,\diff \mu_{x}(z)\right)\diff\nu(x)\diff\mu(\theta),
\]
where $\mu$ the counting (or Lebesgue measure) and $\pi_z$ mass (or density function)
in the discrete (or continuous case). Additionally,  $w(z,\theta)=p_{\M}(\theta\mid z)-p_{\M}(\theta)$, $S_{\eta}^x(z_{\theta})=\{z\colon a(z)=x\wedge \ell(z_{\theta},z)\leq\eta\}$,  $\nu(x)=\mu\circ a^{-1} (z)$ and $\mu_x$ the disintegration theorem measure. 

The discrete case simplifies to
\[
\eta\text{-}\mathrm{RAD}\leq \sum_{\theta\in\Theta}\sum_{x\in aux}\max_{z_{\theta}\in\Z}\sum_{\substack{{\ell(z,z_\theta)\leq\eta}\\a(z)=x}} w(\theta,z)\pi_z
\]
by direct application  of Remark~\ref{re:discrete}.
\end{theorem}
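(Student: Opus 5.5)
The plan is to write $\eta$-RAD as a single integral over $\Theta\times\Z$ against the weight $w(\theta,z)\pi_z$. Then decompose $\Z$ along the fibers of $a$ via the disintegration theorem (\Cref{th:disintegration}). Finally, bound the attacker's randomized choice pointwise in $(\theta,x)$ by the best deterministic choice.

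\textbf{Step 1.} Expand both probabilities in \Cref{def:new_u-rero}, writing $p_{\M}(\theta\mid z)$ for $p_{\M}(\theta\mid D_{-}\cup\{z\})$ and $p_{\M}(\theta)=\int p_{\M}(\theta\mid z_0)\pi_{z_0}\diff\mu(z_0)$ for the marginal under $Z_0\sim\pi$. For a randomized attack with kernel $p_A(s\mid\theta,a(z))$, set $g(\theta,z)=\Pr[A(\theta,a(z))\in S_\eta(z)]=\int_{S_\eta(z)}p_A(s\mid\theta,a(z))\diff\mu(s)$. Because $Z_0$ and $Z_1$ are independent, Fubini gives
\[
\eta\text{-}\mathrm{RAD}=\int_\Theta\int_\Z g(\theta,z)\,w(\theta,z)\,\pi_z\diff\mu(z)\diff\mu(\theta).
\]

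\textbf{Step 2.} Apply \Cref{th:disintegration}(3) to the inner $z$-integral with the map $a$. Since $\mu_x$ is concentrated on $a^{-1}(x)$, the attack kernel there depends only on $x$: $p_A(s\mid\theta,x)$. Using Fubini to swap the $s$ and $z$ integrals, the inner quantity becomes
\[
\int_\Z p_A(s\mid\theta,x)\left(\int_{a^{-1}(x)}\boldsymbol 1[\ell(z,s)\le\eta]\,w(\theta,z)\pi_z\diff\mu_x(z)\right)\diff\mu(s)
=\int_\Z p_A(s\mid\theta,x)\,F_{\theta,x}(s)\diff\mu(s),
\]
where $F_{\theta,x}(s)=\int_{S^x_\eta(s)}w(\theta,z)\pi_z\diff\mu_x(z)$. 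The success condition $\ell(z,s)\le\eta$ together with $a(z)=x$ is exactly the statement $z\in S^x_\eta(s)$, using the symmetry convention of $\ell$ implicit in the statement's definition of $S^x_\eta$.

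\textbf{Step 3.} Since $p_A(\cdot\mid\theta,x)$ is a probability density, the average of $F_{\theta,x}$ under it is at most $\sup_{s}F_{\theta,x}(s)$, written as $\max_{z_\theta}$ in the statement. Integrating over $x$ against $\nu$ and over $\theta$ yields the claimed bound. In the discrete case, Remark~\ref{re:discrete} turns the fiber integrals into fiber sums, which gives the displayed simplification. The bound is attained by the deterministic attack that outputs a maximizer $z_\theta(x)$, which gives tightness.

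The \textbf{main obstacle} is making Step 2 rigorous. The exchange of integrals needs integrability: $w$ is signed but $|w|\pi$ is integrable, so this should be manageable. One also needs measurability of $(\theta,x,s)\mapsto F_{\theta,x}(s)$ and a measurable selection of the (near-)maximizer for the tightness remark, and the supremum may not be attained. The upper bound itself only needs the supremum, so I would state it with $\sup$ and treat attainment separately. A further subtlety is the normalization of $\mu_x$ against $\nu$ in the continuous case. I would handle it by checking that the displayed formula reproduces $\int g\,w\,\pi\diff\mu$ exactly when $g\equiv$ const on fibers, in line with \Cref{th:disintegration}(3).
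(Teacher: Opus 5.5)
Your proposal is correct and follows essentially the same route as the paper's proof. Both rewrite $\eta$-RAD as $\int_\Theta\int_\Z \Pr_A\big(S_\eta(z)\mid\theta,a(z)\big)\,w(\theta,z)\,\pi_z \,\diff\mu(z)\,\diff\mu(\theta)$, disintegrate along the fibers of $a$, swap the attack-output and fiber integrals, and bound the inner fiber integral by its maximum over $z_\theta$ before integrating $p_A$ to one. Your Step 1 gets that identity directly from the independence of $Z_0,Z_1$, which is cleaner than the paper's detour through $\boldsymbol{1}_{\{Z_0\neq Z_1\}}$, but it lands on the same expression (the paper's \Cref{eq:truq}).
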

\begin{proof}
We denote by $\mu$ the counting measure in the discrete case and the Lebesgue measure in the continuous case. Moreover, following the notation introduced in~\Cref{th:dp_implies_aux-urero}, we consider $A(D,a(z))\equiv\A_{z}(\M(D))$.
First, using probability properties, we rewrite RAD definition as
{\small
      \begin{align}
   \MoveEqLeft[1]\eta\text{-RAD} 
   = \Pr_{Z_1 \sim \pi}
        [ \A_{Z_1}(D_{Z_1})\in S_{\eta}(Z_1)]
    - \Pr_{Z_0,Z_1 \sim \pi}
        [ \A_{Z_1}(D_{Z_0})\in S_{\eta}(Z_1)] \notag\\
   &= \E_{Z_0\sim\pi}\Big[\Pr_{Z_1 \sim \pi}
        [ \A_{Z_1}(D_{Z_1})\in S_{\eta}(Z_1)]
        - \Pr_{Z_1 \sim \pi}
          [ \A_{Z_1}(D_{Z_0})\in S_{\eta}(Z_1)]\Big] \notag\\
   &= \E_{Z_0,Z_1\sim\pi}\Big[\boldsymbol{1}_{\{Z_0\neq Z_1\}}\big(\,
        \Pr[\A_{Z_1}(D_{Z_1})\in S_{\eta}(Z_1)]
        - \Pr[\A_{Z_1}(D_{Z_0})\in S_{\eta}(Z_1)]\,\big)\Big] \notag\\
   &= \E_{Z_0,Z_1\sim\pi}\Big[\boldsymbol{1}_{\{Z_0\neq Z_1\}}
     \int_{\Theta} p_{A}\big(S_{\eta}(Z_1)\mid \theta,a(Z_1)\big)\,
       \Big(p_{\mathcal M}(\theta\mid D_{Z_1})
         - p_{\mathcal M}(\theta\mid D_{Z_0})\Big)\diff\mu(\theta)\Big] \notag\\
   &= \E_{Z_1\sim\pi}\Big[
     \int_{\Theta} p_{A}\big(S_{\eta}(Z_1)\mid \theta,a(Z_1)\big)\,
       \E_{Z_0\sim\pi}\Big[\boldsymbol{1}_{\{Z_0\neq Z_1\}}
         \big(p_{\mathcal M}(\theta\mid D_{Z_1})
          - p_{\mathcal M}(\theta\mid D_{Z_0})\big)\Big]\diff\mu(\theta)\Big] \notag\\
   &= \E_{Z_1\sim\pi}\Big[
     \int_{\Theta} p_{A}\big(S_{\eta}(Z_1)\mid \theta,a(Z_1)\big)\,
       \Big(p_{\mathcal M}(\theta\mid D_{Z_1})
         \E_{Z_0\sim\pi}[\boldsymbol{1}_{\{Z_0\neq Z_1\}}]
       - \E_{Z_0\sim\pi}[\boldsymbol{1}_{\{Z_0\neq Z_1\}}
         p_{\mathcal M}(\theta\mid D_{Z_0})]\Big)\diff\mu(\theta)\Big] \notag\\
   &= \E_{Z_1\sim\pi}\Big[
     \int_{\Theta} p_{A}\big(S_{\eta}(Z_1)\mid \theta,a(Z_1)\big)\,
       \underbrace{\big(p_{\mathcal M}(\theta\mid D_{Z_1})
         -p_{\mathcal M}(\theta)\big)}_{w(Z_1,\theta)}
     \diff\mu(\theta)\Big] \label{eq:truq} \\
    &= \int_{\Z}  \int_{\Theta} p_{A}\big(S_{\eta}(z_1)\mid \theta,a(z_1)\big)\,w(z_1,\theta)\,\pi_z
     \diff\mu(\theta) \diff\mu(z),\notag
\end{align}   
}

where~\Cref{eq:truq} follows trivially for the continuous case, since $\E_{Z_0\sim\pi}\left[\boldsymbol{1}_{\{Z_0\neq Z_1\}}\right]=1$, and for the discrete one, since
    \begin{gather*}
        p_{\mathcal M}(\theta\mid D_{Z_1})\E_{Z_0\sim\pi}\left[\boldsymbol{1}_{\{Z_0\neq Z_1\}}\right]-\E_{Z_0\sim\pi}\left[\boldsymbol{1}_{\{Z_0\neq Z_1\}}p_{\mathcal M}(\theta\mid D_{Z_0})\right]\\
        =p_{\mathcal M}(\theta\mid D_{Z_1})(1-\pi_1)-\E_{Z_0\sim\pi}\left[p_{\mathcal M}(\theta\mid D_{Z_0})\right]+p_{\mathcal M}(\theta\mid D_{Z_1})\pi_1\\
        =p_{\mathcal M}(\theta\mid D_{Z_1})-p_{\M}(\theta).
    \end{gather*}
    
Moreover, for all records, $z_1,z_2$, with the same auxiliary knowledge, i.e., $a(z_1)=a(z_2)=x$, and for any fixed output $\theta$, we have that
\[
\Pr_{A}(S_{\eta}(z_1)\mid\theta, a(z_1))=\Pr_{A}(S_{\eta}(z_1)\mid \theta,a(z_2))=\Pr_{A}(S_{\eta}(z_1)\mid \theta, x).
\]

Hence, given $a^{-1}(x)=\{z\colon a(z)=x\}$ for all $x\in aux$, and $\nu(x)=\mu\circ a^{-1} (x)$, applying disintegration theorem~\cite{baccelli24random} there exists a unique measure $\mu_x$ such that 
\begin{gather}
   \eta\text{-RAD}=\int_{\Z}  \int_{\Theta} \Pr_{A}(S_{\eta}(z)\mid a(z),\theta)\, w(z,\theta)\, \pi_z \diff\mu(z)\diff\mu(\theta)\notag\\
    =  \int_{\Theta} \int_{ aux}\int_{a^{-1}(x)} \Pr_{A}(S_{\eta}(z)\mid x,\theta)\, w(z,\theta)\, \pi_z \diff\mu_x(z)\diff \nu(x)\diff\mu(\theta)\notag
    \\
    =\int_{\Theta} \int_{ aux}\int_{a^{-1}(x)} \int_{\Z} \boldsymbol{1}_{\{\ell(z,\tilde{z})\leq \eta\} }\, p_{A}(\tilde{z}\mid x,\theta)\, w(z,\theta)\, \pi_z \diff\mu(\tilde{z}) \diff\mu_x(z)\diff \nu(x)\diff\mu(\theta)\notag
    \\
    = \int_{\Theta} \int_{ aux} \int_{\Z} p_{A}(\tilde{z}\mid x,\theta)\left(\int_{a^{-1}(x)}\boldsymbol{1}_{\{\ell(z,\tilde{z})\leq\eta \} }\, w(z,\theta)\, \pi_z \diff\mu_{x}( z)\right)\diff\mu(\tilde{z})\diff\nu(x)\diff\mu(\theta)\notag\\
    \leq
    \int_{\Theta} \int_{aux} \int_{\Z} p_{A}(\tilde{z}\mid x,\theta)\left(\max_{z_{\theta}\in \Z}\int_{a^{-1}(x)}\boldsymbol{1}_{\{\ell(z,z_{\theta})\leq\eta \} }\, w(z,\theta)\, \pi_z \diff\mu_x(z)\right)\diff \mu(\tilde{z})\diff\nu(x)\diff\mu(\theta)\notag\\
         =\int_{\Theta} \int_{aux} \max_{z_{\theta}\in \Z}\int_{a^{-1}(x)}\boldsymbol{1}_{\{\ell(z,z_{\theta})\leq\eta \} } w(z,\theta)\, \pi_z \diff\mu_x(z)\left(\int_{\Z} p_{A}(\tilde{z}\mid x,\theta)\diff \mu(\tilde{z})\right)\diff\nu(x)\diff\mu(\theta)\notag\\
     =\int_{\Theta} \int_{ aux} \max_{z_{\theta}\in \Z}\int_{a^{-1}(x)\cap \{z\colon \ell(z,z_{\theta})\leq \eta\}} w(z,\theta)\, \pi_z \diff\mu_x(z)\diff\nu(x)\diff\mu(\theta)\notag\\
     =\int_{\Theta} \int_{ aux} \max_{z_{\theta}\in \Z}\int_{S_{\eta}^z(z_{\theta})} w(z,\theta) \pi_z \diff\mu_x(z)\diff\nu(x)\diff\mu(\theta)\notag
\end{gather}
where $S_{\eta}^x(z_{\theta})=\{z\colon a(z)=x\wedge \ell(z_{\theta},z)\leq\eta\}$.
\end{proof}

\Cref{th:optimal_bound} bounds RAD when the specific mechanism, $\M$, and auxiliary knowledge, $aux$, are known. 
At the same time, it becomes more precise than our worst-case bound~\Cref{th:dp_implies_aux-urero}, as we illustrate in~\Cref{fig:optimal_examples}. 
Moreover, although this bound is inherently complex due to its generality, it admits simpler characterizations for commonly studied threat models---such as MIAs and DRAs where no target-specific auxiliary knowledge is assumed.

In particular, in an MIA, where the attacker has full-knowledge about the record and just seek to infer participation (i.e. $a(z)=z$ for all records), then $aux=\Z$, the push-forward metric simplifies to $\nu(z)=\mu\circ a^{-1}(z)=\mu(z)$, and 
\begin{gather*}
    \mu_{z}(\Z\backslash a^{-1}(z))=\mu_x(\Z\backslash \{z\})=0 \Rightarrow \mu_{z'}(z)=\delta_{z'}(z),
\end{gather*}
where $\delta_{z'}(z)=1$ if $z'=z$ and zero otherwise, therefore satisfying that, for any measurable set $B\subseteq \Z(\equiv aux)$,
\[
\int_{aux} \delta_{z}(B) \diff \nu(x)=\int_{aux}\boldsymbol{1}_{\{z\in B\}}\diff\mu(z)=\mu(B).
\]

Then, according to the disintegration theorem (see~\Cref{sec:background}), $\nu(z)=\mu(z)$, and $\mu_{z}(z)=\delta_{z}(z)$. Moreover, since $a$ is the identity function, 
\[
S_{\eta}^{z}(z_{\theta})=\{z\colon a(z)=x\wedge\ell(z,z_{\theta})\leq\eta\} =\begin{cases}
    \{z\} &\text{ if } \ell(z,z_{\theta})\leq\eta,\\
    \varnothing &\text{ otherwise. }
\end{cases} 
\]

Hence, applying our~\Cref{th:optimal_bound},
\begin{gather}\label{eq:full_auz_continous}
\eta\text{-RAD}\leq\int_{\Theta} \int_{ aux} \max_{z_{\theta}\in \Z}\int_{S_{\eta}^z(z_{\theta})} w(z,\theta) \pi_z \diff\mu_x(z)\diff\nu(x)\diff\mu(\theta)\\ =\int_{\Theta} 
\int_{aux}\max_{\substack{z_{\theta}\in\Z\\\ell(z_{\theta},z)\leq \eta}}\int_{\{z\}}\boldsymbol{1}_{\{\ell(z_{\theta},z)\leq\eta \}} w(\theta,z)\pi_z \diff \delta_{z}(z)\diff\mu(z)
\diff\mu(\theta)\\
=\int_{\Theta}\int_{aux} \max_{\substack{z_{\theta}\in\Z\\\ell(z_{\theta},z)\leq\eta}} w(\theta,z)\pi_z \diff\mu(z)\diff\mu(\theta),\\
=\int_{\Theta}\int_{\{z\colon w(\theta,z)>0\}} w(\theta,z)\pi_z \diff\mu(z)\diff\mu(\theta),
\end{gather}
since $z\in\Z$, $\argmax_{z_{\theta}}=S_{\eta}(z)$ if $w(\theta,z)>0$ and $\argmax_{z_{\theta}}=\Z\backslash S_{\eta}(z)$ otherwise, avoiding negative values.
For discrete variables previous formula simplifies to
\begin{gather}\label{eq:full_auz_discrete}
\eta\text{-RAD}\leq \sum_{\theta\in\Theta}\sum_{\substack{z\in\Z\\ w(\theta,z)>0}} w(\theta,z)\pi_z.
\end{gather}

We can consider the other extreme, when $aux=\{\varnothing\}$. Here, $\varnothing$ is treated as the single element in $aux$. To avoid confusion with properties of the empty set, we instead denote $aux=\{a\}$, meaning that the auxiliary function is constant for every user, i.e., $a^{-1}(a)=\Z$. In this case, there is no target-specific auxiliary knowledge.

Consequently, $\nu(a)=\mu(a^{-1}(a))=\mu(\Z)=1$. Hence, $\nu$ is the Dirac measure $\delta_{a}$.
The first condition defining $\mu_{a}$ according to~\Cref{th:disintegration} is 
\[
\mu_{a}(\Z\backslash a^{-1}(a))=\mu_{a}(\Z\backslash\Z)=\mu_{a}(\varnothing)=0,
\]
which is satisfied by any measure by definition. Hence, we look to the second defining condition, for any measurable set $B\subseteq \Z$
\begin{gather}
    \mu(B)=\int_{\{a\}}\mu_{a}(B)\diff\delta_{a}(a)=\mu_{a}(B),
\end{gather}
hence, $\mu_{a}=\mu$, obtaining:
\begin{gather*}
    \eta\text{-RAD}\leq \int_{\Theta}\int_{ aux}\max_{z_{\theta}\in\Z}\left(\int_{S^a_{\eta}(z_{\theta})} w(\theta,z)\pi_z\diff \mu_{z}(z)\right)\diff\nu(x)\diff\mu(\theta)\\
    =\int_{\Theta}\int_{\{a\}} \max_{z_{\theta}\in\Z}\left(\int_{S^a_{\eta}(z_{\theta})} w(\theta,z)\pi_z\diff \mu_{a}(z)\right)\diff\delta_{a}(z)\diff\mu(\theta)\\
    =\int_{\Theta} \max_{z_{\theta}\in\Z}\left(\int_{S^{a}_{\eta}(z_{\theta})} w(\theta,z)\pi_z\diff \mu(z)\right)\diff\mu(\theta).
\end{gather*}

Note that, $S^{a}_{\eta}(z_{\theta})=\{z\in a^{-1}(a)\colon \ell(z_{\theta},z)\leq\eta\}=\{z\in \Z\colon \ell(z_{\theta},z)\leq\eta\}=S_{\eta}(z_{\theta})$. Particularly, it simplifies for the discrete case to:
\begin{gather}\label{eq:no_aux}
    \eta\text{-RAD}\leq \sum_{\theta\in\Theta} \max_{z'\in\Z} \sum_{\ell(z',z)\leq\eta}w(\theta,z)\pi_z.
\end{gather}

Moreover, if $\eta=0$ (perfect reconstruction), such as any AIA setting and the original ReRo setting~\cite{Hayes2023Bounding}), \Cref{th:optimal_bound} formula simplifies to:
\begin{equation}
    0\text{-}\mathrm{RAD}\leq \sum_{\theta\in\Theta} \sum_{x\in aux}\max_{\substack{a(z)=x\\w(z,\theta)>0}} w(z,\theta)\,\pi_z.
\end{equation}
Importantly, $0$-RAD is consistently zero for continuous random variables by definition.

Finally, given $|\Z|=m$ and $aux=\{\varnothing\}$, previous equation admits the simplification
\begin{gather}\label{eq:no_auz_uni}
    0\text{-RAD}\leq \sum_{i=1}^{m} \left(\Pr_{\M}(\Theta_{i}\mid z_i)-\Pr_{\M}(\Theta_i)\right)\pi_{i},
\end{gather}
where 
$\Theta_1=\{\theta\in\Theta\colon z_1\in \argmax_{j}w(\theta,z_j)\,\pi_j\}$ and  for every $i\geq1$, $\Theta_{i+1}$ is recursively defined as  \[
\Theta_{i+1}=\{\theta\in\Theta\colon z_{i+1}\in\argmax_{j}w(\theta,z_j)\,\pi_j\}\backslash\bigcup_{k=1}^{i}\Theta_{k}.\]

We illustrate the benefits of~\Cref{th:optimal_bound} for relevant DP mechanisms through the following examples and visualizations in~\Cref{fig:optimal_examples,fig:census_attack}. To compute the RAD bounds for each mechanism, we directly apply the formula from~\Cref{th:optimal_bound} to the corresponding mechanism distribution. The full computational details are provided in~\Cref{ap:rad}.

\begin{figure}[t]
    \centering
        \includegraphics[width=0.55\linewidth]{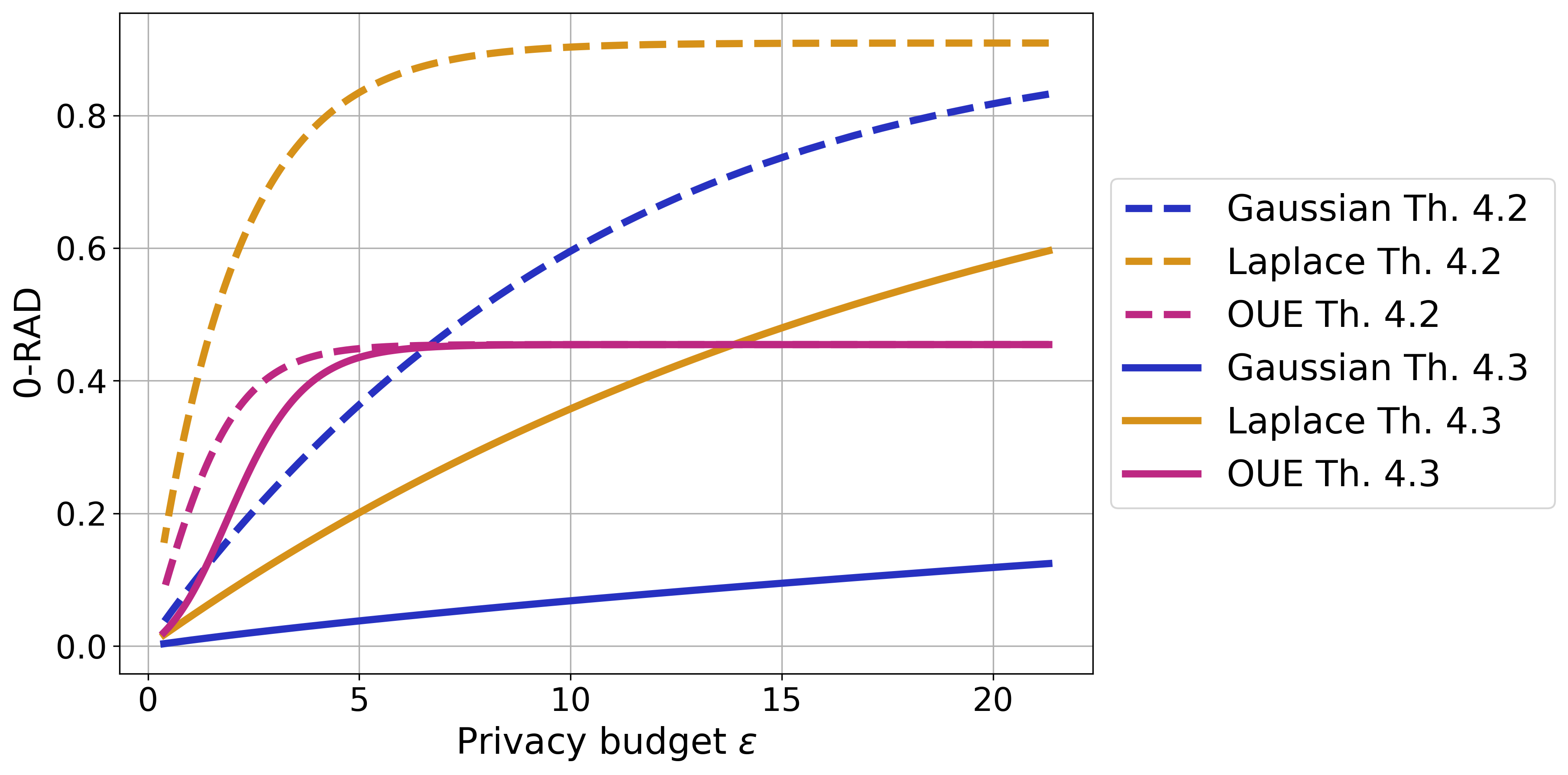}
        \label{fig:compare_discrete}
    \caption[\Cref{th:optimal_bound} Bound on DP Mechanisms]{\Cref{th:optimal_bound} bound for different DP mechanisms with $|\Z| = 11$, $aux=\{\varnothing\}$ and a uniform prior. Importantly, for the same $\varepsilon$, each mechanism offers different levels of attack mitigation, highlighting the need for RAD analysis as a complementary tool to traditional privacy parameters.
    Moreover, in all cases, we observe that the bound in \Cref{th:optimal_bound} improves upon \Cref{th:dp_implies_aux-urero}.}
    \label{fig:optimal_examples}
   
\end{figure}
\begin{example}\label{ex:grr_short}
The generalized randomized response mechanism (GRR)~\cite{Kairouz2016Discrete} is an LDP mechanism that outputs the true record $z_1$ with probability $p=\e^{\varepsilon}/(\e^{\varepsilon}+m-1)$ and any other record $z_0\neq z_1$ with probability $q=(\e^{\varepsilon}+m-1)^{-1}$. Since, $p\geq q$ for all $\varepsilon\geq 0$, 
    \begin{gather}
        w(\theta,z)=\begin{cases}
            (p-q)(1-\pi_{\theta}) &\text{ if } z=\theta\\
            (q-p)\pi_{\theta} &\text{ otherwise,}
        \end{cases}
    \end{gather}
 and $w(z,\theta)>0$ iff $z=\theta$. Hence, applying~\Cref{th:optimal_bound} for $a(z)=z$:
 \begin{gather*}
    \eta\text{-RAD}\leq \sum_{\theta}(p-q)(1-\pi_{\theta})\pi_{\theta}=\frac{\e^{\varepsilon}-1}{\e^{\varepsilon}+m-1} (1-\kappa_{\pi})=\mathrm{TV}(1-\kappa_{\pi}).
 \end{gather*}

Hence, the advantage of an attacker only depends on the chosen $\varepsilon$, the total universe size $|\Z|=m$ and the initial distribution. Particularly, given a maximum risk threshold RAD $\leq\gamma$, we can choose $\varepsilon$ following:
\[
\varepsilon=\ln\frac{1+\gamma\frac{m-1}{1-\kappa_{\pi}}}{1-\frac{\gamma}{1-\kappa_{\pi}}}.
\]

For instance, to guarantee RAD below $0.1$ on binary queries (with uniform prior) the user must set $\varepsilon=\ln(1.5)\approx0.405 $, while for the same RAD in a query with $m=100$ possibilities must select $\varepsilon=\ln(12.2087)\approx2.503$.

Now if we consider a reconstructions attack without target-specific auxiliary knowledge, i.e., $aux=\{\varnothing\}$, we obtain
  \begin{gather*}
    \eta\text{-RAD}=(p-q)(1-\sum_{\theta}\pi_{\theta}\inf_{\ell(z_{\theta},\theta)\leq\eta}\Pr_{Z\sim\pi}[\ell(Z,z_{\theta})\leq\eta]).
 \end{gather*}

Hence, the advantage of such attacker is always less than one with full-knowledge. However, it is not much worse, since for instance considering $\eta=0$ and a uniform distribution we get exactly the same formula.
\end{example}
\begin{example}\label{ex:oue_short} The optimal unary encoding (OUE) mechanism~\cite{Wang2017Locally} maps each input $z\in\Z$ to an $m$-dimensional one-hot  binary vector and perturbs each bit independently. For each position $i\in[m]$, the obfuscated vector $\theta$ is sampled such that $\Pr[\theta_i=1]=1/2$ if $i=z$, and $\Pr[\theta_i=1]=q=\frac{1}{\e^{\varepsilon}+1}$ otherwise. Denoting $p=1-q$, according to~\Cref{th:optimal_bound}, we obtain that, for $a(z)=z$:
\[
\eta\text{-}\mathrm{RAD}\leq \frac{1}{2}\frac{\e^{\varepsilon}-1}{\e^{\varepsilon}+1}(1-\kappa_{\pi})=\mathrm{TV}(\mathrm{OUE})(1-\kappa_{\pi}).
\]

First, note that for the same attack and prior distribution, OUE provides a different level of protection than GRR. In particular, while increasing 
$\varepsilon$ in GRR always increases the attacker’s advantage---approaching 1 as 
$\varepsilon\to\infty$---in the case of OUE, the attacker’s advantage is upper bounded by $0.5$, regardless of how large 
$\varepsilon$ becomes. This illustrates that 
$\varepsilon$
alone does not capture the full picture: mechanisms with the same 
$\varepsilon$ can yield markedly different levels of attack mitigation.

If we consider $aux=\{\varnothing\}$, then the bound becomes:
\[
0\text{-}\mathrm{RAD}\leq \frac{p-q}{2p}\left(\sum_{i=1}^m p^{m-i}\pi_i(1-\pi_i) -q\sum_{i=1}^m p^{m-i}\pi_i\sum_{z=1}^{i-1}\pi_z\right)
\]
which in particular for $\pi=U[m]$:
    \[
    0\text{-}\mathrm{RAD}\leq\frac{(2p-1)\left(1-p^{\,m-1}\right)}{2m(1-p)}=\frac{\e^{\varepsilon}-1}{2m}\left(1-\left(\frac{\e^{\varepsilon}}{1+\e^{\varepsilon}}\right)^{\left(m-1\right)}\right).
\]

Note that when $\varepsilon\to\infty$ previous bound converges to $\frac{m-1}{2m}$, hence even if we keep reducing the noise (increasing $\varepsilon$), the attacker's advantage is limited. We plot this bound in~\Cref{fig:optimal_examples}.
\end{example}
\begin{example}\label{ex:ss_short}
In the subset selection mechanism (SS)~\cite{Min2018Optimal} users report a subset $\theta \subseteq \Z=\{z_1,\dots,z_m\}$ containing their true value $z$ with probability $p = \frac{\omega \e^\varepsilon}{\omega \e^\varepsilon + m - \omega}$, where $\omega = |\theta| = \max\left(1, \left\lfloor \frac{m}{\e^\varepsilon + 1} \right\rfloor \right)$. The subset is completed by sampling uniformly from $\Z \setminus \{z\}$. According to~\Cref{th:optimal_bound} we obtain that for $\pi=U[m]$
\[
\text{0-RAD}\leq \frac{pm-\omega}{m\omega}.
\]
Once again, we obtain a direct formula to calibrate the mechanism parameters (in this case, 
$p$) to achieve a desired RAD. Furthermore, the protection offered by SS against reconstruction attacks lies between that of GRR, which provides weaker protection, and OUE, which provides stronger protection, as illustrated in~\Cref{fig:LDP_attacks}.
\end{example}
\begin{example}\label{ex:laplace_short} The Laplace mechanism adds Laplace noise with scale $b=\Delta q/\varepsilon$ to the query value $q(D)\in\R$~\cite{Dwork2014Algorithmic}.
If $\Z=\{z_1,\dots,z_m\}$ is uniformly distributed and $\Delta q=1$ applying \Cref{th:optimal_bound} we obtain
\[
0\text{-RAD}\leq \frac{m-1}{m}\left(1-\e^{-\frac{\varepsilon}{2(m-1)}}\right).
\] 
First, we observe that the Laplace mechanism provides stronger protection against reconstruction attacks than OUE for small values of $\varepsilon$. For example, as shown in~\Cref{fig:optimal_examples}, for all $\varepsilon \in [0,14]$ the Laplace mechanism achieves lower RAD than OUE on a data domain with $|\Z|=11$.

Moreover, we derive a direct calibration method for the Laplace mechanism. As illustrated in~\Cref{fig:laplace_accu}, calibrating 
$\varepsilon$ according to the maximum admissible risk using our approach yields significantly higher accuracy compared to the state-of-the-art method based on ReRo.
\end{example}

\begin{example}\label{ex:gaussian_short}
The Gaussian mechanism adds Gaussian noise $\mathcal{N}(0,\sigma)$ to the query value $q(D)\in\R$~\cite{Balle2022Reconstructing}.
Given $\Phi$ the CDF of the standard normal distribution, if $\Z=\{z_1,\dots,z_m\}$ is uniformly distributed and $\Delta q=1$, applying \Cref{th:optimal_bound} we obtain 
\begin{gather*}
    0\text{-RAD}
    \leq \frac{m-1}{m} \left(2\Phi\!\Big(\frac{1}{2\sigma(m-1)}\Big)\;-1\right).
\end{gather*}

We plot this bound in~\Cref{fig:optimal_examples} alongside the corresponding OUE and Laplace bounds under the same attack model. The comparison shows that the Gaussian mechanism provides substantially stronger protection against reconstruction attacks without auxiliary knowledge---for a universe of size $11$ and a uniform prior---than both OUE and Laplace. 
\end{example}

\begin{figure}
    \centering
\includegraphics[width=0.35\linewidth]{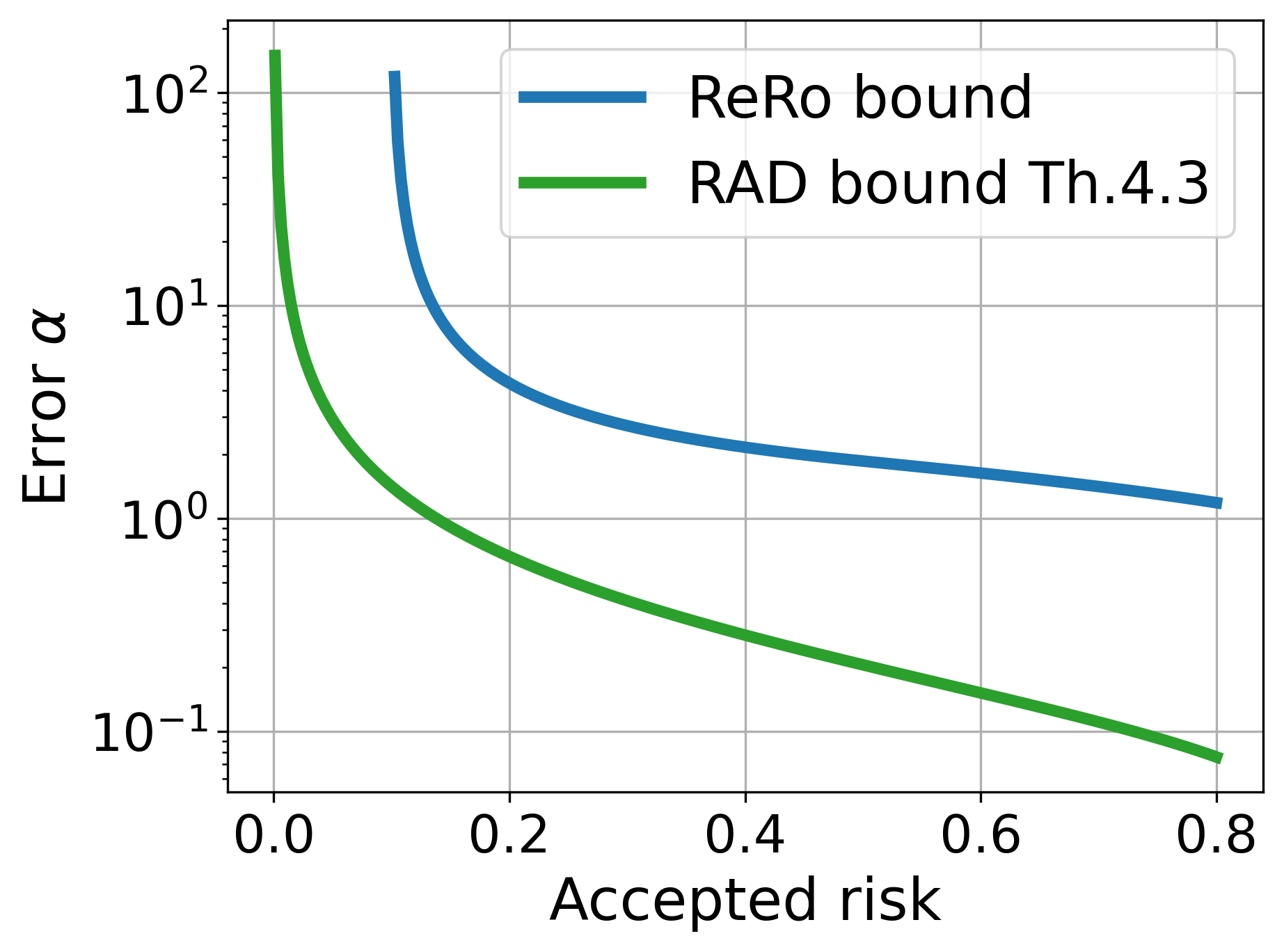}
    \caption[RAD Utility Improvement]{Upper bound on the Laplace mechanism query error (utility) at $95\%$ confidence when the noise is calibrated using ReRo vs.\ RAD. We see that for the same risk estimation, calibrating with using RAD improves utility. }
    \label{fig:laplace_accu}
    
\end{figure}

These examples highlight both the practical applicability of~\Cref{th:optimal_bound} for estimating reconstruction risk in real-world settings and the importance of conducting a dedicated RAD analysis of DP mechanisms.
First, as illustrated in~\Cref{fig:optimal_examples}, mechanisms with the same privacy parameters can provide substantially different levels of protection in terms of risk mitigation. This observation underscores the need for resilience analysis beyond the $\varepsilon$-based criterion.
Moreover, these examples demonstrate that~\Cref{th:optimal_bound} yields simple and explicit characterizations of the RAD for common DP mechanisms, directly relating their noise parameters to the resulting level of risk mitigation.

Moreover, in \Cref{fig:optimal_examples} we see the improvement when we target specific auxiliary knowledge instead of using our worst-case bound (\Cref{th:dp_implies_aux-urero}). 
Hence,~\Cref{th:optimal_bound} offers an improved noise calibration method to ensure protection against real attacks, when the auxiliary knowledge is well defined. 
For instance, when the entire record is considered private ($aux=\{\varnothing\}$); alternatively, when a specific attribute $y$ is deemed sensitive, we consider all the remainder record public (we denote it as
$a(z)=x\backslash y$).

Importantly, we illustrate in~\Cref{fig:laplace_accu} the utility gain of noise calibration using our RAD bounds compared to using  the best existing ReRo bound~\cite{Hayes2023Bounding}, showing the benefit of our bounds for system design. Specifically, we consider $aux=\{\varnothing\}$---allowing comparison with~\cite{Hayes2023Bounding}. We plot the upper bound on the Laplace mechanism's query error that can be guaranteed with $95\%$ confidence, for $|\Z|=10$ and $\Delta=1$, showing a substantial improvement in utility enabled by our RAD-based calibration.

Crucially, \Cref{th:optimal_bound} is universally tight: for any mechanism and auxiliary knowledge, there exists an attack achieving the bound, so it cannot be further improved. We illustrate this by explicitly constructing such an attack in Algorithm~\ref{alg:OptimalAIA1}, proving the existence of an optimal adversary for any auxiliary model.

The attack strategy is conceptually simple yet highly effective.
Let's start with the case of $\eta=0$, i.e., perfect reconstruction.
In the fully informed setting---where the adversary knows the entire target record (as in an informed MIA)---the optimal strategy is to declare the target a member whenever the mechanism’s output provides any positive evidence of participation, that is, whenever
\[
w(\theta,z) = p_{\mathcal M}(\theta \mid z) - p_{\mathcal M}(\theta) > 0.
\]
Intuitively, if the observed output is more likely under the target record than under the prior distribution, the adversary should infer membership.
If there is more than one candidate $z$ sharing the same auxiliary knowledge, $a(z)=x$, (e.g., several users may shared a common attribute) the attacker can not optimize for all at the same time, therefore select $\tilde{z}$ such that it maximizes the posterior weight $w(\theta,z)\,\pi_z$, as long as it provides positive evidence.
In the extreme case, when $aux=\{\varnothing\}$ (no auxiliary information), the attacker cannot narrow the candidate set and so the optimal reconstruction selects
$z^{*}\in \argmax_{z\in\Z}w(z,\theta)\,\pi_z$,
i.e., any record that maximizes the posterior probability given $\theta$. 
When the attacker does not require exact reconstruction but is satisfied with producing a candidate within a controlled error $\eta$ of the true record, the optimal strategy retains a similar structure. However, rather than comparing records based on the posterior probability of a single output, the analysis evaluates the posterior mass of their associated success sets. The attacker then selects the record $\tilde{z}$ whose success set $S_{\eta}(\tilde{z})$ attains the largest posterior probability given the observed output.

\begin{algorithm}[t]
\small
\caption{Optimal Attack}
\label{alg:OptimalAIA1}
\SetKwInOut{Input}{Input}\SetKwInOut{Output}{Output}
\Input{$\theta$, $\eta$ and $a(z)=x$}
\Output{$\tilde{z}$}
Compute $a^{-1}(x)=\{z\colon a(z)=x\}$\\
\For{$z'\in \Z$}{
      $\displaystyle \mathcal{W}^z_{\eta}(z')=\sum_{z\in a^{-1}(x)\colon \ell(z,z')\leq\eta}w(\theta,z)\pi_z$\;
}
Select $\tilde{z}\in  \argmax_{z'} \mathcal{W}^z_{\eta}(z')$\, (at random)
\end{algorithm}

This result is particularly relevant, as it implies that, for a given risk tolerance, the utility of a mechanism cannot exceed what our method achieves; in other words, our approach yields optimal noise calibration.

\begin{corollary}
[Attack Optimality]\label{co:optimal_attack}
    Given the conditions as in \Cref{th:optimal_bound},  Algorithm~\ref{alg:OptimalAIA1} achieves the highest attainable $\eta$-RAD. 
\end{corollary}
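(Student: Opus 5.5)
The plan is to show that the attack $A^\star$ of Algorithm~\ref{alg:OptimalAIA1} turns the single inequality in the proof of \Cref{th:optimal_bound} into an equality. Combined with that theorem, this means $A^\star$ attains the upper bound, so no attack can achieve larger $\eta$-RAD.

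The starting point is the exact identity established in the proof of \Cref{th:optimal_bound}, just before the inequality was applied. It holds for every attack $A$:
\[
\eta\text{-RAD}(A)=\int_{\Theta}\int_{aux}\int_{\Z} p_{A}(\tilde z\mid x,\theta)\,G_{\theta,x}(\tilde z)\,\diff\mu(\tilde z)\,\diff\nu(x)\,\diff\mu(\theta),
\]
where
\[
G_{\theta,x}(\tilde z)=\int_{a^{-1}(x)}\boldsymbol{1}_{\{\ell(z,\tilde z)\leq\eta\}}\,w(z,\theta)\,\pi_z\,\diff\mu_x(z).
\]
In the discrete case, by Remark~\ref{re:discrete}, $G_{\theta,x}(\tilde z)$ is exactly the quantity $\mathcal{W}^z_\eta(\tilde z)$ computed in Algorithm~\ref{alg:OptimalAIA1}.

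The key steps are as follows.

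\begin{enumerate}
\item For fixed $(\theta,x)$, the inner integral is an average of $G_{\theta,x}$ under the probability distribution $p_A(\cdot\mid x,\theta)$. It is therefore at most $\sup_{\tilde z}G_{\theta,x}(\tilde z)$, which is the inequality used in \Cref{th:optimal_bound}.
\item Equality holds precisely when $p_A(\cdot\mid x,\theta)$ puts all its mass on $\argmax_{\tilde z}G_{\theta,x}(\tilde z)$. This is what Algorithm~\ref{alg:OptimalAIA1} does: it samples $\tilde z$ from the argmax set, possibly at random, and randomization within that set does not change the value.
\item Substituting $A^\star$ into the identity therefore gives
\[
\eta\text{-RAD}(A^\star)=\int_\Theta\int_{aux}\max_{z_\theta}G_{\theta,x}(z_\theta)\,\diff\nu(x)\,\diff\mu(\theta),
\]
which is exactly the bound of \Cref{th:optimal_bound}.
\item For any other attack $A$, \Cref{th:optimal_bound} gives $\eta\text{-RAD}(A)\leq\eta\text{-RAD}(A^\star)$, which proves the corollary.
\end{enumerate}

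Two technical points need to be checked. First, $A^\star$ may only use $(\theta,x)$, and it does: the algorithm takes $a(z)=x$ and $\theta$ as input. Second, the identity used above relies on the fact that $p_A(S_\eta(z)\mid\theta,a(z))$ depends on $z$ only through $x$. This holds for $A^\star$ by construction.

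I expect the main obstacle to be attainment of the maximum in the continuous or general case. The supremum of $G_{\theta,x}$ over $\Z$ may not be attained, and the selection $(\theta,x)\mapsto\tilde z$ must be measurable for $A^\star$ to be a valid Markov kernel. In the discrete setting, where Algorithm~\ref{alg:OptimalAIA1} is stated, $\Z$ is finite, so the argmax exists and any selection is measurable; this case I would treat fully.

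For the general case, I would try the following. First, assume the maximum in \Cref{th:optimal_bound} is attained, as its use of $\max$ already presupposes. Then invoke a measurable selection theorem, such as Kuratowski--Ryll-Nardzewski, on the standard Borel space $\Z$. If the maximum is not attained, the fallback is to take $\epsilon$-maximizers, which give attacks whose RAD comes arbitrarily close to the bound.
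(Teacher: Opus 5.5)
Your proposal is correct and is essentially the paper's argument: substitute an attack supported on the argmax set into the exact disintegrated form of $\eta$-RAD from the proof of \Cref{th:optimal_bound}, so that the single inequality there becomes an equality. (By Remark~\ref{re:discrete}, your $G_{\theta,x}$ equals $\mathcal{W}^z_{\eta}$ only up to the factor $1/\#a^{-1}(x)$, which does not change the argmax.) The paper fixes the uniform kernel $\boldsymbol{1}_{\{\tilde{z}\in S^{x}_{\theta}\}}/\mu(S^{x}_{\theta})$ on that set, which only makes sense when $0<\mu(S^{x}_{\theta})<\infty$ (automatic in the finite discrete case); your version with an arbitrary distribution concentrated on the argmax, together with your remarks on attainment and measurable selection, is if anything slightly more careful.
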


\begin{proof}
Following Algorithm~\ref{alg:OptimalAIA1}, given $\theta, x$, the attack always select (at random) an output from the set:
\begin{equation}\label{eq:defi}
    S_{\theta}^{x}
=
\argmax_{\tilde{z}\in\mathcal Z}
\int_{a^{-1}(x)}
\boldsymbol{1}_{\{\ell(\tilde{z},z_1)\le \eta\}}\,
w(z_1,\theta)\,\pi_{z_1}\,
\diff\mu_x(z_1).
\end{equation}

Hence, the attack $A$ verifies
\[
\Pr_A\!\big(A(\theta,x)\in S_{\theta}^{x}\big)=1,
\qquad
p_A(\tilde{z}\mid\theta,x)
=
\frac{\boldsymbol{1}_{\{\tilde{z}\in S_{\theta}^{x}\}}}{\mu(S_{\theta}^{x})}, 
\]
and for all $\tilde{z}\in S_{\theta}^{x}$,
\[
\int_{a^{-1}(x)}
\boldsymbol{1}_{\{\ell(\tilde{z},z_1)\le \eta\}}\,
w(z_1,\theta)\,\pi_{z_1}\,
\diff\mu_x(z_1)=\max_{s\in\Z} \int_{S^x_{\eta}(s)}\,
w(z_1,\theta)\,\pi_{z_1}\,
\diff\mu_x(z_1)\equiv I_{x,\theta}.
\]

Computing RAD according to the reformulation in~\Cref{eq:truq}, we obtain
{\small
\begin{align}
\MoveEqLeft[2]\eta\text{-}\mathrm{RAD}(A)
=
\int_{\Z}\int_{\Theta}
\Pr_{A}\big(S_{\eta}(z_1)\mid \theta,a(z_1)\big)\,
w(z_1,\theta)\,\pi_{z_1}\,
\diff\mu(\theta)\diff\mu(z_1)
\notag\\
=&
\int_{\Theta}\int_{aux}\int_{a^{-1}(x)}
\Pr_{A}\big(S_{\eta}(z_1)\mid \theta,x\big)\,
w(z_1,\theta)\,\pi_{z_1}\,
\diff\mu_x(z_1)\diff\nu(x)\diff\mu(\theta)
\notag\\
=&
\int_{\Theta}\int_{aux}\int_{a^{-1}(x)}\left(
\int_{S_{\eta}(z_1)}
\frac{\boldsymbol{1}_{\{\tilde{z}\in S_{\theta}^{x}\}}}{\mu(S_{\theta}^{x})}\,
w(z_1,\theta)\,\pi_{z_1}\,
\diff\mu(\tilde{z})\right)\diff\mu_x(z_1)\diff\nu(x)\diff\mu(\theta)
\notag\\
=&
\int_{\Theta}\int_{aux}\int_{a^{-1}(x)}
\left(\int_{\Z}
\boldsymbol{1}_{\{\ell(\tilde{z},z_1)\leq\eta\}}\frac{\boldsymbol{1}_{\{\tilde{z}\in S_{\theta}^{x}\}}}{\mu(S_{\theta}^{x})}\,
w(z_1,\theta)\,\pi_{z_1}\,
\diff\mu(\tilde{z})\right)\diff\mu_x(z_1)\diff\nu(x)\diff\mu(\theta)
\notag\\
=&
\int_{\Theta}\int_{aux}
\int_{\Z}
\frac{\boldsymbol{1}_{\{\tilde{z}\in S_{\theta}^{x}\}}}{\mu(S_{\theta}^{x})}\,\left(\int_{a^{-1}(x)} \boldsymbol{1}_{\{\ell(\tilde{z},z_1)\leq\eta\}}
w(z_1,\theta)\,\pi_{z_1}\,\diff\mu_x(z_1)\right)
\diff\mu(\tilde{z})\diff\nu(x)\diff\mu(\theta)
\notag\\
=&
\int_{\Theta}\int_{aux}
\int_{\Z}
\frac{\boldsymbol{1}_{\{\tilde{z}\in S_{\theta}^{x}\}}}{\mu(S_{\theta}^{x})}\, I_{x,\theta}
\diff\mu(\tilde{z})\diff\nu(x)\diff\mu(\theta)
\notag\\
=&
\int_{\Theta}\int_{aux} I_{x,\theta} \diff\nu(x)\diff\mu(\theta)
\left(\int_{\Z}
\frac{\boldsymbol{1}_{\{\tilde{z}\in S_{\theta}^{x}\}}}{\mu(S_{\theta}^{x})}\, 
\diff\mu(\tilde{z})\right)
\notag\\
=&
\int_{\Theta}\int_{aux} I_{x,\theta} \diff\nu(x)\diff\mu(\theta)
\notag\\
=&
\int_{\Theta}\int_{aux}
\max_{s\in\mathcal Z}
\int_{S^x_{\eta}(s)}
w(z_1,\theta)\,\pi_{z_1}
\diff\mu_x(z_1)
\diff\nu(x)\diff\mu(\theta), \notag
\end{align}
}
which according to~\Cref{th:optimal_bound}, coincides with the maximum attainable bound. 
\end{proof}

\Cref{co:optimal_attack} directly establishes that \Cref{th:optimal_bound} is universally tight, i.e., for every mechanism, prior auxiliary knowledge and error threshold, \Cref{th:optimal_bound} exactly determines the maximum achievable RAD.  Moreover,~\Cref{th:dp_implies_aux-urero} is tight, since there exists at least one mechanism (GRR,~\Cref{ex:grr_short}) for which \Cref{th:dp_implies_aux-urero} is achieved. We further validate that this is not an isolated case by empirically demonstrating tightness on additional mechanisms, such as DP-SGD (see \Cref{fig:dp-sgd_mia_mnist}).

Beyond the theoretical contribution, our results provide a practical tool: a general attack algorithm that practitioners can directly use to evaluate the privacy risks of their systems or the tightness of their bounds. As a concrete demonstration, we apply this attack in the context of LDP auditing (see~\Cref{sec:auditing}) and to assess empirical risk and tightness of our bounds in (see~\Cref{sec:experiments}). We also provide the application of our optimal attack in the specific case of DP-SGD:
\begin{example}[Optimal Attack on DP-SGD]

Our analysis of DP-SGD is motivated by its central role in private learning: distributionally robust attacks were first introduced in this context~\cite{Yeom2017Privacy}, and DP-SGD remains the most widely used algorithm in practice~\cite{Abadi2016Deep}.  In particular, we study the reconstruction setting considered by \citeauthor{Hayes2023Bounding}~\cite{Hayes2023Bounding}, where the adversary attempts to reconstruct the target record $z^*$ from a candidate set $\{z_1,\dots,z_m\}$ with uniform prior using access to the privatized gradients $\{\bar{g}_1,\dots,\bar{g}_T\}$ released during training, i.e., white-box setting. Note that in each DP-SGD iteration $\bar{g}_t$ is obtained as
\[
\bar{g}_t=\sum_x \mathrm{clip}_C(\nabla_{\theta}\ell(\theta_t,z))+\mathcal{N}(0,c^2\sigma^2 I),
\]
 where $\sigma$ is the noise scale, $\mathrm{clip}_C(\Vec{v})=\Vec{v}\min(1,\frac{C}{\|\Vec{v}\|_2})$ and $\theta_t$ the released weights in the previous iteration. 

Given $\theta=(\theta_1,\dots,\theta_T)$, our optimal attack is determined by
$
\argmax_{z: a(z)=x}\, w(\theta,z),
$ and its sign, i.e., whether $w(\theta,z)>0$ or not, for each candidate $z$ and auxiliary knowledge $z$. Concretely, since the public dataset $D_-$ is known, we can isolate the noisy contribution of the target’s gradient at iteration $t$:
\[
g_t = \bar{g}_t - \sum_{z \in D_-} \mathrm{clip}_C(\nabla_\theta \ell(\theta_t, z)),
\]
and simplify $w$ maximization to
\begin{gather}
    \argmax_{z: a(z)=x}\, w(\theta,z)=\argmax_{z: a(z)=x}\sum_{t} W(g_t, \mathrm{clip}_C(\nabla_\theta \ell(\theta_t, z)))
\end{gather}
where $W(u,v)=\langle u,v\rangle-\frac{1}{m}\sum_{z}\langle u,\mathrm{clip}_C(\nabla_\theta \ell(\theta_t, z))\rangle$, since $W$  preserves the sign and $\argmax$ of $w$. We present the pseudo-code of the optimal attack in Algorithm~\ref{alg:OptimalAIA2}.

Indeed, given $\theta,z$, under DP-SGD the privatized gradient at step $t$ is
\[
g_t \sim \mathcal N\!\left(\mu_z,\, C^2\sigma^2 I\right),
\qquad
\mu_z = \mathrm{clip}_C(\nabla_\theta \ell(\theta_t, z)),
\]
where $C$ is the clipping parameter and $I$ the identity function of dimension $d$, corresponding to the dimension of the gradients. Hence the likelihood is
\[
p_{\mathcal M}(g_t\mid z) \;=\; \underbrace{\frac{1}{(2\pi C^2\sigma^2)^{d/2}}}_{A}
\exp\!\Big(\underbrace{-\frac{1}{2C^2\sigma^2}}_{B}\|g_t-\mu_z\|^2\Big),
\]
where both $A,B$ are independent from $z$. Consequently,
\begin{align}
    w(g,z)=\prod_{t}p_{\mathcal M}(g_t\mid z)-\prod_{t} p_{\mathcal M}(g_t)&>0\Leftrightarrow \\
    A^{T}\left(\prod_{t}\e^{B\langle g_t,\mu_z\rangle}-\prod_{t}\frac{1}{m}\sum_{i}\e^{B\langle g_t,\mu_{z_i}\rangle}\right)&>0\Leftrightarrow\\
    \e^{B\sum_{t}\langle g_t,\mu_z\rangle}&>\prod_{t}\frac{1}{m}\sum_{i}\e^{B\langle g_t,\mu_{z_i}\rangle}\Leftrightarrow\\
    B\sum_{t}\langle g_t,\mu_z\rangle&>\sum_{t}\ln\left(\frac{1}{m}\sum_{i}\e^{B\langle g_t,\mu_{z_i}\rangle}\right)\Leftrightarrow
    \\
    B\sum_{t}\langle g_t,\mu_z\rangle&>  \sum_{t}\frac{1}{m}\sum_{z}\ln(\e^{B\langle g_t,\mu_{z_i}\rangle})\Leftrightarrow\label{eq:ln_chachi}\\
    B\sum_{t}\langle g_t,\mu_z\rangle&> \sum_{t} \frac{B}{m}\sum_{i}\langle g_t,\mu_{z_i}\rangle\Leftrightarrow\\
    \sum_{t}\langle g_t,\mu_z\rangle-\sum_{t}\frac{1}{m}\sum_{z_i}\langle g_t,\mu_{z_i}\rangle&>0\Leftrightarrow\\
    \sum_{t} W(g_t,z)&>0.
\end{align}
Where~\Cref{eq:ln_chachi} follows from the application of Jensen's inequality to the logarithm.
Moreover, $\argmax_{z} w(g,z)=\argmax_{z}\ln(p_{\M}(g,z))=\argmax_{z} \sum_{t}\ln p_{\M}(g_t,z)$, where 
\[
\ln p_{\mathcal M}(g_t \mid z_i) \propto 
- \tfrac{1}{2C^2\sigma^2}\,\|g_t - \mathrm{clip}_C(\nabla_\theta \ell(\theta_t, z_i))\|^2.
\]

Expanding the squared norm leads to
\[
\|g_t\|^2 + \|\mathrm{clip}_C(\nabla_\theta \ell(\theta_t, z_i))\|^2 
- 2 \langle g_t, \mathrm{clip}_C(\nabla_\theta \ell(\theta_t, z_i)) \rangle.
\]

The term $\|g_t\|^2$ is independent of $z_i$, and the term
$\|\mathrm{clip}_C(\nabla_\theta \ell(\theta_t, z_i))\|^2$ is bounded by $C^2$ (often nearly constant across candidates). Therefore, maximizing the log-likelihood is equivalent to maximizing
\[
\langle g_t, \mathrm{clip}_C(\nabla_\theta \ell(\theta_t, z_i)) \rangle.
\]

Consequently, our optimal attack can be simplified by using $W(g_t,z)$ instead of $w(g_t,z)$.

When $aux=\{\varnothing\}$, our optimal attack coincides with the attack presented in~\cite{Hayes2023Bounding}. Whereas they identified such an attack as the empirically best, we formally establish that this choice is indeed optimal. Moreover, we extend the optimal attack for any attacker that has target-specific auxiliary information. In particular, our optimal attack for attackers with $aux\neq\{\varnothing\}$ is empirically tested in~\Cref{sec:experiments}, showing that previous bounds for ReRo indeed do not hold for attackers with target-specific auxiliary knowledge.

\begin{algorithm}
\small
\caption{Optimal Attack for DP-SGD }
\label{alg:OptimalAIA2}
\SetKwInOut{Input}{Input}\SetKwInOut{Output}{Output}
\Input{ $\theta=(\theta_1,\dots,\theta_T)$, $a(z)=x$ and $g=(g_1,\dots,g_T)$ }
\Output{ $\Tilde{z}$}
\For{$z\colon a(z)=x$}{
  compute $\sum_{t} W(\bar{g}_t, \mathrm{clip}_C(\nabla_\theta \ell(\theta_t, z)))$\;
}
 Select $z^{*}=\argmax_{z\colon a(z)=x} \sum_{t} W(\bar{g}_t, \mathrm{clip}_C(\nabla_\theta \ell(\theta_t, z)))\pi_z$\;
\If{$W(\bar{g}_t,z^{*}) > 0$}{
  $\Tilde{z}=z^*$\;
}
\Else{
  $\Tilde{z}\longleftarrow U[\Z\backslash\{z\colon a(z)=x\}]$\;
}
\end{algorithm}
\end{example}

The bounds presented in this section offer concrete guidance for algorithm design. They can be directly leveraged for noise calibration, achieving rigorous privacy guarantees while maximizing utility. 
In particular, they induce a simple protocol for practitioners. First, one must specify which information is deemed private (e.g., the full record, a subset of attributes, or membership), which determines the choice of the auxiliary information $aux$ and $a\colon \Z \to aux$. 
Second, if prior knowledge about the distribution of $\Z$ is available, it should be encoded in a distribution 
$\pi$. If this is not the case, however, one must resort to the worst-case prior; otherwise, the attacker’s risk may be underestimated. This worst-case prior typically corresponds to $\pi_z = \pi_y = 1/2$  for the two records that are easiest to distinguish (see~\Cref{ex:grr_short,ex:oue_short} and \Cref{fig:census_attack}). Nevertheless, even when the worst-case prior cannot be explicitly identified, the total variation bound given in~\Cref{th:dp_implies_aux-urero} provides a safe upper bound for any choice of prior and $aux$.

Third, the resulting RAD of the mechanism can be computed using \Cref{th:optimal_bound}---an auxiliary-dependent bound proven to be universally tight, or upper-bounded by a worst-case guarantee when the nature of $aux$ is unknown (\Cref{th:dp_implies_aux-urero}). Finally, by inverting the corresponding bound, one can directly derive the noise-injection parameters that meet a prescribed risk level. Since our bounds are tight, this procedure yields mechanisms that are utility-optimal for any given risk acceptance.

Note that while the closed form of \Cref{th:optimal_bound} is easy to derive for discrete data, this may not hold for continuous data, where the bound involves Lebesgue integrals. In such case, the bound can be evaluated numerically using a nested Monte Carlo procedure as we show in~\Cref{ap:montecarlo}. 
Since numerical approximations introduce error, as a safer alternative, one may always use our closed-form upper bound in \Cref{th:dp_implies_aux-urero}. However, this bound can be overly conservative when $aux=\{\varnothing\}$, motivating the tighter closed-form upper-bounds derived in the next section, which avoid numerical procedures even for continuous data.

\section{\texorpdfstring{$\eta$}{eta}-RAD Upper Bounds under \texorpdfstring{$aux=\{\varnothing\}$}{no target-specific auxiliary knowledge}}\label{sec:no_aux}
Our bound in \Cref{th:optimal_bound} is universally tight, but two limitations remain. First, it requires full knowledge of the mechanism, making it suitable for noise calibration; however, in DP auditing, we often have only query access (e.g., auditing external software) without insight into the internal protocol~\cite{gorla2025estimating}. Second, the bound lacks a closed form hence may rely on numerical approximation, particularly for continuous data domains.  
Consequently, in this section we provide black-box bounds for the case $aux = \{\varnothing\}$, both because this is the standard assumption in prior DP auditing~\cite{Arcolezi2024Revealing, Mahloujifar2024Auditing} and data reconstruction studies~\cite{Balle2022Reconstructing,Hayes2023Bounding}, and because it makes practical sense: for other auxiliary-information models, one can always rely on the closed-form bound provided by~\Cref{th:dp_implies_aux-urero}.

%%%% AUX = \emptyset
First, we present a general bound that applies to any reconstruction setting as long as no target-specific auxiliary knowledge is available. For this purpose, we introduce $\kappa_{\pi,\ell}^{-}(\eta)$ as the infimum counterpart of $\kappa_{\pi,\ell}^{+}(\eta)$, formally defined as
\begin{equation}
    \kappa_{\pi,\ell}^{-}(\eta) = \inf_{z_0 \in \Z} \Pr_{Z \sim \pi}\big[\ell(Z,z_0)\leq \eta\big],
\end{equation}
representing the success probability of an oblivious attacker attempting to reconstruct the most difficult target only using $\pi$.

\begin{theorem}\label{th:f-DP}
    If a mechanism $\M\colon\Z^{n}\to\D(\Theta)$ satisfies $f$-DP, then for any attack with $aux=\{\varnothing\}$, $A\colon \Theta\to\D(\Z)$, it satisfies 
    \[
    \eta\text{-}\mathrm{RAD}\leq \max_{\alpha\in[\kappa^{-}_{\pi,\ell}(\eta),\kappa^{+}_{\pi,\ell}(\eta)]} 1-f(\alpha)-\alpha.
    \]
    
    If $\Z$ is discrete, then it also holds
     \[
    \eta\text{-}\mathrm{RAD}\leq (1-\kappa_{\pi})\max\limits_{\alpha\in[0,\frac{\kappa^{+}_{\pi,\ell}(\eta)}{1-\kappa_{\pi}}]} 1-f(\alpha)-\alpha .     
     \]
\end{theorem}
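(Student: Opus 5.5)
With $aux=\{\varnothing\}$ the attack is a single post-processing map $A\colon\Theta\to\D(\Z)$, independent of the target. This lets us reduce RAD to a family of hypothesis tests between neighbouring datasets and then apply the $f$-DP trade-off inequality (\Cref{def:f-DP}). The plan follows the style of the proof of \Cref{eq:hayes_bound}, but now subtracts the "non-member" term.

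\textbf{Step 1 (rewrite RAD).} Fix $D_-$. Write $\tilde z=A(\theta)$ and, for a reconstruction $s\in\Z$, let $q(s)=\Pr_{Z\sim\pi}[\ell(Z,s)\le\eta]$, so that $q(s)\in[\kappa^-_{\pi,\ell}(\eta),\kappa^+_{\pi,\ell}(\eta)]$. Since $A$ does not see the target, conditioning on $Z_0$ gives
\[
\Pr_{Z_0,Z_1}\big[\ell(Z_1,A(\theta))\le\eta\big]=\E_{Z_0}\E_{\theta\sim\M(D_{Z_0})}\big[q(A(\theta))\big],
\]
and this lies between $\kappa^-$ and $\kappa^+$. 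Using $Z_1$ as the target in the first term, RAD equals
\[
\E_{Z_0,Z_1}\Big[\Pr\big[A(\M(D_{Z_1}))\in S_\eta(Z_1)\big]-\Pr\big[A(\M(D_{Z_0}))\in S_\eta(Z_1)\big]\Big].
\]
For each pair $(z_0,z_1)$, the event $\{A(\theta)\in S_\eta(z_1)\}$ is a test between the neighbours $D_{z_0}$ and $D_{z_1}$. Put $\alpha(z_0,z_1)=\Pr[A(\M(D_{z_0}))\in S_\eta(z_1)]$. Then \Cref{def:f-DP} gives $\Pr[A(\M(D_{z_1}))\in S_\eta(z_1)]\le 1-f(\alpha(z_0,z_1))$.

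\textbf{Step 2 (Jensen).} Since $f$ is convex, $1-f$ is concave, so averaging yields
\[
\mathrm{RAD}\le \E\big[1-f(\alpha)-\alpha\big]\le 1-f(\bar\alpha)-\bar\alpha,
\]
where $\bar\alpha=\E_{Z_0,Z_1}[\alpha(Z_0,Z_1)]$ is exactly the second term of RAD. By Step 1, $\bar\alpha\in[\kappa^-,\kappa^+]$, and taking the maximum over that interval gives the first claim. I expect the main obstacle to be handling the diagonal $Z_0=Z_1$, where the two datasets coincide. The pairwise bound still holds there, because $f(\alpha)\le1-\alpha$ makes the contribution nonpositive, so the argument survives. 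Measurability of $\alpha(z_0,z_1)$ should also be checked, but it follows from Fubini.

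\textbf{Step 3 (discrete refinement).} Repeat the argument, but restrict the averages to the off-diagonal event $\{Z_0\ne Z_1\}$, as in the proof of \Cref{th:dp_implies_aux-urero}. The diagonal terms cancel exactly in RAD. Conditioning on $Z_0\ne Z_1$, which has mass $1-\kappa_\pi$, Jensen under the conditional measure gives
\[
\mathrm{RAD}\le(1-\kappa_\pi)\big(1-f(\alpha')-\alpha'\big),
\]
where $\alpha'=\E[\alpha\mid Z_0\ne Z_1]$. We have $\alpha'\le\bar\alpha/(1-\kappa_\pi)\le\kappa^+/(1-\kappa_\pi)$ because $\alpha\ge0$. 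The lower endpoint can only be taken as $0$, since the diagonal mass removed from $\bar\alpha$ can destroy the bound $\kappa^-$. This gives the second claim, with the maximum over $[0,\kappa^+/(1-\kappa_\pi)]$, intersected with $[0,1]$ if needed.
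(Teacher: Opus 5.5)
Your proposal is correct and follows essentially the same route as the paper. Both arguments apply the $f$-DP trade-off inequality pairwise to the test $\{A(\theta)\in S_\eta(z_1)\}$ between $D_{z_0}$ and $D_{z_1}$, use convexity of $f$ via Jensen, and place $\bar\alpha$ in $[\kappa^{-}_{\pi,\ell}(\eta),\kappa^{+}_{\pi,\ell}(\eta)]$ by integrating out $Z_1$ first (possible because $A$ ignores the target); in the discrete case both restrict to $\{Z_0\neq Z_1\}$ and renormalize by $1-\kappa_\pi$. Your explicit treatment of the diagonal terms and of why the lower endpoint drops to $0$ spells out details the paper leaves implicit.
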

\begin{proof}
\citeauthor{Kifer2022Bayesian}~\cite[p.23]{Kifer2022Bayesian} showed that 
for any $S\subseteq\Theta$, for any $f$-DP mechanism, and $z_0,z_1\in\Z$, 
\begin{gather}\label{eq:kifer_f}
    \Pr_{\M}(S\mid D_{z_1})\leq 1-f(\Pr_{\M}(S\mid D_{z_0}).
\end{gather}

Moreover, since $f$ is convex (see~\Cref{sec:background}), applying Jensen's inequality:
\begin{gather}\label{eq:f_jensen}
    f(\mathbb{E}_{Z}[Z])\leq \mathbb{E}_{Z}[f(Z)]\Leftrightarrow -\mathbb{E}_{Z}[f(Z)]\leq -f(\mathbb{E}_{Z}[Z]) .
\end{gather}

Combining both~\Cref{eq:kifer_f} and~\Cref{eq:f_jensen} we obtain
\begin{align}
   \MoveEqLeft[3]\eta\text{-RAD} 
   = \Pr_{Z_1 \sim \pi}
        [ \A_{Z_1}(D_{Z_1})\in S_{\eta}(Z_1)]
    - \Pr_{Z_0,Z_1 \sim \pi}
        [ \A_{Z_1}(D_{Z_0})\in S_{\eta}(Z_1)] \notag\\
   =& \E_{Z_0\sim\pi}\Big[\Pr_{Z_1 \sim \pi}
        [ \A_{Z_1}(D_{Z_1})\in S_{\eta}(Z_1)]
       - \Pr_{Z_1 \sim \pi}
        [ \A_{Z_1}(D_{Z_0})\in S_{\eta}(Z_1)]\Big] \notag\\
   =& \E_{Z_0,Z_1\sim\pi}\Big[
        \Pr[\A_{Z_1}(D_{Z_1})\in S_{\eta}(Z_1)]
       - \Pr[\A_{Z_1}(D_{Z_0})\in S_{\eta}(Z_1)]\Big] \notag\\
   \leq& \E_{Z_0,Z_1\sim\pi}\Big[
        1 - f\!\big(\Pr[\A_{Z_1}(D_{Z_0})\in S_{\eta}(Z_1)]\big)
        - \Pr[\A_{Z_1}(D_{Z_0})\in S_{\eta}(Z_1)]\Big] \notag\\
    =& 1-\E_{Z_1,Z_0}[f\left(\Pr[\A_{Z_1}(D_{Z_0})\in S_{\eta}(Z_1)]\right)]-\E_{Z_1,Z_0}[\Pr[\A_{Z_1}(D_{Z_0})\in S_{\eta}(Z_1)]]\notag\\
    \leq& 1-f\left(\E_{Z_1,Z_0}[\Pr[\A_{Z_1}(D_{Z_0})\in S_{\eta}(Z_1)]]\right)-\E_{Z_1,Z_0}[\Pr[\A_{Z_1}(D_{Z_0})\in S_{\eta}(Z_1)]],\notag
\end{align}
where last inequality follows from~\Cref{eq:f_jensen}.
Therefore, it suffices to determine the interval containing $\E_{Z_1,Z_0}[\Pr[\A_{Z_1}(D_{Z_0})\in S_{\eta}(Z_1)]$. 
 \begin{gather*}
       \mathbb{E}_{Z_1,Z_0\sim\pi}\left[\Pr_{Z \sim \pi}[\A(D_{Z_0})\in S_{\eta}(Z)] \right]\\
       =\int_{\Z}\int_{\Z} \Pr[\A(D_{z_0})\in S_{\eta}(z_1)]\pi_{z_0}\pi_{z_1}\diff z_0\diff z_1\\
       = 
       \int_{\Z}\int_{\Z} \int_{\Z}p_{\A}[z\mid D_{z_0}]\boldsymbol{1}_{\{\ell(z,z_1)\leq \eta\}} \pi_{z_0}\pi_{z_1} \diff z_0 \diff z_1 \diff z \\
       =\int_{\Z} \int_{\Z}p_{\A}[z\mid D_{z_0}]\left(\int_{\Z}\boldsymbol{1}_{\{\ell(z,z_1)\leq \eta\}}\pi_{z_1} \diff z_1 \right) \pi_{z_0} \diff z_0 \diff z \\
       \leq \kappa^{+}_{\pi,\ell}(\eta) \int_{\Z} \int_{\Z}p_{\A}[z\mid D_{z_0}]\pi_{z_0} \diff z_0 \diff z = \kappa^{+}_{\pi,\ell}(\eta).
   \end{gather*}
   and analogous for $\kappa^{-}_{\pi,\ell}(\eta)$ since any attack output $z\in \Z$ and hence it follows by definition.
   Note that last inequality assumes no auxiliary knowledge is available, therefore $p_{\A}[z\mid D_{z_0},a(z_1)]=p_{\A}[z\mid D_{z_0}]$, hence it factors out of the integral with respect to $z_1$.

Now, we prove that, for discrete variables, the bound can be further improved. We follow the same notation as in~\Cref{th:dp_implies_aux-urero}, i.e.,
\begin{equation}\label{eq:exp_con}
\E_{Z_0,Z_1\sim\pi}\left[\boldsymbol{1}_{\{Z_0\neq Z_1\}}\right]=1-\Pr_{Z,Z'\sim\pi}[Z=Z']=\begin{cases}
    1 &\text{ if } \pi \text{ continuous,}\\
    1-\kappa_{\pi}  &\text{ if } \pi \text{ discrete.}\\
\end{cases}    
\end{equation}
and $\sum_{z_1}\sum_{z_0\neq z_1}\frac{\pi_0\pi_1}{(1-\kappa_{\pi})}=1$.
Now, combining~\Cref{eq:kifer_f} and~\Cref{eq:f_jensen} we obtain:
\begin{align}
   \MoveEqLeft[3]\eta\text{-RAD} 
   = \Pr_{Z_1 \sim \pi}
        [ \A_{Z_1}(D_{Z_1})\in S_{\eta}(Z_1)]
    - \Pr_{Z_0,Z_1 \sim \pi}
        [ \A_{Z_1}(D_{Z_0})\in S_{\eta}(Z_1)] \notag\\
   =& \E_{Z_0\sim\pi}\Big[\Pr_{Z_1 \sim \pi}
        [ \A_{Z_1}(D_{Z_1})\in S_{\eta}(Z_1)]
       - \Pr_{Z_1 \sim \pi}
        [ \A_{Z_1}(D_{Z_0})\in S_{\eta}(Z_1)]\Big] \notag\\
   =& \E_{Z_0,Z_1\sim\pi}\Big[\boldsymbol{1}_{\{Z_0\neq Z_1\}}\big(
        \Pr[\A_{Z_1}(D_{Z_1})\in S_{\eta}(Z_1)]
       - \Pr[\A_{Z_1}(D_{Z_0})\in S_{\eta}(Z_1)]\big)\Big] \notag\\
   \leq& \E_{Z_0,Z_1\sim\pi}\Big[\boldsymbol{1}_{\{Z_0\neq Z_1\}}\big(
        1 - f\!\big(\Pr[\A_{Z_1}(D_{Z_0})\in S_{\eta}(Z_1)]\big)
        - \Pr[\A_{Z_1}(D_{Z_0})\in S_{\eta}(Z_1)]\big)\Big] \notag\\
    =& 
     \E_{Z_1,Z_0\sim\pi}[\boldsymbol{1}_{\{Z_0\neq Z_1\}}]-  \E_{Z_1,Z_0\sim\pi}\Big[\boldsymbol{1}_{\{Z_0\neq Z_1\}}f\!\big(\Pr[\A_{Z_1}(D_{Z_0})\in S_{\eta}(Z_1)]\big)\Big]\notag\\
     &\quad   -  \E_{Z_1,Z_0\sim\pi}\Big[\boldsymbol{1}_{\{Z_0\neq Z_1\}}\Pr[\A_{Z_1}(D_{Z_0})\in S_{\eta}(Z_1)]\big) \Big]\notag\\
     &=(1-\kappa_{\pi})\Big(
     1- \E_{Z_1,Z_0}[\boldsymbol{1}_{\{Z_0\neq Z_1\}}\frac{f\left(\Pr[\A_{Z_1}(D_{Z_0})\in S_{\eta}(Z_1)]\right)}{(1-\kappa_{\pi})}]\notag\\
     &\quad\quad\quad\quad\quad-\E_{Z_1,Z_0}[\boldsymbol{1}_{\{Z_0\neq Z_1\}}\frac{\Pr[\A_{Z_1}(D_{Z_0})\in S_{\eta}(Z_1)]}{(1-\kappa_{\pi})}]
    \Big)\notag\\
    &\leq (1-\kappa_{\pi})\Big(1-
     f(\sum_{z_1}\sum_{z_0\neq z_1}\Pr[\A_{z_1}(D_{z_0})\in S_{\eta}(z_1)] \tfrac{\pi_0\pi_1}{(1-\kappa_{\pi})})\notag\\
     &\quad\quad\quad\quad\quad-\sum_{z_1}\sum_{z_0\neq z_1}\Pr[\A_{z_1}(D_{z_0})\in S_{\eta}(z_1)] \tfrac{\pi_0\pi_1}{(1-\kappa_{\pi})}
   \Big).\notag
\end{align}
Therefore, the proof follows from the following upper-bound:
 \begin{gather*}
     \sum_{z_1}\sum_{z_0\neq z_1}\Pr[\A_{z_1}(D_{z_0})\in S_{\eta}(z_1)] \tfrac{\pi_0\pi_1}{(1-\kappa_{\pi})}\\
\leq \frac{1}{(1-\kappa_{\pi})}\E_{Z_0,Z_1}[\Pr[\A_{Z_1}(D_{Z_0})\in S_{\eta}(Z_1)]=\frac{\kappa^{+}}{(1-\kappa_{\pi})}. 
\end{gather*}
Concluding both bounds.
\end{proof}

If~\Cref{th:optimal_bound} cannot be computed in closed form, this result provides an upper bound for RAD when $aux=\{\varnothing\}$. It avoids numerical approximation errors and yields a tighter estimate than the conservative upper bound given in~\Cref{th:dp_implies_aux-urero}.

In the following example we see its practical application to Gaussian DP:
\begin{example}\label{ex:maz_alpha}
We consider uniform prior and $\eta=0$, hence $\kappa^{+}=\frac{1}{m}$. Applying~\Cref{th:f-DP} we obtain
\[
0\text{-}\mathrm{RAD}\leq \max_{\alpha\in\left[0,\frac{1}{m-1}\right]}
1-f(\alpha)-\alpha
\;=\;\max_{\alpha\in\left[0,\frac{1}{m-1}\right]}
1-\Phi\!\left(\Phi^{-1}(1-\alpha)-\mu\right)-\alpha \equiv \max_{\alpha\in\left[0,\frac{1}{m-1}\right]}
g(\alpha),
\;
\]

where $\Phi$ and $\varphi$ denote respectively the CDF and PDF of the standard normal distribution.

Using the chain rule and the identity
\[
\frac{d}{d\alpha}\Phi^{-1}(1-\alpha)
=
-\frac{1}{\varphi(\Phi^{-1}(1-\alpha))},
\]
we obtain
\[
\begin{aligned}
g'(\alpha)
&=
-\varphi\!\left(\Phi^{-1}(1-\alpha)-\mu\right)
\cdot
\frac{d}{d\alpha}\!\left[\Phi^{-1}(1-\alpha)-\mu\right]
-1 \\
&=
\frac{\varphi(\Phi^{-1}(1-\alpha)-\mu)}
{\varphi(\Phi^{-1}(1-\alpha))}-1.
\end{aligned}
\]
Moreover, the derivative can be rewritten in closed form.
Recall that the standard normal density is
\[
\varphi(z)=\frac{1}{\sqrt{2\pi}}\,e^{-z^{2}/2}.
\]
Therefore,
\begin{gather}
g'(\alpha)
=
\frac{\varphi(\Phi^{-1}(1-\alpha)-\mu)}
     {\varphi(\Phi^{-1}(1-\alpha))}-1
\label{eq:fprime_ratio}
\\
=
\frac{
\exp\!\left(-\frac{1}{2}(\Phi^{-1}(1-\alpha)-\mu)^2\right)
}{
\exp\!\left(-\frac{1}{2}(\Phi^{-1}(1-\alpha))^2\right)
}-1
\label{eq:fprime_exp1}
\\
=
\exp\!\left(
\mu\,\Phi^{-1}(1-\alpha)-\frac{\mu^2}{2}
\right)-1.
\label{eq:fprime_exp}
\end{gather}

An interior maximizer satisfies $g'(\alpha)=0$, i.e.,
\[
\mu\,\Phi^{-1}(1-\alpha)-\frac{\mu^2}{2}=0.
\]
Because $\mu>0$, the unique solution is
\[
\Phi^{-1}(1-\alpha)=\frac{\mu}{2}\Leftrightarrow \alpha=1-\Phi(\frac{\mu}{2}).
\]
Moreover, since
\[
g'(\alpha)
=
\exp\!\left(
\mu\,\Phi^{-1}(1-\alpha)-\frac{\mu^2}{2}
\right)-1,
\]
we have $g'(\alpha)>0$ for $\alpha<1-\Phi(\mu/2)$ and
$g'(\alpha)<0$ for $\alpha>1-\Phi(\mu/2)$.
Hence, $g$ increases up to $\alpha^\star$ and decreases thereafter,
and the maximizer is unique.

It follows that the unconstrained maximizer is
\[
\alpha^\star_{\mathrm{free}}
=
1-\Phi\!\left(\frac{\mu}{2}\right).
\]
Imposing the constraint $\alpha\leq\frac{1}{m-1}$ yields
\[
\alpha^\star
=
\min\!\left\{
\frac{1}{m-1},
\;
1-\Phi\!\left(\frac{\mu}{2}\right)
\right\}.
\]
Consequently, 
\[
0\text{-}\mathrm{RAD}\leq \frac{m-1}{m}\left(1-\Phi\left(\Phi^{-1}(1-\alpha^{\star})-\mu\right)-\alpha^{\star}\right).
\]
\end{example}
We plot this bound for DP-SGD in~\Cref{fig:dp-sgd_mnist,fig:dp-sgd_fashion}. While it is not perfectly tight, it provides a reliable approximation, avoiding the numerical computations required by the exact bound of~\Cref{th:optimal_bound}.

Moreover, as a consequence of the previous result, we can obtain a bound of the RAD of any $(\varepsilon,\delta)$-DP mechanism: 
\begin{proposition}
\label{prop:dp_implies_urero}
    If a mechanism $\mathcal{M}\colon\Z^{n}\to\D(\Theta)$ satisfies $(\varepsilon,\delta)$-DP, then  for any attack $A\colon \Theta\to\D(\Z)$, it satisfies
    \[
    \eta\text{-}\mathrm{RAD} \leq \min\{ \kappa_{\pi, \eta}^+ (\e^\varepsilon -1)+\delta,\tfrac{(1-\kappa^{-}_{\pi,\eta})(\e^{\varepsilon}-1)+\delta}{\e^{\varepsilon}},\tfrac{e^{\varepsilon}-1+2\delta}{e^{\varepsilon}+1}(1-\kappa_{\pi})\}.
    \]     
\end{proposition}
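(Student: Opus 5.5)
The bound is a minimum of three terms, and I will prove each term separately. The third term, $\tfrac{e^{\varepsilon}-1+2\delta}{e^{\varepsilon}+1}(1-\kappa_\pi)$, is immediate from \Cref{th:dp_implies_aux-urero}. That theorem holds for every attack with arbitrary auxiliary knowledge, so in particular it holds for $aux=\{\varnothing\}$. For the other two terms I will use \Cref{th:f-DP}. An $(\varepsilon,\delta)$-DP mechanism is $f$-DP with
\[
f_{\varepsilon,\delta}(\alpha)=\max\{0,\,1-\delta-e^{\varepsilon}\alpha,\,e^{-\varepsilon}(1-\delta-\alpha)\}.
\]
This is the standard conversion, and it can also be read off directly from \Cref{def:dp} by applying the inequality in both directions to the test's rejection region.

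Next I invoke the first bound of \Cref{th:f-DP}:
\[
\eta\text{-}\mathrm{RAD}\leq \max_{\alpha\in[\kappa^{-},\kappa^{+}]}\bigl(1-f(\alpha)-\alpha\bigr).
\]
Since $f$ is a maximum, $1-f(\alpha)-\alpha$ is bounded above by each of the corresponding expressions obtained from a single branch of $f$:
\begin{align*}
1-(1-\delta-e^{\varepsilon}\alpha)-\alpha &= (e^{\varepsilon}-1)\alpha+\delta,\\
1-e^{-\varepsilon}(1-\delta-\alpha)-\alpha &= \frac{(e^{\varepsilon}-1)(1-\alpha)+\delta}{e^{\varepsilon}}.
\end{align*}
The first expression is increasing in $\alpha$, so on $[\kappa^-,\kappa^+]$ it is maximized at $\alpha=\kappa^{+}$. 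This gives $\kappa^{+}(e^{\varepsilon}-1)+\delta$. The second is decreasing in $\alpha$, so it is maximized at $\alpha=\kappa^{-}$. This gives $\frac{(1-\kappa^-)(e^{\varepsilon}-1)+\delta}{e^{\varepsilon}}$. Each of these is a valid upper bound on the maximum over the interval, and together with the third term this yields the stated minimum.

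The step most likely to need care is the bound $1-f(\alpha)-\alpha\le$ (each branch expression). It holds because $f\ge$ each branch, so $-f\le -$branch. I also need the interval $[\kappa^-,\kappa^+]$ to lie in $[0,1]$, which is clear from the definitions. A secondary point is to check that the $(\varepsilon,\delta)\Rightarrow f_{\varepsilon,\delta}$ conversion matches \Cref{def:f-DP} as stated in the paper, including the order of $D_0$ and $D_1$. Because DP is symmetric in neighbors, both inequalities are available, so both branches are justified.
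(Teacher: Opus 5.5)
Your proof is correct and follows the paper's argument. Both convert $(\varepsilon,\delta)$-DP into the piecewise-linear trade-off function, apply the first bound of \Cref{th:f-DP} on $[\kappa^-,\kappa^+]$, bound $1-f(\alpha)-\alpha$ by each affine branch evaluated at $\kappa^+$ (increasing branch) and $\kappa^-$ (decreasing branch), and take the third term from \Cref{th:dp_implies_aux-urero}. Your inclusion of the $0$ branch in $f$ is the standard form and leaves the branchwise bounds unchanged.
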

\begin{proof}
    Follows from combining previous theorem with~\cite{dong2019Gaussiana} result that any $(\varepsilon,\delta)$-DP mechanism is $f$-DP with, $f(\alpha)=\max\{1-\delta-\e^{\varepsilon}\alpha, \frac{1-\delta-\alpha}{\e^{\varepsilon}}\}$, and analyze the different cases until we arrive to the bound. Formally, every $(\varepsilon,\delta)$-DP mechanism verifies the that $f$-DP, with $f$
\begin{gather}
    f(\alpha)=\max\!\bigg\{\underbrace{1-\delta-\e^{\varepsilon}\alpha}_{f_1(\alpha)}, \underbrace{\frac{1-\delta-\alpha}{\e^{\varepsilon}}}_{f_2(\alpha)}\bigg\}.
\end{gather}
On the other side, applying~\Cref{th:f-DP} we have
\begin{gather}
\eta\text{-}\mathrm{RAD}\leq\max_{\alpha\in[\kappa^{-},\kappa^{+}]}\left( 1-f(\alpha)-\alpha\right).
\end{gather}
Combining both equations we obtain,
\begin{gather*}
    \eta\text{-RAD}\leq \max_{\alpha\in[\kappa^{-},\kappa^{+}]}\left( 1-f(\alpha)-\alpha\right)\\
    =\max_{\alpha\in[\kappa^{-},\kappa^{+}]} 1-\max\{f_1(\alpha),f_{2}(\alpha)\}-\alpha\\
    =\max_{\alpha\in[\kappa^{-},\kappa^{+}]}(1-\max\{f_1(\alpha)+\alpha,f_{2}(\alpha)+\alpha\})\\
    =\max_{\alpha\in[\kappa^{-},\kappa^{+}]}(\min\{1-f_1(\alpha)-\alpha,1-f_{2}(\alpha)-\alpha\})\\
    \leq \min\!\bigg\{\max_{\alpha\in[\kappa^{-},\kappa^{+}]}1-f_1(\alpha)-\alpha,\max_{\alpha\in[\kappa^{-},\kappa^{+}]}1-f_2(\alpha)-\alpha\bigg\}
\end{gather*}
Therefore, we analyze both maximums.

First, for $f_1$ we have: 
\begin{gather}
    1-f_1(\alpha)-\alpha=\delta+\e^{\varepsilon}\alpha-\alpha\\
    =\alpha(\e^{\varepsilon}-1)+\delta\leq \kappa^{+}(\e^{\varepsilon}-1)+\delta
\end{gather}

Second, for $f_2$ we obtain:
\begin{gather}
    1-f_2(\alpha)-\alpha =1-\frac{1-\delta-\alpha}{\e^{\varepsilon}}-\alpha\\
    =1-\frac{1-\delta}{\e^{\varepsilon}}+\alpha(\e^{-\varepsilon}-1)\leq 1-\kappa^{-}(1-\e^{-\varepsilon})-\frac{1-\delta}{\e^{\varepsilon}}=\frac{(1-\kappa^{-})(\e^{\varepsilon}-1)+\delta}{\e^{\varepsilon}}.
\end{gather}
Combined with the general bound~\Cref{th:dp_implies_aux-urero} it follows the result.
\end{proof}
This bound enables to better interpret DP parameters in terms of reconstruction attacks without auxiliary knowledge. 

Moreover, this bound remains informative even when the mechanism in use is completely unknown. For instance, consider auditing external software from a company that claims to provide $(\varepsilon,\delta)$-DP but does not disclose the mechanism used. In such a black-box setting, where we are allowed to query the model but never know the underlying mechanism, our~\Cref{prop:dp_implies_urero} still applies.

However, as discussed in~\Cref{sec:main}, the actual privacy protection of a DP mechanism depends on its specific design and cannot be characterized solely by its privacy parameters. Consequently, this upper-bound should be used as a last-option estimate when the mechanism is unknown, rather than as a substitute for proper noise calibration in mechanism design.

Next, we focus on improving this black-box bound for perfect reconstruction, i.e.,  $\eta=0$, in categorical data. This case is particularly relevant since many sensitive attributes, such as diseases, political opinions, or religious beliefs, are categorical and do not trivially support partial reconstruction, e.g.~\cite{Fredrikson2015Model, Fredrikson2014Privacy}. For such settings, we derive more precise bounds. To do so, we first introduce the following auxiliary lemma:
 \begin{lemma}\label{lemma:Gamma}
     Given $|\Z|=m$ and $\mathcal{M}\colon \Z^n\to\D(\Theta)$ an $(\varepsilon,\delta)$-DP mechanism, for any attack $A\colon\Theta\to\D(\Z)$ and $\gamma_x=\Pr_{\M}(\Theta_z\mid z)-\Pr_{\M}(\Theta_z)$, with $\Theta_z$ as in \Cref{eq:no_auz_uni}, then
\begin{gather}
        \Gamma\coloneqq \sum_{z\in\Z} \gamma_x 
        \leq \frac{(m-1)(\e^{\varepsilon}-1+\delta m)}{\e^{\varepsilon}+m-1}.
    \end{gather}
 \end{lemma}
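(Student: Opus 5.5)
The plan is to reduce $\Gamma$ to a linear function of the $m$ "diagonal" probabilities $p_z\coloneqq\Pr_{\M}(\Theta_z\mid z)$, and then bound their sum by combining the DP inequality with the fact that the sets $\Theta_z$ partition $\Theta$. I will work in the setting of \Cref{eq:no_auz_uni} with the uniform prior on $\Z$, which the right-hand side of the lemma implicitly assumes. There $\Pr_{\M}(\Theta_z)=\frac1m\sum_{z'\in\Z}\Pr_{\M}(\Theta_z\mid z')$, and $p_{\M}(\cdot\mid z)$ is shorthand for $p_{\M}(\cdot\mid D_{z})$ with $D_-$ fixed.

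\textbf{Step 1 (partition).} By the recursive definition, the sets $\Theta_{z_1},\dots,\Theta_{z_m}$ are pairwise disjoint. Every $\theta$ has some maximizer of $w(\theta,z_j)\pi_j$, so their union is $\Theta$. Hence $\sum_{z}\Pr_{\M}(\Theta_z\mid z')=1$ for every $z'$. Averaging over $z'$ gives $\sum_z\Pr_{\M}(\Theta_z)=1$, and therefore
\[
\Gamma=\sum_{z}p_z-1\eqqcolon P-1 .
\]
So it suffices to show $P\le \frac{m e^{\varepsilon}+m(m-1)\delta}{e^{\varepsilon}+m-1}$.

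\textbf{Step 2 (DP constraints).} Fix $z$. The datasets $D_z$ and $D_{z'}$ are neighbours, so \Cref{def:dp} applied with $S=\Theta_{z'}$ and the roles of the datasets swapped gives, for each $z'\neq z$,
\[
\Pr_{\M}(\Theta_{z'}\mid z)\ \ge\ e^{-\varepsilon}\bigl(p_{z'}-\delta\bigr).
\]
Using the partition from Step 1,
\[
1=p_z+\sum_{z'\neq z}\Pr_{\M}(\Theta_{z'}\mid z)\ \ge\ p_z+e^{-\varepsilon}(P-p_z)-(m-1)\delta e^{-\varepsilon}.
\]

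\textbf{Step 3 (summing).} Summing the inequality of Step 2 over all $z\in\Z$ yields
\[
m\ \ge\ P\bigl(1+(m-1)e^{-\varepsilon}\bigr)-m(m-1)\delta e^{-\varepsilon}.
\]
Rearranging gives the required bound on $P$. Subtracting $1$ then gives
\[
\Gamma\le \frac{(m-1)(e^{\varepsilon}-1)+m(m-1)\delta}{e^{\varepsilon}+m-1},
\]
which is exactly the claimed expression.

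\textbf{Main obstacle.} The step that needs the most care is Step 1, namely the identification $\sum_z\Pr_{\M}(\Theta_z)=1$. This requires that the $\Theta_z$ truly partition $\Theta$: ties in the $\argmax$ must be broken consistently by the recursive definition, and the sets must be measurable in the continuous-output case. It also requires the marginal to be taken under the prior. For a non-uniform $\pi$, the sum $\sum_z\Pr(\Theta_z)$ is still $1$, so Steps 2–3 go through unchanged. The only subtlety then is checking that $\gamma_z$ is defined with this marginal rather than weighted by $\pi_z$.
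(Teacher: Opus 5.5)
Your proof is correct and follows essentially the same route as the paper. Both use the partition $\{\Theta_z\}$ to get $\sum_z \Pr_{\M}(\Theta_z\mid z')=1$ and $\Gamma=P-1$, substitute the DP lower bound $\Pr_{\M}(\Theta_{z'}\mid z)\ge e^{-\varepsilon}(p_{z'}-\delta)$, sum over $z$, and rearrange; as your last paragraph notes, no uniform-prior assumption is needed, since $\sum_z\Pr_{\M}(\Theta_z)=1$ holds for any $\pi$ and the bound does not depend on $\pi$.
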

 \begin{proof}
By definition $\Theta_z\cap\Theta_{z'}=\varnothing$. Besides, for all $\theta$ it exits at least one $z_{\theta}\in\argmax_{z}p_{\M}(\theta\mid z)\pi_z$, and $\bigcup_z \Theta_z=\Theta$. Hence, $\{\Theta_{z}\}_{z\in\Z}$ determines a partition in~$\Theta$. Therefore, by the law of total probability, for each $Z_0$ we have
\begin{gather}\label{eq:total_prop}
   \sum_{z\in\Z}\Pr_{\M}(\Theta_z\mid Z_0)=\sum_{z} \int_{\Theta_z} p_{\M}(\theta\mid Z_0)\diff\mu(\theta) = \int_{\Theta} p_{\M}(\theta\mid Z_0)\diff\mu(\theta) =1.
\end{gather}

On the other hand, since $\M$ is $(\varepsilon,\delta)$-DP, for every $Z_1,Z_0\in \Z$,
\begin{gather}\label{eq:DP_def}
     \Pr_{\M}(\Theta_1\mid Z_0)\geq \e^{-\varepsilon}(\Pr_{\M}(\Theta_1\mid Z_1)-\delta).
\end{gather} 
Substituting~\Cref{eq:DP_def} in~\Cref{eq:total_prop} we obtain, for all $i,j\in[m]$,
\begin{gather}
     \Pr_{\M}(\Theta_i\mid z_i) +\e^{-\varepsilon} \sum_{i\neq j}\Pr_{\M}(\Theta_j\mid z_j)\leq 1+\delta\e^{-\varepsilon}(m-1)
\end{gather}

Summing the above inequality over all $i\in[m]$,
\begin{gather}
    \sum_{i=1}^{m}\Pr_{\M}(\Theta_i\mid z_i)+(m-1)\e^{-\varepsilon}\sum_{i=1}^{m}\Pr_{\M}(\Theta_i\mid z_i)\leq m(1+\delta\e^{-\varepsilon}(m-1))\Leftrightarrow\\
    \sum_{i=1}^{m}\Pr_{\M}(\Theta_i\mid z_i)\leq \frac{m(1+\delta\e^{-\varepsilon}(m-1))}{1+(m-1)\e^{-\varepsilon}}=\frac{m\e^{\varepsilon}+\delta m(m-1)}{\e^{\varepsilon}+(m-1)}.
\end{gather}

Hence, 
    \begin{align}
        \Gamma &= \sum_{z\in\Z} \gamma_x\\
        &=\sum_{z\in\Z}\left(\Pr_{\M}(\Theta_z\mid z)-\Pr_{\M}(\Theta_z)\right) \\
        &=\sum_{z\in\Z}\Pr_{\M}(\Theta_z\mid z)-1\\
        &\leq \frac{m\e^{\varepsilon}+\delta m(m-1)}{\e^{\varepsilon}+m-1}-1 \\
        &=\frac{(m-1)(\e^{\varepsilon}-1+\delta m)}{\e^{\varepsilon}+m-1}. \qedhere
    \end{align}
 \end{proof}
Applying this lemma we obtain the following RAD bound:
 \begin{theorem}[$0$-RAD under $(\varepsilon,\delta)$-DP]\label{th:perfect_reco_bb}
Given $|\Z|=m$ with prior $\pi_1(1-\pi_1)\geq\dots\geq \dots\geq \pi_m(1-\pi_m)$ and $\mathcal{M}\colon \Z^n\to\D(\Theta)$ an $(\varepsilon,\delta)$-DP mechanism, for any attack $A\colon\Theta\to\D(\Z)$ 
\[
0\text{-}\mathrm{RAD}\leq  \frac{e^{\varepsilon}-1+2\delta}{e^{\varepsilon}+1} K_{\pi} + R \max_{i> K}\pi_{i}
\]
where $K\in[m]$ is the largest index satisfying 
  $R=(m-1)  \frac{e^{\varepsilon} - 1+m\delta}{e^{\varepsilon} + m - 1}-(K-\sum_{i=1}^K\pi_i)\frac{e^{\varepsilon}-1+2\delta}{e^{\varepsilon}+1}\geq 0
  $ and $K_{\pi}=\sum_{i}^{K}(1-\pi_i)\pi_i$.
\end{theorem}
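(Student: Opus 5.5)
Our plan is to start from the optimal-attack characterization \Cref{eq:no_auz_uni}, which writes the best achievable $0$-RAD with $aux=\{\varnothing\}$ as
\[
0\text{-RAD}\leq \sum_{i=1}^m \gamma_i \pi_i,\qquad \gamma_i=\Pr_{\M}(\Theta_i\mid z_i)-\Pr_{\M}(\Theta_i).
\]
The statement then becomes a linear maximization in the $\gamma_i$, with two kinds of constraints. The first is a per-coordinate cap: $\gamma_i\leq (1-\pi_i)\,\mathrm{TV}(\M)\leq (1-\pi_i)\frac{e^{\varepsilon}-1+2\delta}{e^{\varepsilon}+1}$. The second is the global budget from \Cref{lemma:Gamma}: $\sum_i\gamma_i\leq \Gamma_{\max}:=(m-1)\frac{e^{\varepsilon}-1+m\delta}{e^{\varepsilon}+m-1}$.

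For the per-coordinate cap we write $\Pr_{\M}(\Theta_i)=\sum_j\pi_j\Pr_{\M}(\Theta_i\mid z_j)$. This gives
\[
\gamma_i=\sum_{j\neq i}\pi_j\big(\Pr_{\M}(\Theta_i\mid z_i)-\Pr_{\M}(\Theta_i\mid z_j)\big).
\]
Each difference is at most $\mathrm{TV}(\M)$, and \Cref{eq:f-to-tv} then yields $\gamma_i\leq (1-\pi_i)T$ with $T:=\frac{e^{\varepsilon}-1+2\delta}{e^{\varepsilon}+1}$. We also need $\gamma_i\geq 0$, or at least that negative $\gamma_i$ only help, which is fine since we are upper bounding.

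The core step is a fractional-knapsack (greedy) argument. We maximize $\sum_i\pi_i\gamma_i$ subject to $0\le\gamma_i\le (1-\pi_i)T$ and $\sum_i\gamma_i\le\Gamma_{\max}$. The "value per unit budget" of coordinate $i$ is $\pi_i$, so the greedy rule would fill coordinates in decreasing order of $\pi_i$.

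The statement instead orders by $\pi_i(1-\pi_i)$ and uses the capacities $(1-\pi_i)T$. Filling the first $K$ coordinates to capacity consumes budget $\sum_{i\le K}(1-\pi_i)T=(K-\sum_{i\le K}\pi_i)T$ and contributes value $T\sum_{i\le K}\pi_i(1-\pi_i)=T K_\pi$. The leftover budget $R$ is placed on the remaining coordinates, contributing at most $R\max_{i>K}\pi_i$, since each unit of budget there is worth at most $\max_{i>K}\pi_i$.

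To make this rigorous regardless of the ordering, we will split any feasible $\gamma$ as follows:
\[
\sum_i\pi_i\gamma_i\leq \sum_{i\le K}\pi_i(1-\pi_i)T+\Big(\max_{i>K}\pi_i\Big)\sum_{i>K}\gamma_i .
\]
We then bound $\sum_{i>K}\gamma_i\leq \Gamma_{\max}-\sum_{i\le K}\gamma_i$. This is where the obstacle lies: the decomposition only gives $R$ if the first $K$ coordinates are saturated. We would handle it with an exchange argument. If some $\gamma_i$ with $i\le K$ is below capacity while budget sits on some $j>K$, we shift budget from $j$ to $i$; this should not decrease the objective provided $\pi_i\ge\pi_j$.

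The ordering by $\pi(1-\pi)$ agrees with the ordering by $\pi$ whenever all $\pi_i\le 1/2$. The case of a single $\pi_i>1/2$ needs a separate check, and we expect that to be the main technical difficulty. One possible escape is to use the weaker but valid bound in which the tail term uses $\max_{i>K}\pi_i$ and the head uses its cap directly. We would verify that the resulting expression is still dominated by the stated right-hand side.

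Finally, the choice of $K$ as the largest index with $R\ge 0$ guarantees that the saturated head is feasible within the budget from \Cref{lemma:Gamma}, which completes the bound.
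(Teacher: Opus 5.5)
Your route is the paper's own: \Cref{eq:no_auz_uni}, the cap $\gamma_i\le T(1-\pi_i)$ from the same decomposition of $\Pr_{\mathcal{M}}(\Theta_i)$, \Cref{lemma:Gamma} as a global budget, and a greedy fill. The paper only asserts the greedy outcome, so spelling out the knapsack step is an improvement.

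There is, however, a genuine gap. You note that $\gamma_i\ge 0$ is needed, but then dismiss it on the grounds that negative $\gamma_i$ ``only help''. That is false. A negative $\gamma_j$ on a light coordinate frees room in $\sum_i\gamma_i\le\Gamma_{\max}$ for the heavy coordinates. Moreover, your split uses $\pi_j\gamma_j\le(\max_{i>K}\pi_i)\gamma_j$ for $j>K$, which fails when $\gamma_j<0$. Without nonnegativity the budget can be evaded, and you only recover \Cref{th:dp_implies_aux-urero}.

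So you must prove $\gamma_i\ge0$; it follows from the construction of the cells. Since $\sum_j\pi_j w(\theta,z_j)=p_{\mathcal{M}}(\theta)-p_{\mathcal{M}}(\theta)=0$, the maximum $\max_j\pi_j w(\theta,z_j)$ is nonnegative. Hence $\pi_i w(\theta,z_i)\ge0$ on $\Theta_i$, and $\pi_i\gamma_i=\int_{\Theta_i}\pi_i w(\theta,z_i)\,\mathrm{d}\mu(\theta)\ge 0$.

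Atoms with $\pi_i=0$ enter only the budget, so handle them separately. Break ties in favour of positive-mass atoms. This leaves $\sum_i\pi_i\gamma_i=\int\max_j\pi_j w(\theta,z_j)\,\mathrm{d}\mu(\theta)$ unchanged, and the proof of \Cref{lemma:Gamma} uses only that the cells partition $\Theta$. A zero-mass atom attains the maximum only where all terms vanish, so with this tie-breaking its cell is empty.

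With nonnegativity in hand, the exchange argument collapses to one line. Let $\pi^\ast=\max_{i>K}\pi_i$. Using $\gamma_i\ge 0$ for $i>K$ and then \Cref{lemma:Gamma},
\[
\sum_i\pi_i\gamma_i\le\sum_{i\le K}\pi_i\gamma_i+\pi^\ast\Big(\Gamma_{\max}-\sum_{i\le K}\gamma_i\Big)\le\pi^\ast\Gamma_{\max}+T\sum_{i\le K}(\pi_i-\pi^\ast)(1-\pi_i)=TK_\pi+\pi^\ast R .
\]
The middle step needs $\pi_i\ge\pi^\ast$ for all $i\le K$.

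That condition is not the obstacle you anticipate. At most one atom exceeds $1/2$. If $\pi_1=p>1/2$, every other atom satisfies $\pi_j\le 1-p<1/2$, so $\pi_j(1-\pi_j)\le p(1-p)$ because $x(1-x)$ increases on $[0,1/2]$. Hence ordering by decreasing $\pi_i(1-\pi_i)$ is ordering by decreasing $\pi_i$, except for ties between unequal masses.

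Such ties force $\pi$ onto at most two atoms $\{p,1-p\}$; a single atom gives $0$-RAD $=0$ trivially. In the two-atom case both atoms occupy positions $1,2$. Also $K\ge2$, because $\Gamma_{\max}\ge T$ for $m\ge2$: indeed $(m-1)(e^{\varepsilon}+1)\ge e^{\varepsilon}+m-1$ and $m\delta\ge2\delta$. Therefore $\pi^\ast=0$ there, and no separate ``escape'' bound is needed.
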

\begin{proof}
Since $|\Z|=m$ and $aux=\{\varnothing\}$, \Cref{th:optimal_bound} gets reduced to \Cref{eq:no_auz_uni}, hence
\begin{gather}
    0\text{-RAD}\leq \sum_{i=1}^{m}\left(\Pr_{\M}(\Theta_i\mid z_i)-\Pr_{\M}(\Theta_i)\right)\pi_i\equiv\sum_{i=1}^{m} \gamma_i\,\pi_i.
\end{gather}

For one side, we obtain that for all $i\in[m]$,
\begin{gather}
    \gamma_i=\Pr_{\M}(\Theta_i\mid z_i)-\Pr_{\M}(\Theta_i)=\int_{\Theta_i} p_{\M}(\theta\mid z_i)-\sum_{j\in[m]} p_{\M}(\theta\mid z_j)\pi_j \diff\mu(\theta)\\
    =\int_{\Theta_i} \sum_{j\in[m]}\left( p_{\M}(\theta\mid z_i)- p_{\M}(\theta\mid z_j)\right)\pi_j \diff\mu(\theta)\\
    =\sum_{j\neq i} \left(\Pr_{\M}(\Theta_i\mid z_i)-\Pr_{\M}(\Theta_i\mid z_j)\right)\pi_j\\
    \leq \mathrm{TV}(\M)\sum_{j\neq i}\pi_j\leq\frac{e^{\varepsilon}-1+2\delta}{e^{\varepsilon}+1}(1-\pi_i).
\end{gather}

If we simply apply this bound we recover~\Cref{th:dp_implies_aux-urero} result:
\[
0\text{-RAD}\leq\sum_{i=1}^{m} \gamma_i\,\pi_i\leq  \sum_{i=1}^{m}\frac{e^{\varepsilon}-1+2\delta}{e^{\varepsilon}+1}(1-\pi_i)\pi_i=\frac{e^{\varepsilon}-1+2\delta}{e^{\varepsilon}+1}(1-\kappa_{\pi}).
\]

However, due to~\Cref{lemma:Gamma}, we know that this bound is loose, since in this case,
\begin{gather}
    \Gamma=\sum_{i=1}^{m} \gamma_i=\frac{e^{\varepsilon}-1+2\delta}{e^{\varepsilon}+1} (m-1) \geq \frac{e^{\varepsilon} - 1+m\delta}{e^{\varepsilon} + m - 1} (m-1)=\Gamma_{\max},
\end{gather}
contradicting~\Cref{lemma:Gamma}; therefore, it is impossible to achieve the local inequality $\gamma_i\leq \mathrm{TV}(\M)(1-\pi_i)$ simultaneously for all $i\in[m]$. In most cases, we can apply the local bound to a reduced set of indexes $k$, and the remainders must adjust so that the total sum $\sum_i \gamma_i=\Gamma$. Formally, at most, we can sum $k$ summands such that,
\begin{gather}
    \sum_{r=1}^{k} \gamma_{i_r}\leq \frac{e^{\varepsilon} - 1+m\delta}{e^{\varepsilon} + m - 1} (m-1)\Leftrightarrow\\
    \sum_{r=1}^{k}(1-\pi_{i_r})\leq (m-1)\frac{(\e^{\varepsilon}-1+m\delta)((\e^{\varepsilon}+1)}{(\e^{\varepsilon}-1+2\delta)((\e^{\varepsilon}-1+2\delta))}
\end{gather}

Hence, 
without loss of generality we order the indices so that
\[
\pi_1(1-\pi_1)\ge \pi_2(1-\pi_2)\ge\cdots\ge\pi_m(1-\pi_m).
\] obtaining,
\begin{gather}
    0\text{-RAD}\leq \frac{e^{\varepsilon}-1+2\delta}{e^{\varepsilon}+1} \sum_{i=1}^{k_{\pi}} \pi_i(1-\pi_i)+R\max_{r>k_{\pi}}\pi_{r}
\end{gather}
with $k_{\pi}$ the maximum index verifying:
\[
\sum_{i=1}^{k_{\pi}}(1-\pi_i)\leq (m-1)\frac{(\e^{\varepsilon}-1+m\delta)((\e^{\varepsilon}+1)}{(\e^{\varepsilon}-1+2\delta)((\e^{\varepsilon}-1+2\delta))},
\]
and $R$ the reminder, i.e, 
$K$ the biggest index such that 
  \[
  R=(m-1)  \frac{e^{\varepsilon} - 1+m\delta}{e^{\varepsilon} + m - 1}-(K-\sum_{i=1}^K\pi_i)\frac{e^{\varepsilon}-1+2\delta}{e^{\varepsilon}+1}\geq 0.\qedhere
  \]
\end{proof}
Note that in the extreme case where $\pi_1 = \pi_2 = \frac{1}{2}$ and $\pi_i = 0$ for all $i \neq 1,2$, we recover exactly the same result as in~\Cref{th:dp_implies_aux-urero}.

Importantly,~\Cref{th:perfect_reco_bb} is less applicable than~\Cref{prop:dp_implies_urero}, since it only applies for perfect reconstruction, $\eta=0$, in categorical data. However, under these assumptions it offers a more accurate bound as we see in~\Cref{fig:black_box bounds}. 

\begin{figure}
    \centering
    \includegraphics[width=0.6\linewidth]{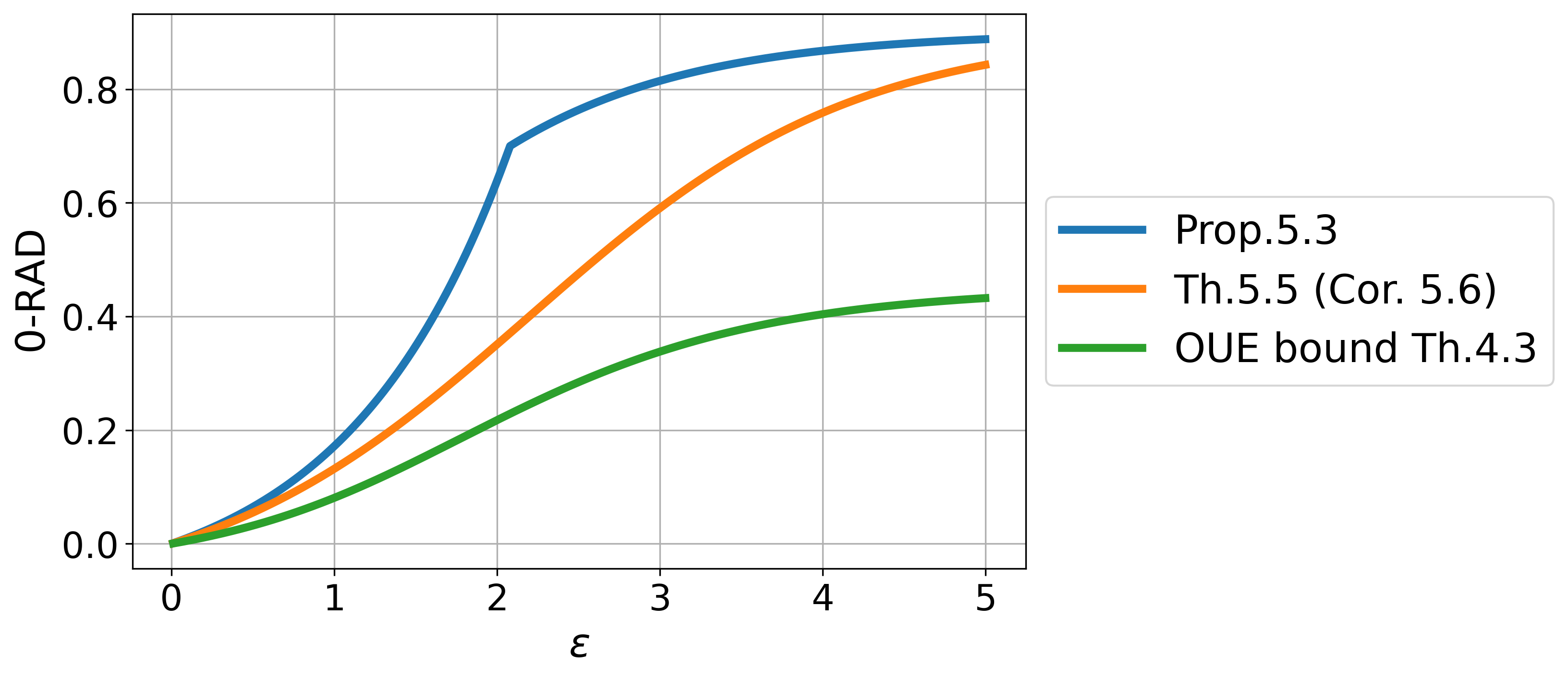}
    \caption[RAD Black-Box Estimation]{Comparison of black-box bounds for $0$-RAD without auxiliary knowledge, $\pi = U[10]$ and $\delta = 10^{-5}$. The bound given in \Cref{prop:dp_implies_urero} is more general and applies in any setting. In contrast, \Cref{th:perfect_reco_bb} is specific to categorical data but provides a tighter risk estimate when applicable. Finally, if the mechanism is known---here, OUE---it is always preferable to use the tighter bound provided by \Cref{th:optimal_bound}.}
    \label{fig:black_box bounds}
\end{figure}

This formulation enables the assessment of intermediate configurations of $\pi$. Notably, when $\pi=U[m]$ yields a marked improvement:
\begin{corollary}[Black-box Uniform Prior]\label{co:unif}
Let $\pi=\mathrm{U}[m]$ the uniform distribution over $\Z$. If a mechanism $\mathcal{M}$ satisfies $(\varepsilon, \delta)$-DP, for any attack $A\colon \Theta\to\D(\Z)$ it guarantees
\[
0\text{-}\mathrm{RAD}\leq \frac{e^{\varepsilon}-1+\delta m}{e^{\varepsilon}+m-1}\frac{m-1}{m}.
\]
\end{corollary}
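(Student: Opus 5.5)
The plan is to argue directly from the discrete no-auxiliary reduction \Cref{eq:no_auz_uni} together with \Cref{lemma:Gamma}, rather than through the general ordering argument of \Cref{th:perfect_reco_bb}. Under the uniform prior every weight $\pi_i$ equals $1/m$, so the partition $\{\Theta_i\}$ from \Cref{eq:no_auz_uni} gives
\[
0\text{-}\mathrm{RAD}\leq \sum_{i=1}^{m}\gamma_i\,\pi_i=\frac{1}{m}\sum_{i=1}^{m}\gamma_i=\frac{\Gamma}{m},
\qquad \gamma_i=\Pr_{\M}(\Theta_i\mid z_i)-\Pr_{\M}(\Theta_i).
\]
Because the weights are constant, the individual $\gamma_i$ never need to be controlled: only their sum matters.

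Next I invoke \Cref{lemma:Gamma}. It uses only that $\{\Theta_z\}$ partitions $\Theta$ and that $\M$ is $(\varepsilon,\delta)$-DP. It gives
\[
\Gamma\leq \frac{(m-1)(e^{\varepsilon}-1+\delta m)}{e^{\varepsilon}+m-1}.
\]
Dividing by $m$ yields exactly $\frac{e^{\varepsilon}-1+\delta m}{e^{\varepsilon}+m-1}\cdot\frac{m-1}{m}$, which is the claim.

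Alternatively, one could specialize \Cref{th:perfect_reco_bb}. With $\pi_i=1/m$ one has $(K-\sum_{i\le K}\pi_i)=K(m-1)/m$, and the remainder term $R/m$ plus $\mathrm{TV}\cdot K(m-1)/m^2$ telescopes to $\Gamma_{\max}/m$. The direct route is cleaner, though, and avoids the bookkeeping around the largest index $K$.

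The main obstacle is the step $\sum_z\Pr_{\M}(\Theta_z)=1$ inside \Cref{lemma:Gamma}, which requires the $\Theta_z$ to be disjoint and to cover $\Theta$. Ties in the $\argmax$ are resolved by the recursive set-difference construction in \Cref{eq:no_auz_uni}, and I will make sure that construction is cited explicitly. I will also confirm that the DP inequality is applied to each $\Theta_i$ as an event, which \Cref{def:dp} permits.
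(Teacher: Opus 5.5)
Your proof is correct, and it takes a more direct route than the paper's. The paper proves the corollary by specializing \Cref{th:perfect_reco_bb}. With $\pi_i=1/m$ it computes $K_\pi=K\frac{m-1}{m^2}$ and $K-\sum_{i\le K}\pi_i=K\frac{m-1}{m}$. The terms involving $A=\frac{e^{\varepsilon}-1+2\delta}{e^{\varepsilon}+1}$ in $AK_\pi$ and in $R\max_{i>K}\pi_i$ then cancel for every $K$, leaving $\Gamma_{\max}/m$; this is the alternative you sketch.

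Your route uses the same two ingredients that \Cref{th:perfect_reco_bb} is built on, namely \Cref{eq:no_auz_uni} and \Cref{lemma:Gamma}. It drops the third ingredient, the per-record bound $\gamma_i\le \mathrm{TV}(\M)(1-\pi_i)$, which plays no role once the weights are constant.

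\emph{What your route buys.} It is a self-contained two-line argument. It avoids the index bookkeeping in the proof of \Cref{th:perfect_reco_bb}: the greedy allocation ordered by $\pi_i(1-\pi_i)$, the choice of the largest admissible $K$, and the edge case $K=m$, where $\max_{i>K}\pi_i$ ranges over an empty index set.

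\emph{What the paper's route buys.} It presents the corollary as the uniform instance of the general-prior bound. This makes visible that the local TV constraints never bind under a uniform prior.

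Your remark that \Cref{lemma:Gamma} only needs $\{\Theta_i\}$ to be a partition is worth stating explicitly. The lemma's proof speaks of $\argmax_z p_{\M}(\theta\mid z)\pi_z$, while \Cref{eq:no_auz_uni} defines the cells through $\argmax_j w(\theta,z_j)\pi_j$. Under the uniform prior these coincide, since $w(\theta,z)\pi_z=(p_{\M}(\theta\mid z)-p_{\M}(\theta))/m$. In any case the lemma uses only disjointness and covering, so the recursive tie-breaking construction you plan to cite suffices.
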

\begin{proof}
    For every $K\in[m]$, $K_{\pi}=\sum_{i=1}^{K}(1-\pi_i)\pi_i=K\frac{m-1}{m^2}$ and $(K-\sum_{i=1}^{K}\pi_i)=K\frac{m-1}{m}$, therefore, denoting $A=\frac{\e^{\varepsilon}-1+2\delta}{\e^{\varepsilon}+1}$ and applying~\Cref{th:perfect_reco_bb} we get:
    \begin{gather}
        0\text{-RAD}\leq A K\frac{m-1}{m^2}+\frac{1}{m}(\Gamma-K\frac{m-1}{m}A)=\frac{1}{m}\Gamma= \frac{e^{\varepsilon}-1+\delta m}{e^{\varepsilon}+m-1}\frac{m-1}{m}.
    \end{gather}
\end{proof}

\textbf{Remark on composition.}
Since our $\eta$-RAD bounds depend explicitly on the privacy parameters---namely $\varepsilon$, $\delta$, and/or $f$---they can be directly recomputed under composition by first applying the corresponding composition results to obtain the composed privacy parameters (see \Cref{sec:background}), and then evaluating the bounds on these composed values.  In the following example, we illustrate how to derive RAD composition bounds for the particular case of DP-SGD.
\begin{example}\label{ex:dp-sgd_composition}
Given a risk threshold, $\mathrm{RAD}\leq\gamma$, we aim to calibrate the noise scale $\sigma$ (i.e., the standard deviation of the Gaussian noise added to the gradients during training~\cite{Abadi2016Deep}) on a full-batch DP-SGD, for $T$ steps to protect against the threat model considered by \citeauthor{Hayes2023Bounding}\cite{Hayes2023Bounding}, i.e., white-box access to private gradients, uniform prior over $|\Z|=m$ and $\eta=0$, hence $\kappa_{-}=\kappa_{+}=1/m$. 

Each iteration of a full-batch DP-SGD performs a Gaussian mechanism on the gradient computation, hence, we discussed in~\Cref{sec:background}, it verifies $\mu$-GDP~\cite{dong2019Gaussiana}, with $\mu=1/\sigma$. 

The adaptive composition of $T$ iterations of a $\mu$-GDP mechanism is $(\mu\sqrt{T})$-GDP, as discussed in~\Cref{sec:background}. Hence, a complete training of DP-SGD with $T$ iterations, is $(\sqrt{T}\sigma^{-1})$-GDP. Moreover, any $\mu$-GDP mechanism has total variation $\mathrm{TV}\leq 2\Phi(\frac{\mu}{2})-1$~\cite{ghazi2024total}, hence DP-SGD after $T$ iterations satisfies
\[
\gamma\leq\frac{m-1}{m}(2\Phi(\frac{\sqrt{T}}{2\sigma})-1)
.\]
Combining this composition result with our theorems we obtain direct calibration rules:

Without information about $aux$, we use~\Cref{th:dp_implies_aux-urero}. Obtaining,
\[
\eta\text{-}\mathrm{RAD}\leq\mathrm{TV}(\M)\left(1-\frac{1}{m}\right)=\frac{m-1}{m}\left(2\Phi(\frac{\sqrt{T}}{2\sigma})-1\right).
\]
We plot this bound for $T=100$ in~\Cref{fig:dp-sgd_mia_mnist,fig:dp-sgd_aia_mnist}. We can then solve $\sigma$ for any desired risk $\gamma$.

If we consider the whole records sensitive, $aux=\{\varnothing\}$, then we apply~\Cref{th:f-DP}:
\begin{align*}0\text{-}\mathrm{RAD}&\leq \frac{m-1}{m}\max\limits_{\alpha\in[0,\frac{1}{m-1}]}\left(1-\Phi\left(\Phi^{-1}(1-\alpha)-\frac{\sqrt{T}}{\sigma}\right)-\alpha\right)
\end{align*}

Hence, given $\alpha^*
=
\min\!\left\{
\frac{1}{m-1},
\;
1-\Phi\!\left(\frac{\sqrt{T}}{2\sigma}\right)
\right\}$ (see~\Cref{ex:maz_alpha}),
 the minimum $\sigma$ to guarantee $0\text{-RAD}\leq\gamma$ is:
\[
\sigma
\;\ge\;
\frac{\sqrt{T}}{
\Phi^{-1}\!\left(1-\alpha^{*}\right)
-
\Phi^{-1}\!\left(1-\frac{m}{m-1}\,\gamma-\alpha^{*}\right)
}.
\]

We plot this bound for the case of $T=100$ in~\Cref{fig:dp-sgd_dra_mnist}. A practitioner can then choose the minimum noise scale $\sigma$ for any given risk threshold $\gamma$. For instance, given that a set of $m=10$ individuals do not tolerate a risk bigger than $0.1$, for a training of $T=100$ iterations, one must add noise calibrated to $\sigma=22$.
\end{example}

In summary, this section provides reasonable closed-form upper bounds (as we show in~\Cref{sec:results}) for estimating RAD when \Cref{th:optimal_bound} cannot be computed explicitly or $\mathcal{M}$ is unknown and $aux = \{\varnothing\}$, hence \Cref{th:dp_implies_aux-urero} would  overestimate the risk. Importantly, these bounds offer composition results as we summarize in \Cref{tab:bounds_comparison}.

\begin{table}[t]
    \centering
    \renewcommand{\arraystretch}{1.8} % Mayor espacio vertical entre filas
    \setlength{\tabcolsep}{3pt}      % Mayor espacio horizontal entre columnas
    \resizebox{!}{0.2\textwidth}{
    \begin{tabular}{@{}c lcccc@{}}
    \toprule
    & \textbf{Notion} & \textbf{Assumptions} & \textbf{RAD bound} & \textbf{Composition} & \textbf{ReRo bound}\\
    \midrule
    & Total variation 
    & --- 
    & \Cref{th:dp_implies_aux-urero} 
    & $\checkmark$ & $\nexists$\\

    & $f$-DP 
    & $aux=\{\varnothing\}$ 
    & \Cref{th:f-DP}
    & $\checkmark$ 
    & \Cref{eq:hayes_bound}~\cite{Hayes2023Bounding}\\

    & $\M$ 
    & $aux$ known 
    & \Cref{th:optimal_bound}
    & $\times$ 
    & $\nexists$\\
    \midrule
    & $(\varepsilon,\delta)$-DP 
    & ---  
    &\Cref{th:dp_implies_aux-urero}
    & $\checkmark$ 
    & $\nexists$\\
    
    & $(\varepsilon,\delta)$-DP 
    & $aux=\{\varnothing\}$ 
    & \Cref{prop:dp_implies_urero}
    & $\checkmark$   
    &  \Cref{eq:balle_bound}~\cite{Balle2022Reconstructing}\\
    
    & $(\varepsilon,\delta)$-DP 
    & $aux=\{\varnothing\}, \eta=0$ 
    & \Cref{th:perfect_reco_bb} 
    & $\checkmark$ 
    &  \Cref{eq:balle_bound}~\cite{Balle2022Reconstructing}\\
    \bottomrule
\end{tabular}
}
          \caption[Summary of RAD Bounds Applicability]{Summary of  RAD bounds applicability. }
    \label{tab:bounds_comparison} 
\end{table}

\section{RAD for DP Auditing}\label{sec:auditing}
DP auditing is a crucial tool for assessing the tightness of DP mechanisms, establishing the practical impact of the mechanism parameters, and detecting implementation flaws in deployed DP mechanisms~\cite{Annamalai2024Nearly,Jagielski2020Auditing, Bichsel2021DPSniper}. While previous DP auditing tools focus on solving specifically one of the aforementioned aspects, we propose a general-purpose DP auditing framework: RAD-based DP auditing. 

RAD provides a unifying framework for analyzing adversarial risk under arbitrary threat models. Moreover, our bounds establish a tight and explicit connection between RAD and the standard DP privacy parameters. Taken together, these results yield a simple and principled approach to general-purpose DP auditing. Precision and tightness are especially critical in this context, since loose estimates may underestimate privacy risks or fail to detect bugs and implementation flaws.

The core idea of RAD-based auditing is straightforward: given a measured RAD value $\tilde{\gamma}$, we invert our theoretical bounds to estimate an empirical privacy budget.
This empirical $\Tilde{\varepsilon}$ reflects the observed privacy loss in practice, complementing theoretical worst-case values and providing a more realistic perspective on real-world risk. Formally, in previous sections, we provide bounding functions $B$ such that $\mathrm{RAD
}(\M)\leq B(\varepsilon,\delta)$ for any $(\varepsilon,\delta)$-DP mechanism. Given a bound $\eta$-RAD $\leq B(\varepsilon,\delta)$, we compute RAD empirically obtaining $\gamma$, and estimate $\Tilde{\varepsilon}\geq B^{-1}(\gamma,\delta)$. 

The bound we employ depends on the specific setting. For instance, in a completely black-box scenario---where not even the mechanism used is known---for categorical data, in which we assume $\pi=U[m]$, the best bound is~\Cref{co:unif}. 
Therefore, the DP auditing framework consists of running an attack, measuring its empirical RAD $\widetilde{\gamma}$, and deriving  $\widetilde{\varepsilon}$ as follows: 
 \begin{gather}\label{eq:ldp_audit}
    \widetilde{\varepsilon} = 
    \begin{cases}
    \ln \left( \frac{\widetilde{\gamma}\,m+1}{1 - \widetilde{\gamma} \frac{m}{m-1}}\right) & \text{if the term can be evaluated,} \\[1em]
    \text{undefined} & \text{otherwise.}
    \end{cases}
    \end{gather}

However, if the mechanism $\M$ is known, we can use our improved bound from~\Cref{th:optimal_bound} (see~\Cref{ex:grr_short,ex:oue_short,ex:ss_short}).

Our auditing framework overcomes the fundamental scalability limitations of prior learning-based approaches such as DP-Sniper and Eureka~\cite{Bichsel2021DPSniper,Lu2024Eureka}, enabling auditing in high-dimensional categorical LDP settings. Unlike these methods, our approach avoids costly hyperparameter tuning and the search for worst-case neighboring databases, and remains computationally feasible even when the input domain contains thousands of categories (see~\Cref{sec:experiments}).

Despite the importance of LDP mechanisms~\cite{Erlingsson2014RAPPOR,Lu2024Eureka}, only one major work has so far focused on LDP auditing: \textsc{\textsc{LDP Auditor}}~\cite{Arcolezi2024Revealing}. Applying our RAD-based DP auditing to LDP, we address key limitations of prior work. 
In contrast to \textsc{\textsc{LDP Auditor}}, which focuses exclusively on perfect reconstruction without target-specific auxiliary knowledge---excluding important use-cases such as AIAs---we allow auditing under broader threat models by leveraging optimal attacks (see Algorithm~\ref{alg:OptimalAIA1}). 
Moreover, \textsc{LDP Auditor} uses the Clopper–Pearson method to compute confidence intervals for the attacker’s success probability. Since the upper bound of the interval must conservatively cover the true probability with high confidence, it systematically produces estimates that are higher than the actual value~\cite{Arcolezi2024Revealing}. This intrinsic limitation is avoided in our approach, which does not rely on confidence intervals.

We investigate and empirically show the improvement in accuracy of our auditing approach in~\Cref{sec:experiments} (cf.~\Cref{fig:audit_porto,fig:audit_beijing} for results), where we audit three main LDP mechanisms---GRR, SS and OUE---showing improved accuracy for all of them.

\section{Experiments}\label{sec:experiments}
In this section, we empirically examine the limitations of ReRo described in~\Cref{sec:related_work}, focusing on how existing bounds fail to account for realistic attackers with target-specific auxiliary information. Moreover, we validate our theoretical bounds and our RAD-based DP auditing framework in real-world databases and DP mechanisms. Our experiments show that RAD accurately distinguishes privacy leakage from imputation, with tight bounds in practice, making it a reliable tool for interpretable noise calibration. RAD also enables auditing of LDP mechanisms, improving both scope and accuracy over the state-of-the-art~\cite{Arcolezi2024Revealing}.

To ensure a fair comparison between ReRo and RAD in risk assessment, we emulate their experimental designs and dataset choices whenever possible. These design choices are particularly suitable for evaluating the tightness of our bounds, as they were originally used to assess the tightness of the ReRo bounds and yielded nearly tight results~\cite{Hayes2023Bounding}. This suggests that these settings already serve as strong testbeds for tightness evaluation.
Similarly, when comparing our RAD-based auditing framework to \textsc{LDP Auditor}, we adhere to their experimental design choices to ensure a coherent and consistent evaluation. We provide detailed descriptions of the datasets and experimental parameters in the following sections.

\subsection{Database Description}
We evaluate private learning, aggregation and LDP scenarios, using tailored datasets for each setting. The database selection is guided by their relevance in prior work and availability.

For DP-SGD, we use the same dataset as in ReRo~\cite{Hayes2023Bounding} for consistency: MNIST~\cite{Lecun1998Gradient}, with 70\,000 grayscale images of handwritten digits. We also replicate results on Fashion-MNIST~\cite{Xiao2017Fashion} (Fashion), which similarly contains 70\,000 grayscale images of clothes.

To evaluate the imputation attack~\cite{Jayaraman2022Are}, we use the Census and Texas-100X datasets in consistency with the original paper. 
The Census dataset~\cite{Jayaraman2022Are} contains 1\,676, records with 14 attributes, where race is treated as the sensitive attribute with eight categories. The Texas-100X dataset~\cite{Jayaraman2022Are} comprises 925\,128 patient records from 441 hospitals, including demographic and medical attributes, with a binary ethnicity attribute designated to be sensitive.

We evaluate aggregation in the Adult dataset~\cite{adult_2}, a census dataset commonly used in privacy-preserving aggregation~\cite{soria2014enhancing}. It consists of
$32\,561$, records with two numerical attributes, from which we select (working) hours-per-week following previous work~\cite{soria2014enhancing}, leading to the domain $\Z =\{0, \dots, 100\}$.

Finally, we evaluate our LDP auditing framework on location-reconstruction attacks using two real-world mobility datasets: the Porto dataset~\cite{OConnell2015Taxi} and the Geolife dataset~\cite{Zheng2011Geolife}. Both datasets are widely used in privacy and mobility research (e.g.,~\cite{Pyrgelis2017What,Lestyan2022In, Xiao2015Protecting}) and are publicly available. Each dataset consists of GPS coordinates, which we map to the OpenStreetMap (OSM) graph format~\cite{OpenStreetMap} like prior work. 
The Porto dataset contains a total of $83,409,386$ location reports that we map to the OSM roadgraph at Porto’s city center (41.1475\textdegree~N, 8.5870\textdegree~W) with a 2.7\,km radius, capturing the urban core of Porto. This radius leads to a universe size $|\Z|=3\,052$. The Geolife dataset contains a total of $24\,876\,978$ locations that we mapped to an OSM graph centered near Tiananmen Square (39.9130\textdegree~N, 116.3703\textdegree~E) with a 5\,km radius covering major central districts, leading to a universe of size $|\Z|=5\,356$.

\begin{figure*}[t] 
  \centering
  \begin{subfigure}[b]{0.33\textwidth}
    \centering
    \includegraphics[width=0.9\linewidth]{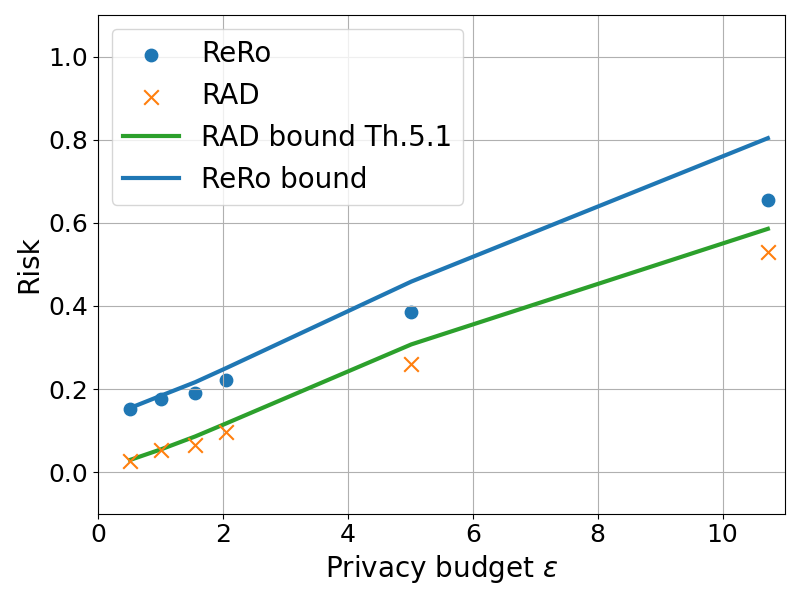}
    \caption{DRA, $aux=\{\varnothing\}$.}
    \label{fig:dp-sgd_dra_mnist}
  \end{subfigure}\hfill
  \begin{subfigure}[b]{0.33\textwidth}
    \centering
    \includegraphics[width=0.9\linewidth]{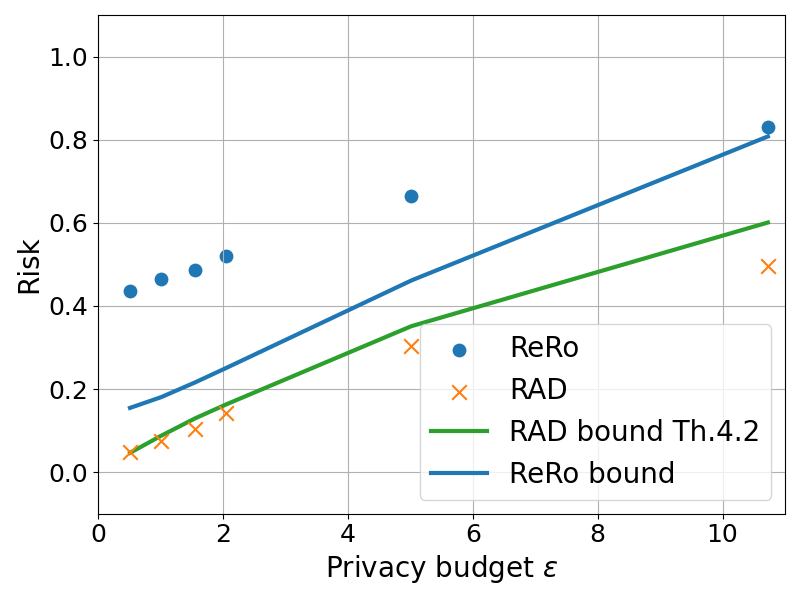}
    \caption{DRA, $a(z)=\text{image label}$.}
    \label{fig:dp-sgd_aia_mnist}
  \end{subfigure}\hfill
  \begin{subfigure}[b]{0.33\textwidth}
    \centering
    \includegraphics[width=0.9\linewidth]{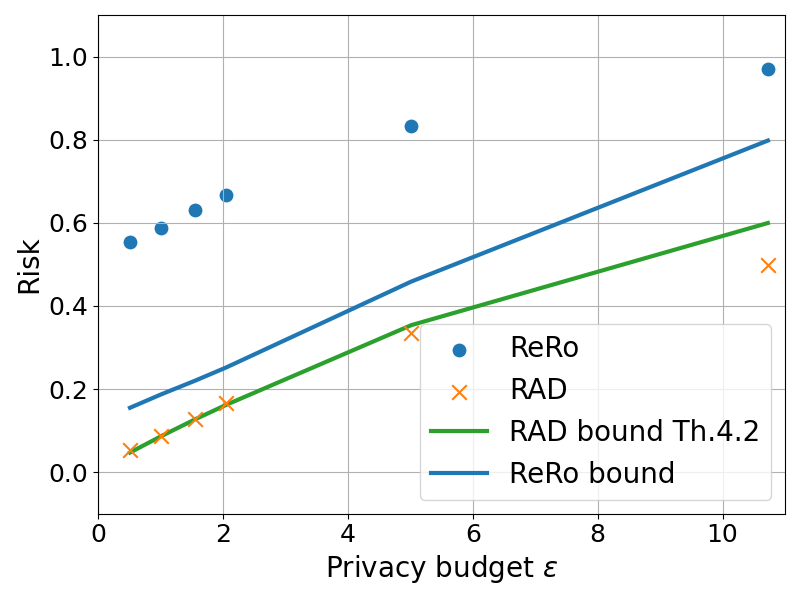}
    \caption{MIA, $a(z)=x$.}
    \label{fig:dp-sgd_mia_mnist}
  \end{subfigure}
  \caption[RAD vs.\ ReRo Results for Optimal Attacks against DP-SGD on MNIST]{RAD vs.\ ReRo results for optimal attacks against DP-SGD on MNIST. Lines show theoretical bounds and markers of empirical risk as estimated by RAD/ReRo. Empirical results exceed the bounds estimated by ReRo, whereas our RAD bounds remain close to the true risk. Moreover, while ReRo sharply increases when auxiliary knowledge is available, RAD effectively discounts imputation.}
  \label{fig:dp-sgd_mnist}
  
\end{figure*}

\begin{figure*}[t] 
  \centering
  \begin{subfigure}[b]{0.33\textwidth}
    \centering
    \includegraphics[width=0.9\linewidth]{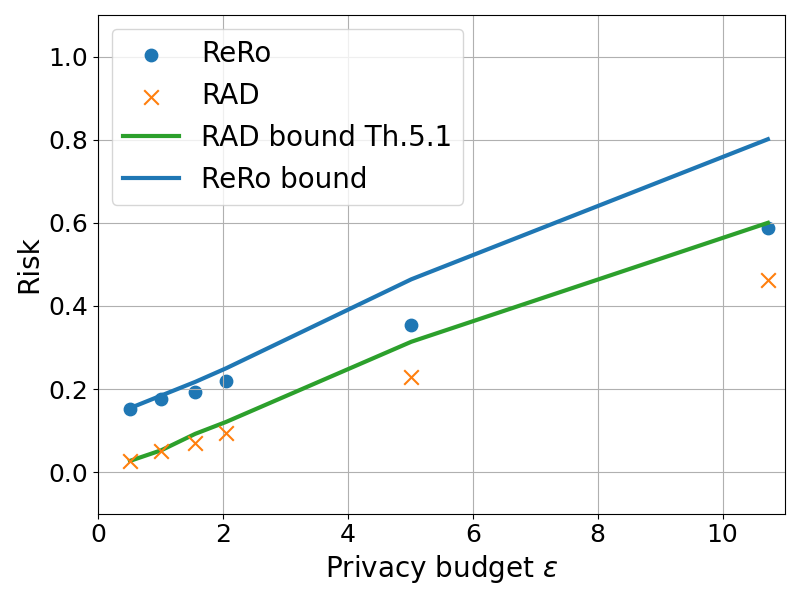}
    \caption{DRA, $aux=\{\varnothing\}$.}
    \label{fig:dp-sgd_dra_fashion}
  \end{subfigure}\hfill
  \begin{subfigure}[b]{0.33\textwidth}
    \centering
    \includegraphics[width=0.9\linewidth]{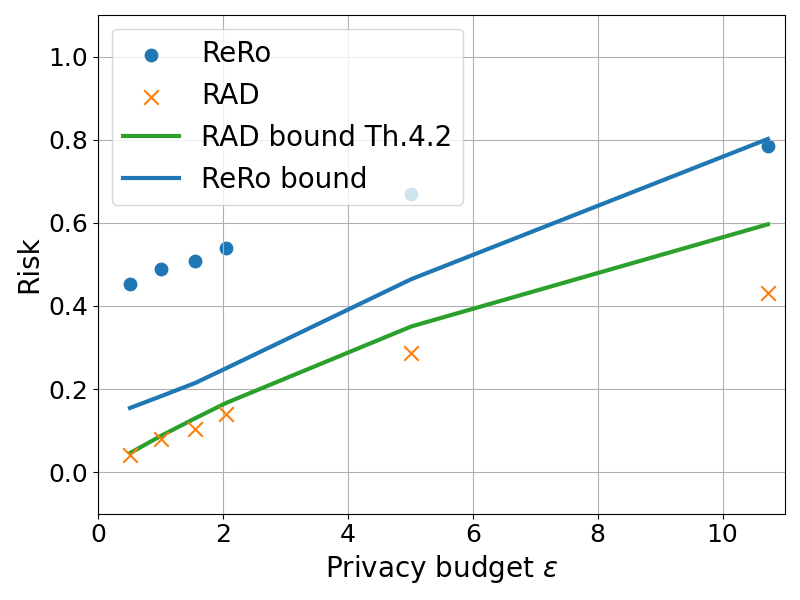}
    \caption{DRA, $a(z)=\text{image label}$.}
    \label{fig:dp-sgd_aia_fashion}
  \end{subfigure}\hfill
  \begin{subfigure}[b]{0.33\textwidth}
    \centering
    \includegraphics[width=0.9\linewidth]{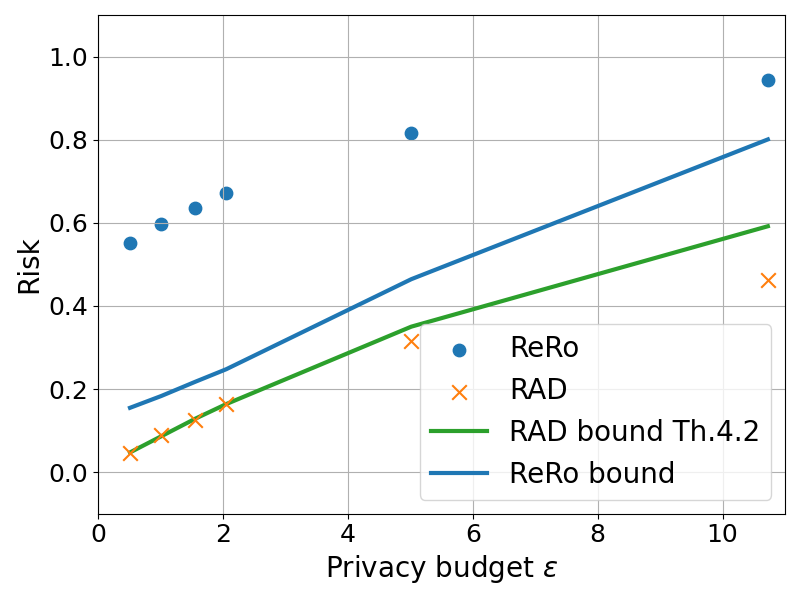}
    \caption{MIA, $a(z)=x$.}
    \label{fig:dp-sgd_mia_fashion}
  \end{subfigure}
  \caption[RAD vs.\ ReRo Results for Optimal Attacks against DP-SGD on Fashion]{RAD vs.\ ReRo results for optimal attacks against DP-SGD on Fashion. Lines show theoretical bounds and markers of empirical risk as estimated by RAD/ReRo. Both ReRo and RAD show a consistent behavior with respect to the MINST dataset.}
  \label{fig:dp-sgd_fashion}
  
\end{figure*}

\subsection{Experiment Design}\label{sec:exp_design_attacks}
We investigate attacks on private learning (DP-SGD), aggregation queries (Laplace mechanism), and LDP protocols (GRR, OUE, SS) under varying auxiliary information settings to validate our bounds, compare RAD and ReRo, and evaluate our auditing framework.

To demonstrate that \textit{ReRo overestimates risk}---and how RAD overcomes this limitation---we consider an attack that completely ignores the output of the private mechanism and relies solely on public information. This allows us to assess how ReRo behaves in a scenario where no private information is disclosed due to participation. To this end, we select the pure imputation attack~\cite{Jayaraman2022Are}. This attack uses a public dataset $D_{-}$ to train a separate attack classifier $A_{I}$ that, given the public attributes of a target, returns as label a prediction for the sensitive one. The adversary is given  only the target public attribute $a(z)$ and outputs the prediction
$
\widetilde{s_z} = \arg\max_{s_i \in \Theta} \Pr_{\mathcal{I}}[s_i \mid a(z)],
$
where the conditional distribution $\Pr[s_i \mid a(z)]$ is estimated by $A_{I}$, once the imputation model has been trained on $D_{-}$. This attack does not use any information from the target model $\M(D)$; therefore, adversarial success cannot be privacy leakage resulting from a user's participation in the training dataset of $\M(D)$. Following the original paper~\cite{Jayaraman2022Are}, we tested in both the Census and Texas datasets. We set $|D_{-}|=49\,000$ and a universe $\Z$ of $m=1\,000$, randomly selected from the remaining data records consistent with~\cite{Jayaraman2022Are}. 
We define the attack to be successful, $\ell(z,z')=0$, if $a(z)=a(z')$, as a classical AIA.

We demonstrate how \textit{RAD improves over ReRo} and establish the optimality of our bounds in both private learning and DP aggregation settings. In both cases, we evaluate tightness by testing our corresponding optimal attacks.
To ensure a fair comparison, we emulate the original ReRo experimental setup for private learning, where the authors report their bounds to be nearly tight for 
$aux=\{\varnothing\}$. For DP aggregation, although no experimental results are reported in the original work, we adhere as closely as possible to the same parameter choices.

For private learning we run the attacks against DP-SDG on the MNIST and Fashion image datasets in three settings: $aux = z$ (a MIA), $aux = \{\varnothing\}$ (a DRA, replicating the setting in~\cite{Hayes2023Bounding}), and $aux = a(z)$ (a DRA, where the adversary also knows the target image's label, i.e., which object is contained). To ensure a fair comparison with ReRo bounds, we select the parameters and thresholds exactly as specified in the original paper~\cite{Hayes2023Bounding}. Namely, we declare an attack successful when $A(\theta,a(z)) = z$, that is, $\eta = 0$. We set $|D_{-}| = 999 $ (and so the training set size is $|D_{-}\cup\{z\}| = 1\, 000$) and train with full-batch DP-SGD for $T = 100$ steps.  We set
 the clipping rate, i.e., the maximum norm we clip the real gradients to while training, $C = 0.1$ and $\delta=10^{-5}$ and adjust the noise scale $\sigma$ (see~\Cref{ex:dp-sgd_composition}) for a given target $\varepsilon$.  We set the  uniform prior with size $|\Z|=8$ (disjoint from $D_{-}$), meaning that  $\kappa_{\pi,0}^{+}=\kappa_{\pi}=0.125$. Hence, we exactly replicate the original ReRo study~\cite{Hayes2023Bounding}  parameters.

For DP aggregation, we evaluate the optimal attack against the Laplace mechanism on sum queries using the ``working-hours'' attribute of Adult, employing truncation as a post-processing operation. Analogously to the private learning experiments, we set $|D| = 999$, $aux=\{\varnothing\}$ but in this case we evaluate the performance for $\eta \in \{0, 40, 80, 100\}$ to assess the impact or the error threshold on the risk estimation.
Moreover, to understand the impact of the prior distribution on risk assessment, we compare three different distributions. As a baseline, we consider a uniform distribution. To simulate a realistic setting, we empirically estimate the distribution 
$\pi$ from the original data, reflecting real-world frequencies (e.g., working 40 hours per week is \emph{a priori} more likely than working 100 hours per week). Finally, we evaluate a fully skewed distribution with $\pi(100)=\pi(0)=0.5$, representing a worst-case scenario in which the attacker’s prior is concentrated on the two records that are easiest to distinguish in the dataset—analogous to the worst-case perspective in the original DP definition.

Finally, \textit{we evaluate our RAD framework in LDP}, and we compare our auditing framework with the state-of-the-art tool \textsc{\textsc{LDP Auditor}}~\cite{Arcolezi2024Revealing} for three relevant LDP mechanisms: GRR, OUE and SS~\cite{Gursoy2022Adversarial, Arcolezi2023On} .The results for \textsc{LDP Auditor} were obtained in collaboration with Héber H.\ Arcolezi, based on the implementation provided in \citeauthor{Arcolezi2024Revealing}'s public GitHub repository \cite{Arcolezi2024repo}. \textsc{LDP Auditor} estimates the empirical privacy budget in $10^6$ runs. 

We evaluate RAD based on our optimal attack (see Alg.~\ref{alg:OptimalAIA1}) under a uniform prior and without auxiliary knowledge, allowing comparison with \textsc{LDP Auditor}. We then test our own LDP auditing framework: based on the obtained RAD value $\gamma$, we evaluate $B^{-1}(\gamma)$ for $B$ following~\Cref{th:optimal_bound} and obtain an estimate of the empirical privacy budget. The precise $B(\varepsilon)$ for GRR, OUE and SS are shown in~\Cref{ex:grr_short,ex:oue_short,ex:ss_short} respectively. Since $B^{-1}$ is not explicit for OUE, we approximate it numerically using the bisection method, which converges in $\mathcal{O}(\log(\tau^{-1}))$
iterations, where $\tau$ denotes the tolerance level~\cite{sauer2018numerical}. We set $\tau=10^{-6}$. 
Consistent with~\cite{Arcolezi2024Revealing}, we repeat the $\varepsilon$ estimation five times and report the mean and standard deviation.

All experiments rely on empirical estimates of ReRo and RAD, i.e., estimates of a probability and a difference of probabilities, respectively. To obtain these estimates, we use Monte Carlo methods, approximating expected values by repeatedly sampling from the random process and computing the average.
Following~\cite{Hayes2023Bounding}, ReRo is estimated by repeating $J$ times the attack $A(\mathcal{M}(D_x), a(z))$ for each $z \in \Z$ and computing the $\pi$-weighted average. The RAD correction term is estimated analogously by evaluating  $J$ times the attacks $A(\mathcal{M}(D_{z_0}), a(z_1))$ for each target–challenger pair $z_1, z_0 \in \Z$ and averaging the results.

For MNIST, Fashion and Adult, we set $J=1\,000$ (as in~\cite{Hayes2023Bounding}). Note that in the LDP cases $D_{-}=\varnothing$, and we  set $J=10^{6}/m$ ensuring
the total number of runs matches those $10^{6}$ repetitions of \textsc{LDP Auditor}. Finally, for the imputation attack, we do not require a target model as it is target model-independent and set $J=1$. We repeat the imputation attack with five different seeds and report the averaged ReRo and RAD scores.

We use Python and TensorFlow~\cite{TensorFlow} to evaluate the attacks. For DP-SGD, we rely on a minimal implementation provided by~\citeauthor{Hayes2023Bounding}~\cite{Hayes2023Bounding}, which we extend to incorporate RAD and target-specific auxiliary knowledge. For the imputation attack~\cite{Jayaraman2022Are}, we adapt the authors’ public implementation~\cite{Jayaraman2022Repo}.

\subsection{Results}\label{sec:results}
\begin{figure*}[t] 
  \centering
  \begin{subfigure}[b]{0.4\linewidth}
    \centering
    \includegraphics[width=1\linewidth]{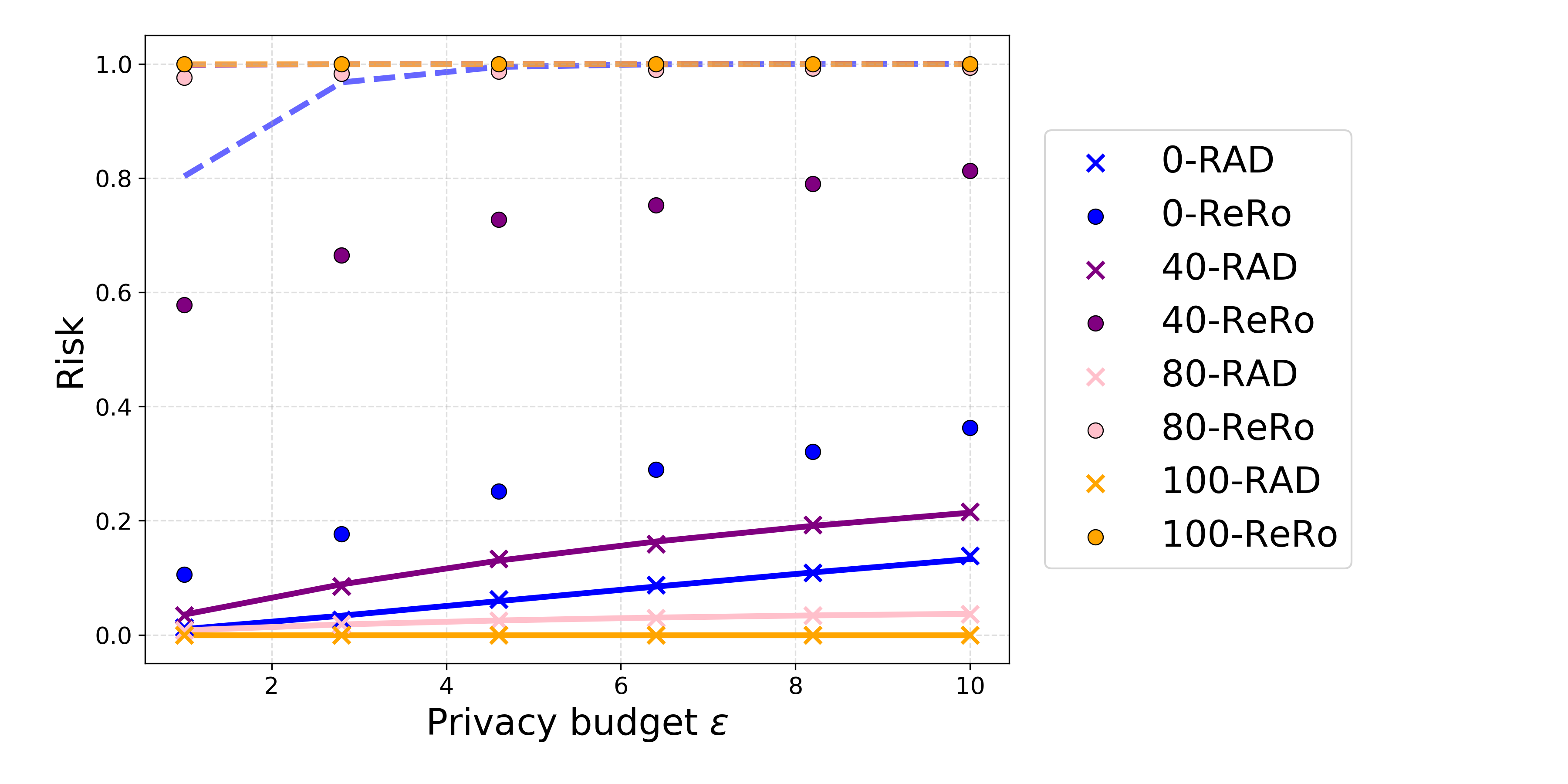}
    \caption{Adult distribution.}
    \label{fig:census_lapalce}
  \end{subfigure}\hfill
  \begin{subfigure}[b]{0.3\linewidth}
    \centering
    \includegraphics[width=1\linewidth]{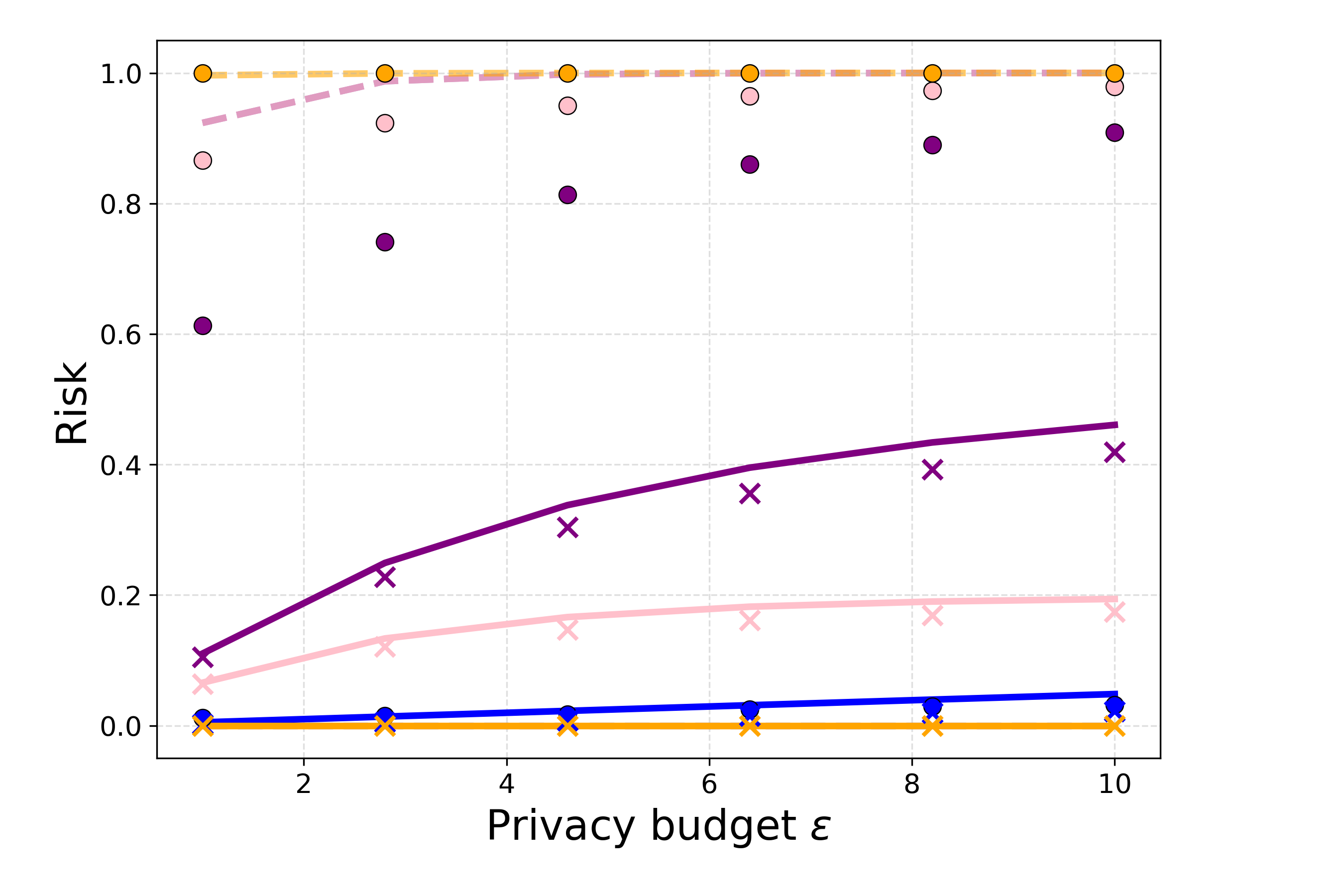}
    \caption{Uniform distribution.}
    \label{fig:uni_laplace}
  \end{subfigure}\hfill
  \begin{subfigure}[b]{0.3\linewidth}
    \centering
    \includegraphics[width=1\linewidth]{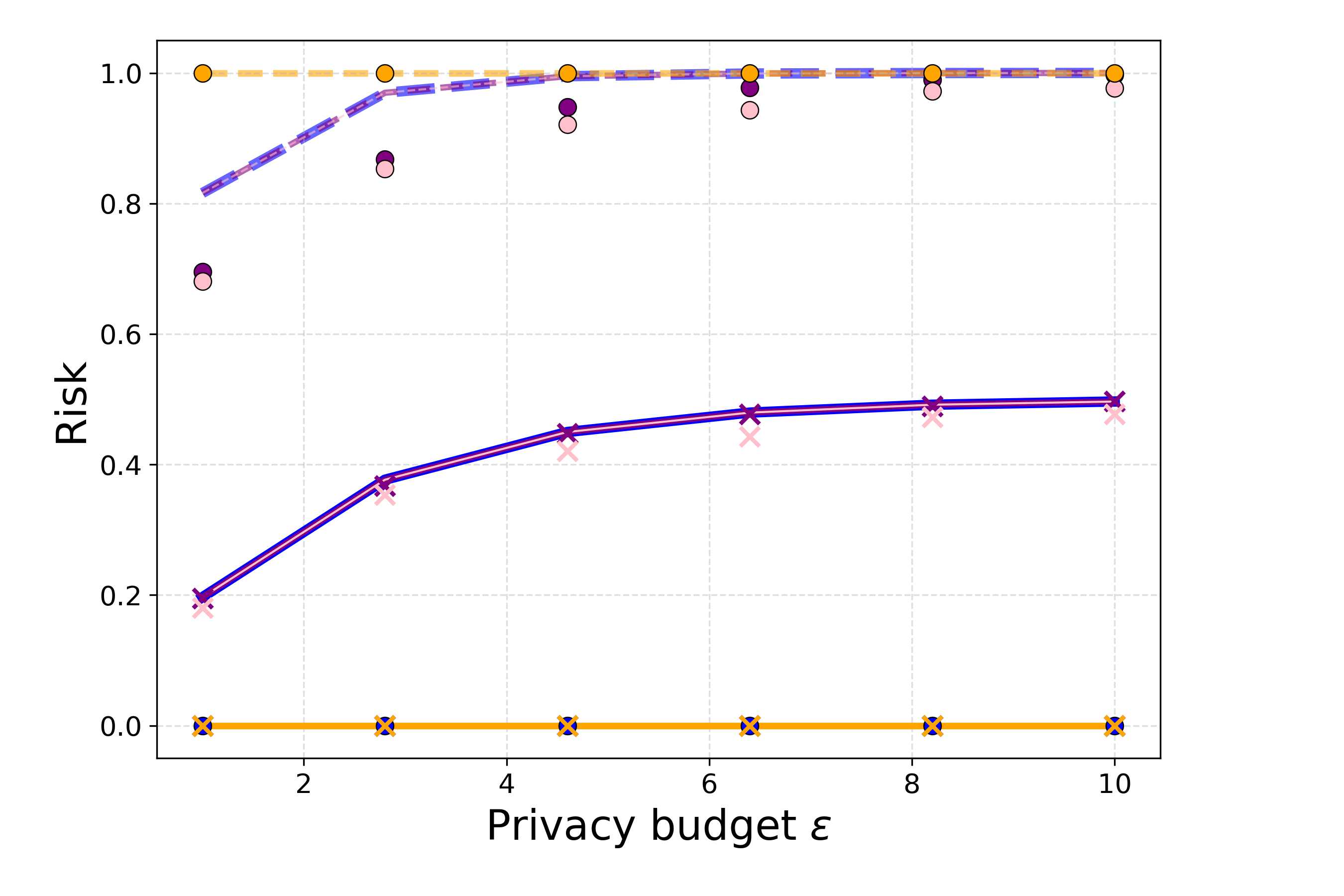}
    \caption{$\pi(0)=\pi(100)=\frac{1}{2}$.}
\label{fig:laplace_exteme}
  \end{subfigure}
  \caption[RAD vs.\ ReRo Results for Optimal Attack against the Laplace Mechanism on Adult]{
Empirical risk of RAD (crosses) and ReRo (dots) for different error tolerances 
$\eta\in[0,100]$ on the Adult dataset with truncated Laplace noise. Straight lines show theoretical RAD bounds, dashed lines ReRo bounds. While RAD bounds closely match the empirical risk, ReRo bounds consistently overestimate the risk. Moreover, ReRo increasingly overestimates the risk at larger error tolerances across all considered distributions.}
  \label{fig:census_attack}
\end{figure*}

\begin{table}[t]
\small
    \centering
        \begin{tabular}{ccc}
            \toprule
             \textbf{Dataset} & \textbf{ReRo}  & \textbf{RAD} \\
            \midrule
            Census & 0.81 & 0 \\
            Texas & 0.73 & 0 \\
            \bottomrule
        \end{tabular}
    \caption[ReRo vs.\ RAD Risk Estimation for Imputation Attack]{ReRo vs.\ RAD risk estimation for imputation attack. This type of attack does not access the dataset directly and therefore cannot induce any participation risk, which RAD correctly captures while ReRo significantly overestimates the risk.}
    \label{tab:bb_ml}
    \normalsize
\end{table}

In this section, we present the RAD and ReRo empirical risk results on real attacks, along with their corresponding theoretical bounds. For both measures, the y-axis represents the risk, with values close to one indicating high risk and values near zero indicating low risk.

\subsubsection{RAD covers, but ReRo breaks for auxiliary knowledge}
\begin{figure}
\begin{subfigure}[b]{0.45\textwidth}
    \centering    \includegraphics[width=0.8\linewidth]{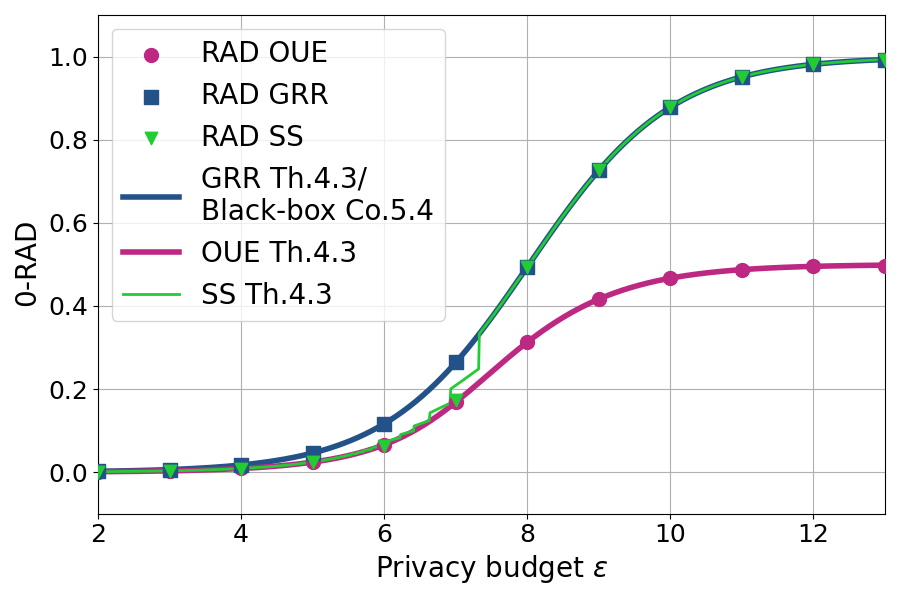}
    \caption{Porto }
    \label{fig:attacks_porto}
\end{subfigure}
\begin{subfigure}[b]{0.45\textwidth}
    \centering    \includegraphics[width=0.8\linewidth]{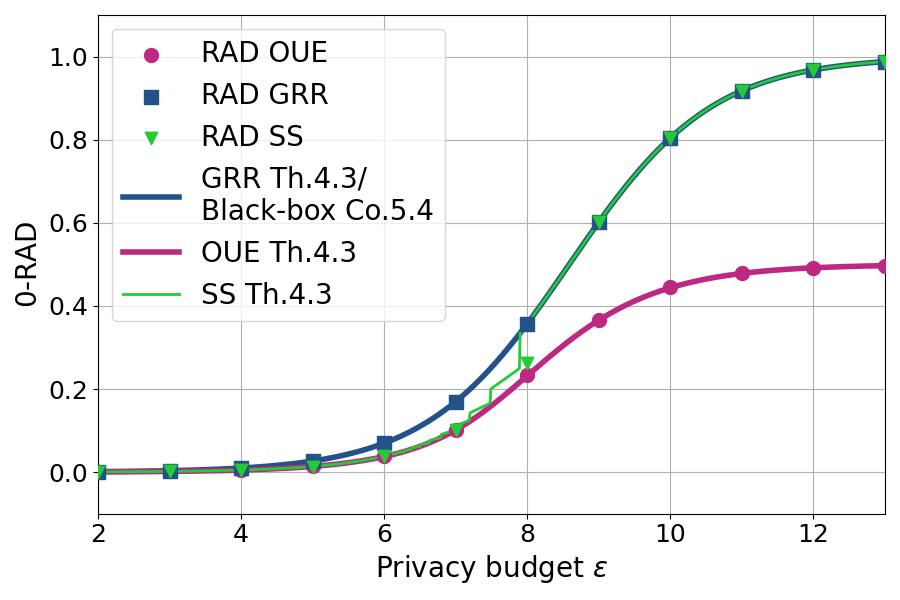}
    \caption{Geolife }
    \label{fig:attacks_beijing}
\end{subfigure}
  \caption[RAD Results for LDP Mechanisms]{RAD results for LDP mechanisms.  Lines show theoretical bounds and markers empirical RAD. First, we note that our bounds are perfectly tight for all tested mechanisms and datasets. Adittionally, we see that OUE offers the higher protection among the LDP mechanisms even for the same $\varepsilon$ choices. }\label{fig:LDP_attacks}
\end{figure}

\Cref{fig:dp-sgd_mnist} shows the results of ReRo and RAD risk estimation for our optimal attacks against DP-SGD on the MNIST dataset. Analogous results for the Fashion dataset are provided in \Cref{fig:dp-sgd_fashion}. We also include the corresponding theoretical bounds for ReRo and RAD for comparison.
As expected, the existing ReRo bounds~\cite{Hayes2023Bounding} correctly provide an upper limit on the empirically observed ReRo risk when the adversary has no prior knowledge of the target record ($aux=\{\varnothing\}$). \Cref{fig:dp-sgd_dra_mnist}). However, when the adversary has prior knowledge of the victim record (\Cref{fig:dp-sgd_aia_mnist,fig:dp-sgd_mia_mnist}), ReRo estimates exceed the values predicted by their theoretical bounds---which are meant to be upper bounds and, therefore, should never be surpassed by the true risk. In contrast, our RAD bounds consistently upper-limit the empirically estimated RAD risks across all tested attacks.

This supports our expectation that the ReRo bounds only hold under the assumption that the adversary has no auxiliary knowledge about the victim ($aux=\{\varnothing\}$), but fail to correctly estimate privacy risks when target-specific auxiliary knowledge exists.

We can also observe that our bounds for RAD overcome this estimation error: they hold for any auxiliary knowledge and are nearly tight. 
In particular,~\Cref{fig:dp-sgd_mia_mnist,fig:dp-sgd_aia_mnist} show that the tightness of our worst-case bound~\Cref{th:dp_implies_aux-urero} is not an isolated feature of GRR, but a reliable property that also applies to other widely used mechanisms, such as DP-SGD.  Finally, \Cref{fig:dp-sgd_dra_mnist} shows that our closed-form bound~\Cref{th:f-DP} offers a reasonable upper-bound  when \Cref{th:optimal_bound} needs to be numerically approximated (as is the case, for instance, with DP-SGD).

\subsubsection{Leakage vs.\ Imputation}
\Cref{tab:bb_ml} compares the risk estimates of RAD and ReRo for the imputation attack. 
This attack is not based on any information leakage from the mechanism and ignores any output in the process.
RAD in this case does estimate the privacy risk to be $0$, whereas ReRo reports notably higher values ($0.81$ for Census and $0.73$ for Texas).
This underlines how RAD is the more reliable measure of actual privacy risks: 
RAD shows the absence of leakage when the attack's success relies solely on imputation, whereas ReRo suggests serious disclosures (or: attack potential), effectively overestimating the privacy risk. This result suggests that RAD is a safer choice for risk estimation, as it allows practitioners to measure the true risk of data disclosure without being affected by data imputation.

This tendency of ReRo to overestimate risk is not confined to this setting. In our optimal attacks on DP-SGD (\Cref{fig:dp-sgd_mnist}), ReRo consistently overestimates leakage across all investigated cases, with the effect becoming more pronounced as more auxiliary information is incorporated. Membership inference ($a(z)=z$) provides the clearest example, where ReRo reports risk values exceeding $0.6$ even for privacy budgets $\varepsilon \leq 4$, which are commonly considered to offer strong privacy guarantees~\cite{Lee2011How}. This behavior aligns with expectations, as ReRo cannot discount auxiliary information; consequently, greater attacker knowledge leads to larger overestimation.

Similarly, \Cref{fig:census_attack} shows that ReRo fails to capture the effect of the success threshold~$\eta$. As $\eta$ increases, an oblivious attacker’s success probability rises, but ReRo cannot account for this since it depends only on success probability and thus converges to $1$ for all $\varepsilon$. This results in substantial overestimation: for $\eta=100$, a trivial setting where any guess is correct, ReRo reports maximal risk despite the mechanism providing no advantage. In contrast, RAD properly discounts this effect, showing that increasing $\eta$ boosts advantage only up to a point (here, $\eta=40$), after which the advantage decreases as success becomes nearly granted.

\subsubsection{Bound tightness} \Cref{fig:census_attack} shows the results of RAD and ReRo for our optimal attack against Laplace mechanism on Adult including their corresponding theoretical bounds. \Cref{fig:attacks_porto,fig:attacks_beijing} shows the analogous for LDP mechanisms, GRR, OUE and SS, on the Porto and Geolife datasets.  On the z-axis, we see $\varepsilon$, and on the y-axis, the exact  estimated risk for such $\varepsilon$ selection. Note that for LDP,  RAD and ReRo results coincide, since the attack relies solely on the released output (with no auxiliary information or imputation effects). Moreover, the prior-based chance level under the uniform prior  is negligible for 
$|\Z|=3,052$. We therefore report only RAD to avoid redundancy.

We observe that our bounds (cf.~\Cref{th:optimal_bound}) are tight for every prior $\pi$ and capture even subtle differences between mechanisms. In particular, the RAD estimates for GRR perfectly match our perfect-reconstruction black-box bound (\Cref{th:perfect_reco_bb}), confirming its tightness.

Moreover, \Cref{fig:census_attack} clearly illustrates the impact of the data distribution: the skewed distribution (\Cref{fig:laplace_exteme}) constitutes the worst case, while the empirical distribution represents the best case. This highlights that knowledge of the data distribution can substantially improve utility; in the absence of such knowledge,  we must fall back to the worst-case scenario.

Finally, these results provide concrete evidence for the importance of attack-based noise calibration. For identical values of $\varepsilon$, OUE offers significantly stronger protection against DRAs than GRR and SS. Hence, $\varepsilon$ alone does not capture the full privacy picture, and RAD is essential for understanding the actual privacy implications of a mechanism for users.

\subsubsection{Auditing Local DP with RAD}
\begin{figure*}[t] 
  \centering
  \begin{subfigure}[b]{0.33\textwidth}
    \centering
    \includegraphics[width=0.9\linewidth]{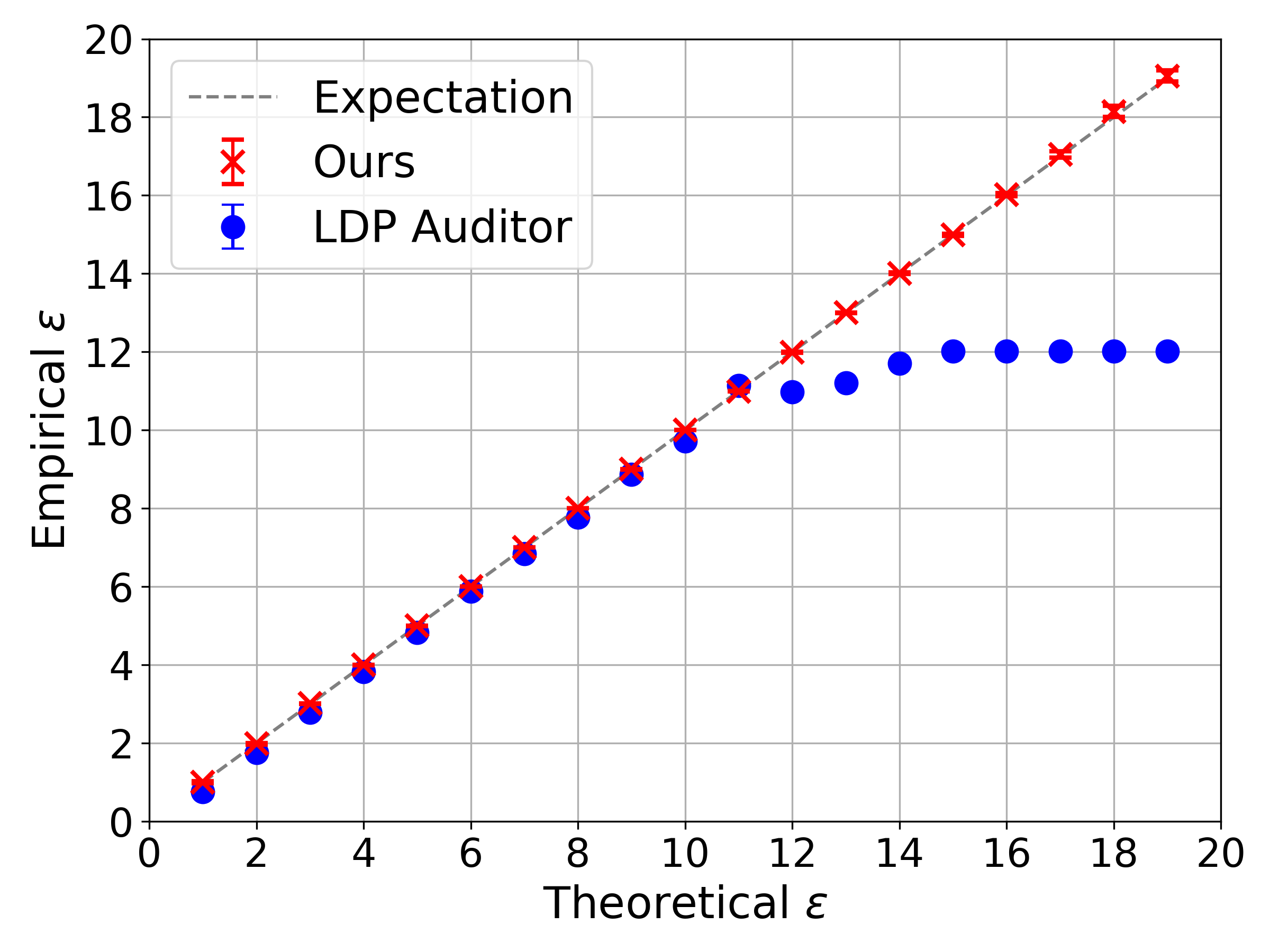}
    \caption{(GRR)}
    \label{fig:audit_grr_porto}
  \end{subfigure}\hfill
  \begin{subfigure}[b]{0.33\textwidth}
    \centering
    \includegraphics[width=0.9\linewidth]{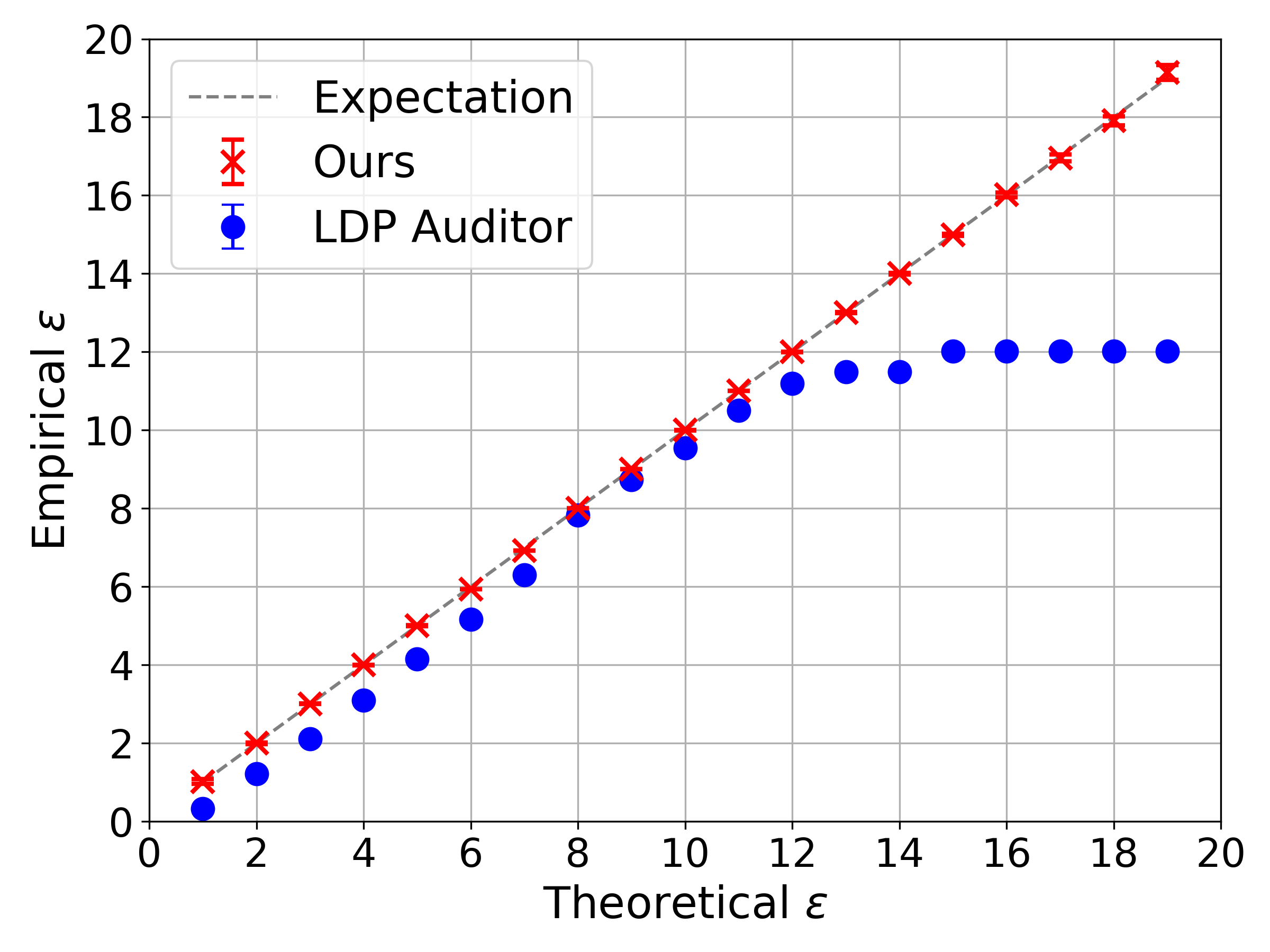}
    \caption{(SS)}
    \label{fig:audit_ss_porto}
  \end{subfigure}\hfill
  \begin{subfigure}[b]{0.33\textwidth}
    \centering
    \includegraphics[width=0.9\linewidth]{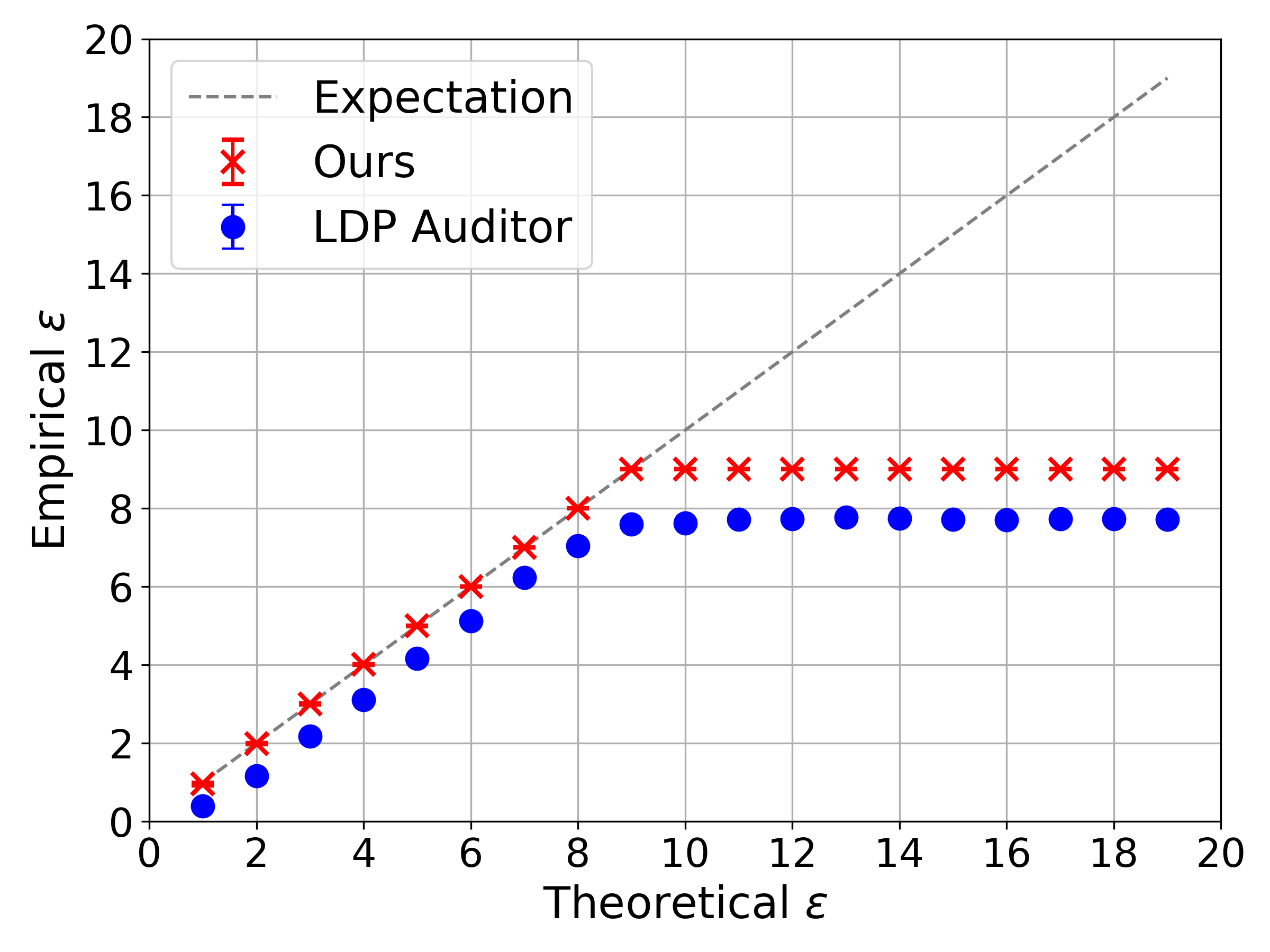}
    \caption{(OUE)}
    \label{fig:audit_oue_porto}
  \end{subfigure}
  \caption[LDP Audit Results from RAD-Based Auditing and \textsc{LDP Auditor} on the Porto Dataset]{LDP Audit results from RAD-based auditing and \textsc{LDP Auditor}~\cite{Arcolezi2024Revealing} on Porto dataset. Values along the diagonal indicate perfect accuracy; below it, privacy is overestimated; above it, underestimated.}
  \label{fig:audit_porto}
  
\end{figure*}
\begin{figure*}[t] 
  \centering
  \begin{subfigure}[b]{0.33\textwidth}
    \centering
    \includegraphics[width=0.9\linewidth]{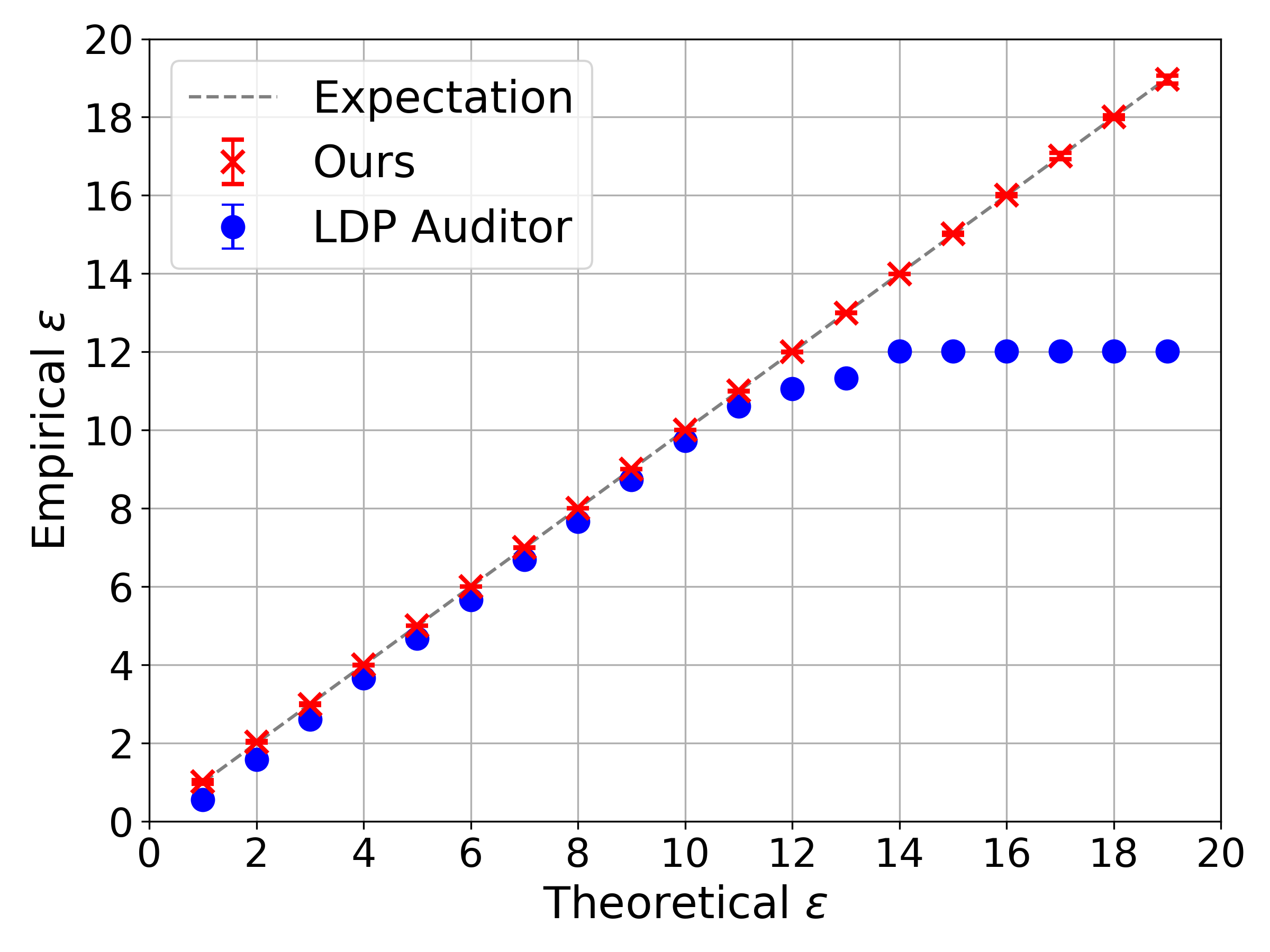}
    \caption{(GRR)}
    \label{fig:audit_grr_beijing}
  \end{subfigure}\hfill
  \begin{subfigure}[b]{0.33\textwidth}
    \centering
    \includegraphics[width=0.9\linewidth]{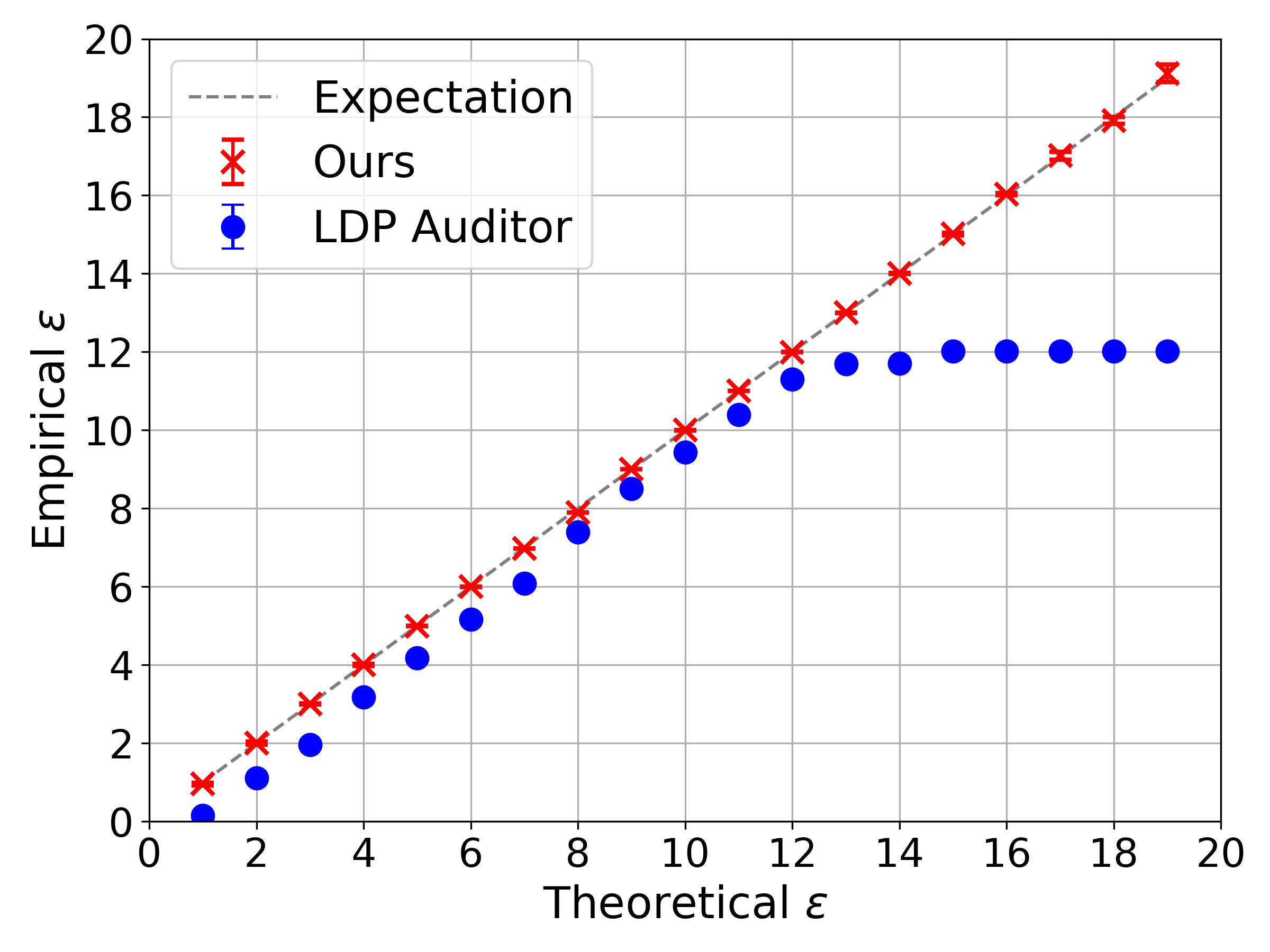}
    \caption{(SS)}
    \label{fig:audit_ss_beijing}
  \end{subfigure}\hfill
  \begin{subfigure}[b]{0.33\textwidth}
    \centering
    \includegraphics[width=0.9\linewidth]{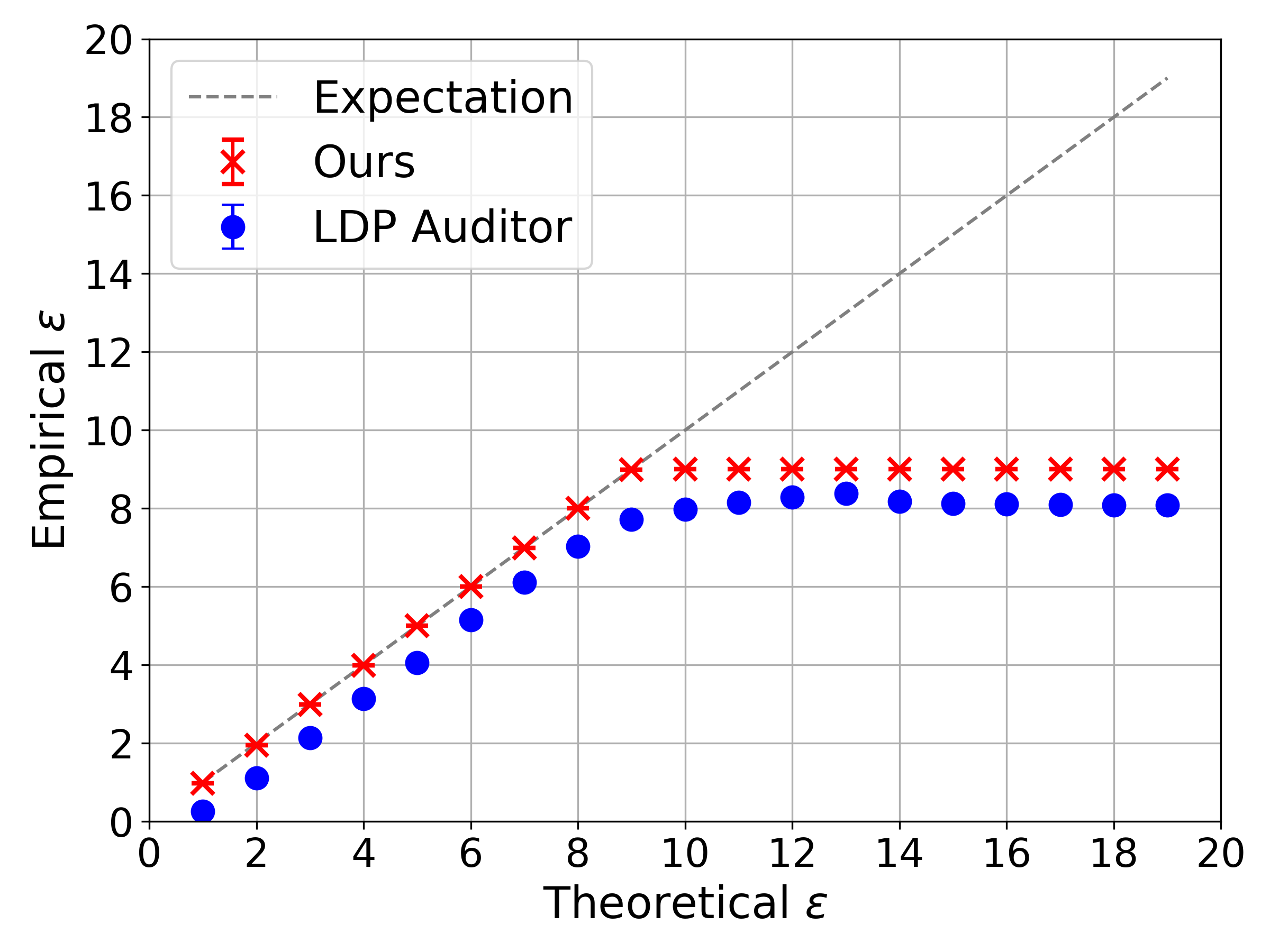}
    \caption{ (OUE)}
    \label{fig:audit_oue_beijing}
  \end{subfigure}
  \caption[LDP Audit Results from RAD-Based Auditing and \textsc{LDP Auditor} on the Geolife Dataset]{LDP Audit results from RAD-based auditing and \textsc{LDP Auditor}~\cite{Arcolezi2024Revealing} on Geolife dataset. Values along the diagonal indicate perfect accuracy; below it, privacy is overestimated; above it, underestimated.}
  \label{fig:audit_beijing}
  
\end{figure*}

\Cref{fig:audit_porto,fig:audit_beijing} show the results from our LDP auditing experiments using the Porto and Geolife datasets.
They compare the accuracy of predicting the actual $\varepsilon$ using our RAD-based auditing versus~\textsc{LDP auditor}. The closer the empirical $\varepsilon$ is to the theoretical value (diagonal line), the more accurate the auditing tool. Additionally, smaller standard deviations indicate greater stability of the method. 

For all tested mechanisms, our auditing approach improves over \textsc{LDP Auditor} for all $\varepsilon$ values. 
In particular, we see that the highest $\varepsilon$ \textsc{LDP Auditor} manages to estimate for both GRR and SS are capped around $\widetilde{\varepsilon} \approx 12.25$, hence preventing auditing of deployments with higher values. This limitation was already acknowledged by the authors of~\textsc{LDP Auditor}, as it stems from the intrinsic shortcomings of the Clopper-Pearson method underlying their approach~\cite{Arcolezi2024Revealing}. 
In contrast, the tightness of our RAD bound enables our auditing approach to accurately estimate empirical privacy budgets for the whole range, without such a limitation. 
Notably, for GRR and SS, our DP auditing yields near-perfect estimates for all epsilon values. 
For the OUE mechanism, our approach also outperforms \textsc{LDP Auditor}, however, the estimation accuracy declines at $\varepsilon\leq 9$. Note that this is an inherent limitation of OUE auditing as already mentioned in~\cite{Arcolezi2024Revealing}: as we prove in~\Cref{ex:oue}, $0$-RAD converges to $\frac{m-1}{2m}$ when $\varepsilon$ tends to infinity.  
Overall, these results support that the universal tightness of our theoretical bound \Cref{th:optimal_bound} enables precise and reliable auditing based on DRAs.

\section{Conclusion}
In this paper, we investigate the reconstruction risk that users incur when their data are processed by DP mechanisms.
Our results reveal that the current state-of-the-art risk metric, ReRo \cite{Balle2022Reconstructing}, drastically overestimates the actual leakage of DP mechanisms when target-specific public knowledge exists---leading to excessive utility loss if used as noise calibration methods. Crucially, we show that under real attacks, existing ReRo bounds are violated.

To address these limitations, we first introduce $\eta$-RAD, a novel metric consistent with attribute and membership advantage, that accurately captures the privacy risk imposed by any specific mechanism. 
More importantly, we advance the understanding and practical interpretation of DP guarantees by proving tight bounds that connect DP mechanisms with their risk, using RAD. Offering new insights and clarity beyond existing analyses, we establish (i) universally tight bounds when the attacker’s knowledge is specified, along with optimal strategies achieving them, (ii) closed-form bounds that remain valid regardless of auxiliary knowledge, and (iii) black-box upper bounds for settings with completely secret records.  Our theoretical and empirical evaluation---across private learning, DP aggregation and LDP settings---demonstrates not only the robustness of RAD as a risk measure, but also the significant impact of our bounds on improving DP noise calibration (proving better utility) and auditing in DP (broadening the scope and improving accuracy).

Overall, our work demonstrates that privacy risk depends on the mechanism's structure, not just its nominal privacy parameters, and provides both fundamental insight and practical tools for privacy risk assessment and calibration---enabling notable utility gains without increasing the effective privacy risk.

\section*{Acknowledgments}
  This work was funded by the Topic Engineering Secure Systems of the Helmholtz Association (HGF) and supported by KASTEL Security Research Labs, Karlsruhe, and Germany’s Excellence Strategy (EXC 2050/2 ‘CeTI’; ID 390696704).  
  H.H. Arcolezi has been partially supported by the French National Research Agency (ANR), under contracts: ``ANR-24-CE23-6239'' and ``ANR-23-IACL-0006''  

The authors thank Ana-María Crețu for her helpful comments and feedback on an earlier draft of this work, and Daniel Schadt for his support with the code implementation.

%\clearpage
\printbibliography

%% ----------------
%% |   Appendix   |
%% ----------------

\appendix
\section{Additional Proofs and Remarks}\label{ap:rad}
In this section, we provide the omitted details needed to complete the results presented in this work.

\subsection{Detailed Examples of RAD Computation}
Here, we provide the detailed computation of RAD for~\Cref{ex:oue_short,ex:ss_short,ex:laplace_short,ex:gaussian_short} using~\Cref{th:optimal_bound}.
\begin{example}\label{ex:oue}
    In  the optimal unary encoding (OUE) mechanism~\cite{Wang2017Locally} each user encodes its input $z\in\Z$ as a one-hot $m$-dimensional binary vector and perturbs each bit independently. For each position $i\in[m]$, the obfuscated vector $\theta$ is sampled such that $\Pr[\theta_i=1]=1/2$ if $i=z$, and $q=\frac{1}{\e^{\varepsilon}+1}$ otherwise. Denoting $p=1-q$ and $k_{\theta}=\#\{\theta_i=1\}$, for every $\theta$ such that $k_{\theta}\geq 1$, we have that 
\begin{gather}
    \Pr(\theta\mid z)=\begin{cases}
        P\equiv \frac{1}{2}q^{k_{\theta}-1}p^{m-k_{\theta}} &\text{ if } \theta_x=1\\
        Q\equiv \frac{1}{2}q^{k}p^{m-k_{\theta}-1} &\text{ if } \theta_x\neq1
    \end{cases}
\end{gather}
and $\Pr(\Vec{0}\mid z)=\frac{1}{2}p^{m-1}$.
Hence, $\Pr(\Vec{0})=\frac{1}{2}p^{m-1}$ and $w(\Vec{0},z)=0$ for all $z$. For all $\theta\neq\Vec{0}$ we obtain,
\begin{align*}
p(\theta)&=\frac{1}{2}q^{k_{\theta}-1}p^{m-k_{\theta}}\underbrace{(\sum_{z\colon \theta_x=1}\pi_x)}_{S_{\theta}} + \frac{1}{2}q^{k}p^{m-k_{\theta}-1}(1-\sum_{z\colon \theta_x=1}\pi_x)\\
\end{align*}
Note that $P-Q=\frac{p-q}{2}(q^{k_{\theta}-1}p^{m-k_{\theta}-1})\geq 0$.
Consequently,
\[
w(\theta,z)=\begin{cases}
    (P-Q)\,(1-S_{\theta})\geq 0& \text{ if } \theta_x=1\\
    (Q-P)S_{\theta}\,\leq 0& \text{otherwise.} 
\end{cases}
\]
Applying~\Cref{th:optimal_bound} for $a(z)=z$ we obtain,
\begin{align*}
    \eta\text{-}\mathrm{RAD} 
&\le \sum_{\theta} \sum_{z:\theta_x=1} (P-Q) \, (1-\sum_{\theta_x=1}\pi_x)\pi_x \\
&=\frac{p-q}{2}\sum_{\theta} \sum_{z:\theta_x=1} (q^{k_{\theta}-1}p^{m-k_{\theta}-1}) \, (1-\sum_{\theta_x=1}\pi_x)\pi_x\\
&= \frac{p-q}{2p} \sum_{k=1}^{m} q^{k-1} p^{m-k} \sum_{\theta : k_\theta = k} 
\sum_{\theta_x=1} \pi_x \Big(1 - \sum_{\theta_x=1} \pi_x \Big) \\
&= \frac{p-q}{2p} \sum_{k=1}^{m} q^{k-1} p^{m-k} \binom{m-2}{k-1} (1 - \kappa_\pi) \\
&= (1 - \kappa_\pi) \frac{p-q}{2p} \sum_{k=1}^{m} \binom{m-2}{k-1} q^{k-1} p^{m-k} \\
&= (1 - \kappa_\pi) \frac{p-q}{2p} \sum_{r=0}^{m-2} \binom{m-2}{r} q^r p^{m-1-r} 
\quad (\text{index change } r=k-1) \\
&= (1 - \kappa_\pi) \frac{p-q}{2p} \, p^{m-1} \sum_{r=0}^{m-2} \binom{m-2}{r} (q/p)^r \\
&= (1 - \kappa_\pi) \frac{p-q}{2p} \, p^{m-1} (1 + q/p)^{m-2} 
\quad (\text{binomial identity}) \\
&= (1 - \kappa_\pi) \frac{p-q}{2p} \, p^{m-1} (1/p)^{m-2} 
\quad (\text{since } p+q=1) \\
&= (1 - \kappa_\pi) \frac{p-q}{2} \\
&= \frac{1}{2} \frac{e^\varepsilon - 1}{e^\varepsilon + 1} (1 - \kappa_\pi) 
= \mathrm{TV}(\mathcal{M}) \, (1 - \kappa_\pi).
\end{align*}

When $aux=\{\varnothing\}$, we have
\[
\max_{z\in\Z}w(\theta,z)\pi_x=(P-Q)(1-\sum_{\theta_x=1}\pi_x)\max_{\theta_x=1}\pi_x.
\]
Hence,
\begin{align*}
    \eta\text{-}\mathrm{RAD} 
&\le \sum_{\theta} (P-Q) \, (1-\sum_{\theta_x=1}\pi_x)\max_{\theta_x=1}\pi_x \\
&=\frac{p-q}{2}\sum_{\theta}  (q^{k_{\theta}-1}p^{m-k_{\theta}-1}) \, (1-\sum_{\theta_x=1}\pi_x)\max_{\theta_x=1}\pi_x\\
\end{align*}
We order $\pi_1\leq\dots\leq\pi_m$. Then,
\begin{align*}
    \Theta_i=\{\theta\colon \max_x w(\theta,z)\pi_x=w(\theta,z_i)\}=\{\theta\colon \theta_i=1\wedge\theta_j=0\text{ for all }j>i\},
\end{align*}
and we can rewrite
\begin{align*}
    \eta\text{-}\mathrm{RAD}&
\leq \frac{p-q}{2p}\sum_{i=1}^m \pi_i \underbrace{\sum_{\theta\in\Theta_i}q^{k_{\theta}-1}p^{m-k_{\theta}}(1-\sum_{\theta_x=1}\pi_x)}_{A_i}
\end{align*}

For every $k = 1,\dots,i$, there are $\binom{i-1}{k-1}$ vectors $\theta \in \Theta_i$ such that $k_\theta=k$. Moreover, the sum $\sum_{z \in S} \pi_x$ over all sets $S$ of size $k$ containing $i$:

\[
\sum_{\substack{S \subseteq \{1,\dots,i\} \\ i \in S, |S|=k}} \sum_{z \in S} \pi_x
= \pi_i \binom{i-1}{k-1} + \sum_{z=1}^{i-1} \pi_x \binom{i-2}{k-2}.
\]

Hence,

\begin{align*}
A_i =& \sum_{k=1}^{i} q^{\,k-1} p^{\,m-k} \Bigg[ \binom{i-1}{k-1} (1-\pi_i) - \binom{i-2}{k-2} \sum_{z=1}^{i-1} \pi_x \Bigg]\\
=& (1-\pi_i)\sum_{k=1}^{i}\binom{i-1}{k-1} q^{\,k-1} p^{\,m-k} -\left(\sum_{z=1}^{i-1}\pi_x\right)\sum_{k=1}^{i}\binom{i-2}{k-2} q^{\,k-1} p^{\,m-k} \\
=&(1-\pi_i)p^{m-i}\sum_{r=0}^{i-1}\binom{i-1}{r} q^{\,r} p^{\,i-1-r} -\left(\sum_{z=1}^{i-1}\pi_x\right)qp^{m-i}\sum_{j=0}^{i-2}\binom{i-2}{j} q^{\,j} p^{(i-2-j)}\\
=&(1-\pi_i)p^{m-i}+qp^{m-i}\left(\sum_{z=1}^{i-1}\pi_x\right).
\end{align*}
since for $k=1$, $\binom{i-2}{k-1}=0$, so we can start in $2$, i.e., $k=j+2$. Hence,
\[
A_i = p^{\,m-i} \Big[ (1-\pi_i) - q \sum_{z=1}^{i-1} \pi_x \Big], \qquad i=1,\dots,m.
\]
so
\[
\eta\text{-}\mathrm{RAD}\leq \frac{p-q}{2p}\left(\sum_{i=1}^m p^{m-i}\pi_i(1-\pi_i) -q\sum_{i=1}^m p^{m-i}\pi_i\sum_{z=1}^{i-1}\pi_x\right)
\]
For instance, $\pi_i=\frac{1}{m}$
\begin{align*}
    \eta\text{-}\mathrm{RAD}&\leq \frac{p-q}{2p}\sum_{i=1}^m p^{m-i}\pi_i(1-\pi_i) -q\sum_{i=1}^m p^{m-i}\pi_i\sum_{z=1}^{i-1}\pi_x\\
    &=\frac{p-q}{2mp}\left(\frac{m-1}{m}\sum_{i=1}^m p^{m-i}-q\frac{1}{m}\sum_{i=1}^m p^{m-i}(1-i)\right)\\
&= \frac{p-q}{2p}\left[\frac{1}{m}\Big(1-\frac{1}{m}\Big)\frac{p^m-1}{p-1}
- q\cdot\frac{1}{m^2}\cdot\frac{p^m-1-m(p-1)}{(p-1)^2}\right] \\[2mm]
&\overset{p=1-q}{=} \frac{1-2q}{2(1-q)}\left[-\frac{m-1}{m^2}\cdot\frac{(1-q)^m-1}{q}
- \frac{1}{m^2}\cdot\frac{(1-q)^m-1+mq}{q}\right] \\[1mm]
&= \frac{1-2q}{2(1-q)}\left(-\frac{1}{m q}\big((1-q)^m-1+q\big)\right) \\[1mm]
&= \frac{2q-1}{2(1-q)}\cdot \frac{(1-q)^m-1+q}{m q} \\[1mm]
&\overset{q=1-p}{=} \frac{1-2p}{2(1-p)p}\cdot \frac{p^m-p}{m} 
= \frac{1-2p}{2(1-p)}\cdot \frac{p^{m-1}-1}{m} \\[1mm]
&= \frac{(2p-1)\big(1-p^{\,m-1}\big)}{2m(1-p)}.
\end{align*}
Note that
\[
\lim_{\varepsilon\to \infty}\frac{\e^{\varepsilon}-1}{2m}\left(1-\left(\frac{\e^{\varepsilon}}{1+\e^{\varepsilon}}\right)^{\left(m-1\right)}\right) \;=\; \frac{m-1}{2m},
\]
hence even if we keep reducing the noise (increasing $\varepsilon$), the attacker's advantage is limited.
\end{example}
\begin{example}[Subset Selection, $aux=\{\varnothing\}$]\label{ex:ss}
In the subset selection mechanism (SS)~\cite{Min2018Optimal} users report a subset $\theta \subseteq \Z=\{z_1,\dots,z_m\}$ containing their true value $z$ with probability $p = \frac{\omega \e^\varepsilon}{\omega \e^\varepsilon + m - \omega}$, where $\omega = |\theta| = \max\left(1, \left\lfloor \frac{m}{\e^\varepsilon + 1} \right\rfloor \right)$. The subset is completed by sampling uniformly from $\Z \setminus \{z\}$. 

Note that, given $A=\binom{m-1}{\omega-1}$ and $B=\binom{m-1}{\omega}$, 
\begin{gather}
    \Pr_{\M}(\theta\mid z)=\begin{cases}
        \frac{p}{A} &\text{ if }z\in\theta\\
        \frac{1-p}{B} &\text{ if }z\notin\theta\\
    \end{cases}
\end{gather}
Since $|\Theta|=\binom{m}{\omega}$ we have that, according to~\Cref{eq:no_auz_uni}, for $\pi=U[m]$,
\begin{gather}
    \text{0-RAD}\leq\frac{1}{m}(\sum_{\theta\in\Theta}\max_{z}p_{\M}(\theta\mid z)-1)\\
    =\frac{1}{m}\binom{m}{\omega}\frac{p}{A}=\frac{m}{m \omega}p-\frac{1}{m}=\frac{pm-\omega}{m\omega}.
\end{gather}

\end{example}
\begin{example}[Gaussian mechanism and $aux=\{\varnothing\}$]\label{ex:gauss}
The Gaussian mechanism adds Gaussian noise $\mathcal{N}(0,\sigma)$ the query value $q(D)\in\R$~\cite{balle2018gaussian}.
If $\Z=\{z_1,\dots,z_m\}$ is uniformly distributed and $\Delta q=1$, 
\begin{equation*}
     z\in \argmax_{j}w(\theta,z_j)\pi_j \Leftrightarrow z\in \argmax_{j}w(\theta,z_j).
\end{equation*}
Hence, applying \Cref{eq:no_auz_uni} we obtain 
\begin{gather}
    0\text{-RAD}\leq \frac{1
    }{m}\sum_{i=1}^{m} \left(\Pr_{\M}(\Theta_{i}\mid z_i)-\Pr_{\M}(\Theta_{i})\right)=\frac{1
    }{m}\sum_{i=1}^{m} \left(\Pr_{\M}(\Theta_{i}\mid z_i)-1\right)
\end{gather}
Note that for each $z$, $\Pr_{\M}(\theta\mid z)=\Pr_{\M}(\theta\mid q(D_{z}))$. Since $D_{-}$ is fixed, $q(D_{z})$ is completely determined by $z$, hence we use the abuse of notation $q(D_{z})\equiv z$.
We want to compute $\Pr_{\M}(\Theta_{i}\mid z_i)$ for $i\in[m]$. Without loss of generality we re-order $
z_1 < z_2 < \cdots < z_n,
$
and define the gaps
$
\Delta_i := z_{i+1}-z_i
$.
For fixed $\theta$, the maximizing density corresponds to the $z_i$ closest to $\theta$.  
Thus $\mathbb{R}$ is partitioned into Voronoi intervals:
\begin{align}
\Theta_1 &= \big(-\infty,\tfrac{z_1+z_2}{2}\big],\\
\Theta_i &= \big[\tfrac{z_{i-1}+z_i}{2}, \tfrac{z_i+z_{i+1}}{2}\big], \quad 2\le i\le n-1,\\
\Theta_n &= \big[\tfrac{z_{n-1}+z_n}{2},\infty).
\end{align}
On $\Theta_i$, the maximizer is $z_i$. 
Let $\Phi$ denote the standard normal CDF and $\varphi$ its density function. Then, for $i=1$
\begin{align*}
\Pr_{\M}(\Theta_1\mid z_1)=\int_{\Theta_1} \varphi_\sigma(\theta-z_1)\,\diff\theta
&= \Phi\!\Big(\tfrac{(z_1+z_2)/2 - z_1}{\sigma}\Big)=  \Phi\!\Big(\tfrac{\Delta_1}{2\sigma}\Big).
\end{align*}
For $i=m$
\begin{align*}
\Pr_{\M}(\Theta_m\mid z_m)=\int_{\Theta_m} \varphi_\sigma(\theta-z_m)\,\diff\theta
&= 1 - \Phi\!\Big(\tfrac{(z_{m-1}+z_m)/2 - z_m}{\sigma}\Big) =\Phi\!\Big(\tfrac{\Delta_{n-1}}{2\sigma}\Big).
\end{align*}
Finally, for $2\le i\le m-1$,
\begin{align*}
\Pr_{\M}(\Theta_i\mid z_i)=\int_{\Theta_i} \varphi_\sigma(\theta-z_i)\,\diff\theta
&= \Phi\!\Big(\tfrac{(z_i+z_{i+1})/2 - z_i}{\sigma}\Big)
   - \Phi\!\Big(\tfrac{(z_{i-1}+z_i)/2 - z_i}{\sigma}\Big)\\
&= \Phi\!\Big(\tfrac{\Delta_i}{2\sigma}\Big)-\Phi\!\Big(-\tfrac{\Delta_{i-1}}{2\sigma}\Big)\\[6pt]
&= \Phi\!\Big(\tfrac{\Delta_i}{2\sigma}\Big)+\Phi\!\Big(\tfrac{\Delta_{i-1}}{2\sigma}\Big)-1,
\end{align*}
using $\Phi(-z)=1-\Phi(z)$. Therefore
\begin{align*}
\sum_{i=1}^{m}\Pr_{\M}(\Theta_{i}\mid z_i)&= \Phi\!\Big(\tfrac{\Delta_1}{2\sigma}\Big)
  + \sum_{i=2}^{m-1}\Big(\Phi\!\Big(\tfrac{\Delta_i}{2\sigma}\Big)+\Phi\!\Big(\tfrac{\Delta_{i-1}}{2\sigma}\Big)-1\Big)
  + \Phi\!\Big(\tfrac{\Delta_{m-1}}{2\sigma}\Big)\\
  &=2\sum_{j=1}^{m-1} \Phi\!\Big(\frac{\Delta_j}{2\sigma}\Big)\;-\;(m-2),
\end{align*}
since each $\Delta_j$ appears exactly twice in the sum (once from its left neighbor, once from its right). Hence,
\begin{align}
    0\text{-RAD}&\leq \frac{2}{m}\sum_{j=1}^{m-1} \Phi\!\Big(\frac{\Delta_j}{2\sigma}\Big)\;-\frac{m-1}{m}\\
    &\leq \frac{2(m-1)}{m}\Phi\!\Big(\frac{1}{(m-1)}\sum_{j=1}^{m-1}\frac{\Delta_j}{2\sigma}\Big)\;-\frac{m-1}{m}\label{eq:jensen_normal}\\
    &\leq \frac{m-1}{m} \left(2\Phi\!\Big(\frac{1}{2\sigma(m-1)}\Big)\;-1\right) \label{eq:sensitivity}. 
\end{align}
Where, \Cref{eq:jensen_normal} follows since $\Delta_j\geq 0$, hence $\Phi$ concave, and we can apply Jensen's inequality, and~\Cref{eq:sensitivity} since $\Delta q=1$ therefore, $\sum_{j=1}^{m-1}\Delta_j=\Delta q=1$.
\end{example}
\begin{example}[Laplace Mechanism and $aux=\{\varnothing\}$] The Laplace mechanism adds Laplace noise with scale $b=\Delta q/\varepsilon$ to the query value $q(D)\in\R$~\cite{Dwork2014Algorithmic}.
If $\Z=\{z_1,\dots,z_m\}$ if uniformly distributed and $\Delta q=1$, analogously to~\Cref{ex:gauss},
\begin{equation*}
     z\in \argmax_{j}w(\theta,z_j)\pi_j \Leftrightarrow z\in \argmax_{j}w(\theta,z_j).
\end{equation*}
Hence, applying \Cref{eq:no_auz_uni} we obtain 
\begin{gather}
    0\text{-RAD}\leq \frac{1
    }{m}\sum_{i=1}^{m} \left(\Pr_{\M}(\Theta_{i}\mid z_i)-p_{\M}(\Theta_{i})\right)=\frac{1
    }{m}\sum_{i=1}^{m} \left(\Pr_{\M}(\Theta_{i}\mid z_i)-1\right)
\end{gather}
Analogously to the Gaussian case, we use the abuse of notation $z\equiv q(D_{z})$ . We want to compute $\Pr_{\M}(\Theta_{i}\mid z_i)$ for $i\in[m]$. Without loss of generality we re-order $
z_1 < z_2 < \cdots < z_n,
$
and define the gaps
$
\Delta_i := z_{i+1}-z_i
$.
For fixed $\theta$, the maximizing density corresponds to the $z_i$ closest to $\theta$.  
Thus $\mathbb{R}$ is again partitioned into Voronoi intervals from~\Cref{ex:gauss}.  Given the Laplace distribution CDF
\begin{align}F_i(z)&=\begin{cases}{\frac {1}{2}}\exp \left({\frac {z-z_i }{b}}\right)&{\mbox{if }}z<z_i \\1-{\frac {1}{2}}\exp \left(-{\frac {z-z_i }{b}}\right)&{\mbox{if }}z\geq z_i\end{cases},
\end{align}
for $i=1$,
\begin{align*}
\Pr_{\M}(\Theta_1\mid z_1)=F(\frac{z_1+z_2}{2})-F(-\infty)
= 1-{\frac {1}{2}}\exp \left(-{\varepsilon \frac{\Delta_1}{2} }\right),
\end{align*}
for $i=m$,
\begin{align*}
\Pr_{\M}(\Theta_m\mid z_m)=1-F(\frac{z_m+z_{m-1}}{2})
= 1-{\frac {1}{2}}\exp \left(-{\varepsilon \frac{\Delta_{m-1}}{2} }\right),
\end{align*}
and for the reminder $2\leq i<m$:
\begin{align*}
\Pr_{\M}(\Theta_m\mid z_m)=&F\left(\frac{z_i+z_{i+1}}{2}\right)-F\left(\frac{z_{i-1}+z_{i}}{2}\right)\\
=& 1-\frac {1}{2}\exp \left(-{\varepsilon \frac{\Delta_i}{2} }\right)+{\frac {1}{2}}\exp \left(-{\varepsilon \frac{\Delta_{i-1}}{2} }\right).
\end{align*}
Hence,
\begin{gather*}
    \sum_{i=1}^{m}\Pr_{\M}(\Theta_{i}\mid z_i)= m -\frac{1}{2}\left( \e^{-\frac{\varepsilon\Delta_1}{2}}+\e^{-\frac{\varepsilon\Delta_{m-1}}{2}}+\sum_{i=2}^{m-1}\e^{-\frac{\varepsilon\Delta_i}{2}}+\e^{-\frac{\varepsilon\Delta_{i-1}}{2}}
    \right)
    =m-\sum_{j=1}^{m-1}\e^{-\frac{\varepsilon\Delta_{j}}{2}}
\end{gather*}
since each $\Delta_j$ appears exactly twice in the sum (once from its left neighbor, once from its right). Hence,
\begin{align}
    0\text{-RAD}\leq& \frac{1}{m}\left(m-1-\sum_{j=1}^{m-1}\e^{-\frac{\varepsilon\Delta_{j}}{2}}\right)\\
    \leq& \frac{m-1}{m}-\frac{1}{m}\sum_{j=1}^{m-1}\e^{-\frac{\varepsilon\Delta_{j}}{2}}\\
    \leq&\frac{m-1}{m}-\frac{m-1}{m}\e^{-\frac{1}{m-1}\sum_{j}\frac{\varepsilon\Delta_{j}}{2}}\label{eq:jensen_laplace}\\
    &\leq\frac{m-1}{m}\left(1-\e^{-\frac{\varepsilon}{2(m-1)}}\right) \label{eq:sensitivity_lap}. 
\end{align}
Where, \Cref{eq:jensen_laplace} follows since $\Delta_j\geq 0$, hence we can apply Jensen's inequality, and~\Cref{eq:sensitivity_lap} since $\Delta q=1$ therefore, $\sum_{j=1}^{m-1}\Delta_j=\Delta q=1$.
\end{example}

\subsection{Numerical Approximation of Theorem~\ref{th:optimal_bound}}\label{ap:montecarlo}

We focus on approximating~\Cref{th:optimal_bound} in the continuous data domain $\Z = [a,b] \subseteq \R$. For simplicity, we consider the $\ell_1$ metric, i.e., $\ell(z,z') = |z-z'|$, so that
\[
S_{\eta}(z) = [\max(z-\eta, a), \min(z+\eta, b)] = [l(z), u(z)],
\]
and we set $aux = \{\varnothing\}$ (otherwise one could directly use the closed-form in~\Cref{th:dp_implies_aux-urero}).

We assume $\M \colon \Z \to \D(\Z)$ is a local differential privacy (LDP) mechanism. Given $D_{-}$, any global mechanism can be reduced to its local version $\M(z)=\left(\M(q(D_z)) - q(D_{-}\right)$, where $q$ is the query. Hence, our goal is to approximate
\begin{align*}
\eta\text{-}\mathrm{RAD} 
&\le \int_{a}^{b} \max_{x \in \mathcal Z} 
\int_{l(x)}^{u(x)} 
\big[p_{\mathcal M}(\theta \mid z) - p_{\mathcal M}(\theta)\big] \pi_z \, dz \, d\theta,
\end{align*}
where $\pi_z$ denotes the density function evaluated at $z$.

\textit{\textbf{Outer integral: Monte Carlo approximation.}}  
The integral with respect to $\theta$ can be approximated using Monte Carlo integration, which is unbiased and has a controlled error via Hoeffding’s inequality~\cite{hoeffding1963probability}. Hence, we can estimate RAD sampling $\theta_1, \dots, \theta_{N_\theta} \sim U[a,b]$ and computing
\[
\tilde{\gamma}=\frac{b-a}{N_\theta} \sum_{i=1}^{N_\theta} \underbrace{\max_{z_\theta \in \mathcal Z} \int_{l(x)}^{u(x)} 
\big[p_{\mathcal M}(\theta_i \mid z) - p_{\mathcal M}(\theta_i)\big] \pi_z \, dz}_{f(\theta_i)}.
\]
Given $\gamma=\eta$-RAD and $f(\theta)\in[-M,M]$ over $[a,b]$, applying Hoeffding’s inequality~\cite{hoeffding1963probability}, we have
\[
\Pr[|\tilde{\gamma}-\gamma| > t ] \le 2 \exp\Big(-\frac{ N_{\theta} t^2}{M^2(b-a)^2}\Big).
\]
Note that $f(\theta)\in[-M,M]$ trivially with $M=\max_{z,\theta}p_{\M}(\theta\mid z)$.
For instance, for the exponential mechanism with $u=-|z-\theta|$, we have
\[
M = \max_{z,\theta} p_{\M}(\theta \mid z) = \frac{1}{s(1-\exp(-1/s))}, \quad \text{where } s = \frac{2\Delta}{\varepsilon}.
\]

\textit{\textbf{Inner integral: Nested Monte Carlo approximation.} } 
If the integral with respect to $z$ also requires a numerical approximation, we proceed with a nested Monte Carlo approach. To avoid bias, we first find $z_{\theta}$ analytically and then perform a nested Monte Carlo procedure.

Denote by $z_\theta$ any point achieving the maximum:
\begin{align*}
 \max_{x\in \mathcal Z} \int_{l(x)}^{u(x)} 
\big[p_{\mathcal M}(\theta \mid z) - p_{\mathcal M}(\theta)\big] \pi_z \, dz \, d\theta. \\
\end{align*}

To compute the $z_{\theta}$ for each $\theta$, we use the Leibniz rule. For $\eta \le (b-a)/2$:
\[
S_{\eta}(x) = [l(x),u(x)]=
\begin{cases}
[a, x+\eta] & x \in [a, a+\eta],\\
[x-\eta, x+\eta] & x \in [a+\eta, b-\eta],\\
[x-\eta, b] & x \in [b-\eta, b].
\end{cases}
\]

Let $g_\theta(x) = \int_{l(x)}^{u(x)} \left(p_{\mathcal M}(\theta \mid z)-p_{\M}(\theta)\right)\pi_z \, dz$. Then
\begin{itemize}
    \item $x \in [a, a+\eta]$: $g_\theta'(x) =( p_{\mathcal M}(\theta \mid x+\eta)-p_{\M}(\theta))\, \pi(x+\eta)$,
    \item $x \in [a+\eta, b-\eta]$: $g_\theta'(x) =( p_{\mathcal M}(\theta \mid x+\eta)-p_{\M}(\theta))\pi(x+\eta) - (p_{\mathcal M}(\theta \mid x-\eta)-p_{\M}(\theta))\pi(x-\eta)$,
    \item $x \in [b-\eta, b]$: $g_\theta'(x) = -(p_{\mathcal M}(\theta \mid x-\eta)-p_{\M}(\theta))\pi(x-\eta)$.
\end{itemize}
The analogous follows for $(b-a)/2\leq \eta\leq b$, but in such case 
\[
S_{\eta}(x) = [l(x),u(x)]=
\begin{cases}
[a, x+\eta] & x \in [a, b-\eta],\\
[a, b] & x \in [b-\eta, a+\eta],\\
[x-\eta, b] & x \in [a+\eta, b].
\end{cases}
\]

Note that if $p_{\mathcal M}(\theta)=y$ is known, then $g'_\theta(x)$ can be evaluated explicitly. Consequently, $z_\theta$ can either be computed exactly or reduced to a finite set of candidate points. Since $g$ is continuous and $g'_\theta$ is explicit, this set contains at most $k \le 4$ candidates: the endpoints of the domain intervals and the critical points where $g'_\theta(x)=0$, with $g'_\theta(x)\le 0$ immediately before and $g'_\theta(x)\ge 0$ immediately after.

However, $p_{\M}(\theta)$ is itself an integral, $\E_{Z\sim\pi}[p_{\M}(\theta\mid Z)]$. In many cases, this integral may not have a close form, consequently, we also need to numerically approximate it. However, thanks to the Lipchitz properties of $g$ respect to $p_{\M}(\theta)$ we can upper-bound the error as we will see in~\Cref{eq:MC_g}.

First, we write 
\[
g(\theta,x,y)=\int_{a}^b \boldsymbol{1}_{\{|z-x|\leq\eta\}}(p_{\M}(\theta\mid z)-y)\pi_z\diff z.
\]
Note that $g$ is Lipchitz with respect to $y$:
\begin{gather}
    |g(\theta,x,\Tilde{y})-g(\theta,x,y)|=\int_{a}^b \boldsymbol{1}_{\{|z-x|\leq\eta\}}|\tilde{y}-y|\pi_z \diff z \\
    =|\tilde{y}-y|\int_{a}^b \boldsymbol{1}_{\{|z-x|\leq\eta\}}\pi_z \diff z \leq \kappa_{\pi,\eta}^{+}|\tilde{y}-y|.\label{eq:lipschitz}
\end{gather}
Importantly, it works uniformly for every $x\in[a,b]$. Hence, we can bound the error when estimating the maximum:
\begin{gather}
    |\max_{x}g(\theta,x,\tilde{y})-\max_{x}g(\theta,x,y)|\leq |\max_{x}(g(\theta,x,\tilde{y})-g(\theta,x,y))|\leq \kappa^{+}|\Tilde{y}-y|.
\end{gather}
Given $\theta$ and $y$ fixed, we denote $\mathcal{K}_{y,\theta}$ the candidates to maximum deduced using Leibniz. As mentioned $|\mathcal{K}|\leq 4$.

Now, we have three Monte Carlo procedures nested. From each of them we control de error, hence using the triangular inequality we can give a concentration upper-bound for the whole process.

We use the following notation (\Cref{tab:nested-mc}):

\begin{table}[H]
    \centering
    \resizebox{!}{0.12\linewidth}{
    \renewcommand{\arraystretch}{1.5}
    \begin{tabular}{p{0.5\linewidth} p{0.7\linewidth}}
        \toprule
        \textbf{Target} & \textbf{Estimator} \\
        \midrule
        $
        y = p(\theta)
        = \int_a^b p_{\mathcal{M}}(\theta \mid z)\,\pi_z\,\diff z
        $
        &
        $
        \hat{y} = \hat{p}(\theta)
        = \frac{1}{N_p} \sum_{i=1}^{N_p} p_{\mathcal{M}}(\theta \mid z_i),
        \quad
        z_1,\dots,z_{N_p} \sim \pi
        $
        \\[0.5em]
        $
        g(\theta,x,y)
        = \int_a^b
        \boldsymbol{1}_{\{|z-x|\le \eta\}}
        \bigl(p_{\mathcal{M}}(\theta \mid z) - y\bigr)\,
        \pi_z\,\diff z
        $
        &
        $
        \hat{g}(\theta,x,y)
        = \frac{1}{N_z} \sum_{i=1}^{N_z}
        \boldsymbol{1}_{\{|z_i-x|\le \eta\}}
        \bigl(p_{\mathcal{M}}(\theta \mid z_i) - y\bigr),
        $  $
        z_1,\dots,z_{N_z} \sim \pi
        $
        \\[0.5em]
        $
        f(\theta)
        = \max_{x \in \mathcal{K}_{\theta,p(\theta)}}
        g\bigl(\theta,x,p(\theta)\bigr)
        $
        &
        $
        \hat{f}(\theta)
        = \max_{x \in \mathcal{K}_{\theta,\hat p(\theta)}}
        \hat{g}\bigl(\theta,x,\hat{p}(\theta)\bigr)
        $
        \\[0.5em]
        $
        \gamma
        = \int_a^b f(\theta)\,\diff\theta
        $
        &
        $
        \tilde{\gamma}
        = \frac{b-a}{N_\theta}
        \sum_{i=1}^{N_\theta} \hat{f}(\theta_i),
        \quad
        \theta_1,\dots,\theta_{N_\theta} \sim \mathrm{U}(a,b)
        $
        \\
        \bottomrule
    \end{tabular}
    }
    \caption{Nested Monte Carlo estimation notation.}
    \label{tab:nested-mc}
\end{table}

For simplicity we consider $[a,b]=[0,1]$, and we use the union bound to simplify the concentration bound of our problem:

\begin{align}
\Pr\big(|\hat{\gamma}-\gamma|\ge t\big)
&=
\Pr\Bigg(
\Big|
\frac{1}{N_{\theta}}\sum_{i=1}^{N_{\theta}}\hat f(\theta_i)
-
\E_{\theta}[f(\theta)]
\Big|
\ge t
\Bigg) \nonumber\\
&=
\Pr\Bigg(
\Big|
\frac{1}{N_{\theta}}\sum_{i=1}^{N_{\theta}}
\max_{x\in\K_{i,\hat y}}\hat g(\theta_i,x,\hat y)
-
\E_{\theta}\big[\max_{x} g(\theta,x,y)\big]
\Big|
\ge t
\Bigg) \nonumber\\
&=
\Pr\Bigg(
\Big|
\frac{1}{N_{\theta}}\sum_{i=1}^{N_{\theta}}
\Big(
\max_{x\in\K_{i,\hat y}}\hat g(\theta_i,x,\hat y)
-
\max_{x\in\K_{i,y}} g(\theta_i,x,y)
\Big) \nonumber\\
&\hspace{1.5cm}
+
\frac{1}{N_{\theta}}\sum_{i=1}^{N_{\theta}} f(\theta_i)
-
\E_{\theta}[f(\theta)]
\Big|
\ge t
\Bigg) \nonumber\\
&\le
\Pr\Bigg(
\Big|
\frac{1}{N_{\theta}}\sum_{i=1}^{N_{\theta}} f(\theta_i)
-
\E_{\theta}[f(\theta)]
 \nonumber\\
&\hspace{1.5cm}
+
\frac{1}{N_{\theta}}\sum_{i=1}^{N_{\theta}}
\big(
\max_{x\in\K_{i,\hat y}}\hat g(\theta_i,x,\hat y)
-
\max_{x\in\K_{i, y}} g(\theta_i,x,y)
\big)
\Big|
\ge t
\Bigg)\notag\\
&\le
\Pr\Bigg(
\Big|
\frac{1}{N_{\theta}}\sum_{i=1}^{N_{\theta}} f(\theta_i)
-
\E_{\theta}[f(\theta)]
 \nonumber\\
&\hspace{1.5cm}
+
\frac{1}{N_{\theta}}\sum_{i=1}^{N_{\theta}}
\max_{x\in\K_{i,\hat y}}\hat g(\theta_i,x,\hat y)
-
\max_{x\in\K_{\theta,\hat y}} g(\theta,x,\hat y)
\notag\\
&\hspace{1.5cm}
+ \frac{1}{N_{\theta}}\sum_{i=1}^{N_{\theta}} \max_{x\in\K_{\theta,\hat y}} g(\theta,x,\hat y)
-
\max_{x\in\K_{\theta, y}} g(\theta,x,y)
\big)
\Big|
\ge t
\Bigg)\notag\\
\leq&
\Pr\Bigg(\Big|
\frac{1}{N_{\theta}}\sum_{i=1}^{N_{\theta}}f(\theta_i)
-
\E_{\theta}[f(\theta)]
\Big|\geq \frac{t}{3}\Bigg)\label{eq:MC_thetas}\\
&\hspace{1.5cm}
+
\Pr\Bigg(\Big|
\frac{1}{N_{\theta}}\sum_{i=1}^{N_{\theta}}
\max_{x\in\K_{i,\hat y}}\hat g(\theta_i,x,\hat y)
-
\max_{x\in\K_{\theta,\hat y}} g(\theta,x,\hat y)
\Big|\geq \frac{t}{3}\Bigg)\label{eq:MC_g}\\
&\hspace{1.5cm}
+ 
\Pr\Bigg(\Big|\frac{1}{N_{\theta}}\sum_{i=1}^{N_{\theta}} \max_{x\in\K_{\theta,\hat y}} g(\theta,x,\hat y)
-
\max_{x\in\K_{\theta, y}} g(\theta,x,y)
\big)
\Big|\geq \frac{t}{3}\Bigg)\label{eq:MC_y}
\end{align}

So we proceed to bound~\Cref{eq:MC_thetas,eq:MC_g,eq:MC_y} individually.

\textbf{\Cref{eq:MC_y}:} Assuming $\max_{z,\theta}p_{\M}(\theta\mid z)=M$, hence $|p(\theta)|\in[0,M]$ and using that $g$ is Lipchitz with respect to $y$, we obtain

\begin{align*}
\Pr\Big(\Big|\frac{1}{N_{\theta}}\sum_{i=1}^{N_{\theta}} \max_{x} g(\theta_i,x,\hat y)
-
\max_{x} g(\theta_i,x,y)
\Big|\geq \frac{t}{3}\Big)
&\leq 
 \Pr\Big(\frac{1}{N_{\theta}}\sum_{i=1}^{N_{\theta}} \kappa^{+}|\hat p(\theta_i)-p(\theta_i)|
\geq \frac{t}{3}\Big)\\
&\leq 
 \Pr\Big(\frac{1}{N_{\theta}}\sum_{i=1}^{N_{\theta}}|\hat p(\theta_i)-p(\theta_i)|
\geq \frac{t}{3\kappa^{+}}\Big)\\
\end{align*}
Given $Z_i=|\hat p(\theta_i)-p(\theta_i)|$, since $\hat p(\theta_i)$ is the Monte Carlo approximation of $p(\theta)$, we have
\begin{gather*}
    \E[Z_i]\leq \sqrt{\E(Z_i^2)}=\sqrt{\mathrm{Var}(\hat y)}\leq\frac{M}{2\sqrt{N}_p}
\end{gather*}
where last inequality follows from Popoviciu's inequality applied to $p(\theta_i)\in[0,M]$ and Monte Carlo estimation variance formula. Finally, applying Hoeffding's inequality for $Z_i$ we obtain:
\begin{gather}
\Pr\Big( \sum_{i=1}^{N_\theta} Z_i - \mathbb{E}[Z_i] \ge t \Big) 
\le \exp\Big(-\frac{2 t^2}{N_\theta}\Big) \\
\Pr\Big( \frac{1}{N_\theta} \sum_{i=1}^{N_\theta} Z_i - \frac{1}{N_\theta}\sum_{i=1}^{N_\theta} \mathbb{E}[Z_i] \ge t \Big) 
\le \exp\Big(-2 N_\theta t^2\Big) \\
\Pr\Big( \frac{1}{N_\theta} \sum_{i=1}^{N_\theta} Z_i \ge t + \mathbb{E}[ Z] \Big) 
\le \exp\Big(-2 N_\theta t^2\Big) \\
\Pr\Big( \frac{1}{N_\theta} \sum_{i=1}^{N_\theta} Z_i \ge t \Big) 
\le \exp\Big(-2 N_\theta \big(t - \mathbb{E}[Z]\big)^2\Big) 
\le \exp\Big(-2 N_\theta \big(t - \frac{M}{2 \sqrt{N_p}}\big)^2\Big).
\end{gather}
Summarizing,
\[
  \Pr\Big(\Big|\frac{1}{N_{\theta}}\sum_{i=1}^{N_{\theta}} \max_{x} g(\theta_i,x,\hat y)
-
\max_{x} g(\theta_i,x,y)
\Big|\geq \frac{t}{3}\Big)\le \e^{-2 N_\theta \big(\frac{t}{3\kappa^+} - \frac{M}{2 \sqrt{N_p}}\big)^2}
\]

\textbf{\Cref{eq:MC_g}:} Note that,
\begin{gather}
    \Pr\Bigg(
\Big|
\frac{1}{N_{\theta}}\sum_{i=1}^{N_{\theta}}
\Big(\max_{x\in\K_{i,\hat y}}\hat g(\theta_i,x,\hat y)
-
\max_{x\in\K_{i,\hat y}}g(\theta_i,x,\hat y)\Big)
\Big|
\ge \frac{t}{3}\Bigg)\\
\leq 
\Pr\Bigg(
\Big|
\frac{1}{N_{\theta}}\sum_{i=1}^{N_{\theta}}
\max_{x\in\K_{i,\hat y}}\Big(\hat g(\theta_i,x,\hat y)
-
g(\theta_i,x,\hat y)\Big)
\Big|
\ge \frac{t}{3}\Bigg)\\
\leq 
\Pr\Bigg(
\frac{1}{N_{\theta}}\sum_{i=1}^{N_{\theta}}
\Big|\max_{x\in\K_{i,\hat y}}\Big(\hat g(\theta_i,x,\hat y)
-
g(\theta_i,x,\hat y)\Big)
\Big|
\ge \frac{t}{3}\Bigg)
\end{gather}
If $k=1$, then the maximum is completely determined, hence we proceed analogously that for previous case, since $g\in [-M,M]$, $\E[|\hat g-g|]\leq \frac{M}{ \sqrt{N_{z}}}$:
\begin{align}
   \Pr\Bigg(
\Big|
\frac{1}{N_{\theta}}\sum_{i=1}^{N_{\theta}}
\Big(\max_{x\in\K_{i,\hat y}}\hat g(\theta_i,x,\hat y)
-
\max_{x\in\K_{i,\hat y}}g(\theta_i,x,\hat y)\Big)
\Big|
\ge \frac{t}{3}\Bigg)& \leq \e^{- \frac{N_{\theta}}{2M^2}\big(\frac{t}{3} - \frac{M}{ \sqrt{N_z}}\big)^2}.
\end{align}
However, if $k\geq 2$, we need to consider the maximum.
Note that $X_i=\hat g- g$ is a sub-Gaussian variable with $\sigma^2=\frac{M^2}{N_z}$. Hence, $\max_{i\in\K} |X_i|$ satisfies:
\[
\mathbb{E}[\max_{1\leq i\leq k} |X_i|] \leq \sqrt{2\sigma^2\ln(2k)} \leq \frac{M}{\sqrt{N_{z}}}\sqrt{2\ln(2k)}
\]
so applying Hoeffding,
\begin{gather}
    \Pr\left(
    \frac{1}{N_{\theta}}\bigg(\sum_i \max_{x}|X_x^i|-\sqrt{2\sigma^2\ln 2k}\bigg)\geq t\right)\leq \e^{-\frac{N_{\theta} t^2}{2 M^2}} \\
    \Leftrightarrow
    \Pr\left(\frac{1}{N_{\theta}}\sum_i \max_{x}|X_x^i|\geq t\right) \leq \e^{-\frac{N_{\theta}}{2M^2} \left(t-\frac{M}{\sqrt{N_z}}\sqrt{2\ln(2k)}\right)^2}. 
\end{gather}
Hence, 
\[
   \Pr\Bigg(
\Big|
\frac{1}{N_{\theta}}\sum_{i=1}^{N_{\theta}}
\Big(\max_{x\in\K_{i,\hat y}}\hat g(\theta_i,x,\hat y)
-
\max_{x\in \K_{i,\hat y}}g(\theta_i,x,\hat y)\Big)
\Big|
\ge \frac{t}{3}\Bigg)\leq  \e^{-\frac{N_{\theta}}{2M^2} \left(\frac{t}{3}-\frac{M}{\sqrt{N_z}}\sqrt{2\ln(2k)}\right)^2}
\]
\textbf{\Cref{eq:MC_thetas} } Follows directly from Hoeffding's inequality with  $f\in[-M,M
]$:
\begin{align}
\Pr\Bigg(
\Big|
\frac{1}{N_{\theta}}\sum_{i=1}^{N_{\theta}}
f(\theta_i)
-
\E_{\theta}\big[f(\theta)\big]
\Big| \geq& \frac{t}{3}\Bigg)\le  2\e^{-\frac{N_{\theta}}{2 M^2}\frac{t^2}{9}}
\end{align}

Putting all together we have the final error control bound:
\[
\Pr(|\hat \gamma-\gamma|\geq t)\leq 2\e^{-\frac{N_{\theta}}{2M^2}\frac{t^2}{9}}+\e^{-\frac{N_{\theta}}{2M^2}(\frac{t}{3}-\frac{M}{\sqrt{N_z}}\sqrt{2\ln(2k) })^2}+\e^{-\frac{N_{\theta}}{2M^2}(\frac{t}{3\kappa^+}-\frac{M}{2\sqrt{N_p}})^2}.
\]
with $k\leq 4$. We empirically evaluate this method for the exponential mechanism~\cite{Dwork2014Algorithmic} with $u=-|z-\theta|$ in $\Z=[0,1]$, hence,
\[
M\leq \frac{1}{s(1-\e^{-\frac{1}{s}})}\,\quad \text{where }s=\frac{2\Delta}{\varepsilon},
\]
ensuring the convergence of the numerical estimation. 

In particular, we test uniform and beta distributions for $\eta\in\{0.1,0.25,0.5,0.75\}$. We report the estimate and empirical confidence intervals with $J=500$ repetitions, since this ensures a convergence of the intervals for a tolerance of $\tau=1\time10^{-3}$ for all tested $\varepsilon$. We present the results in~\Cref{fig:MC_results,fig:MC_results_beta}, showing the estimated risk along with empirical $95\%$ confidence intervals. The figures illustrate a consistent pattern: the size of the confidence intervals decreases as the sample size increases. As expected, the variance grows with increasing 
$\varepsilon$, since the corresponding constant 
$\M$ also increases with $\varepsilon$.

\begin{figure*}[t] 
  \centering
  \begin{subfigure}[b]{0.2\textwidth}
    \centering
    \includegraphics[width=1.1\linewidth]{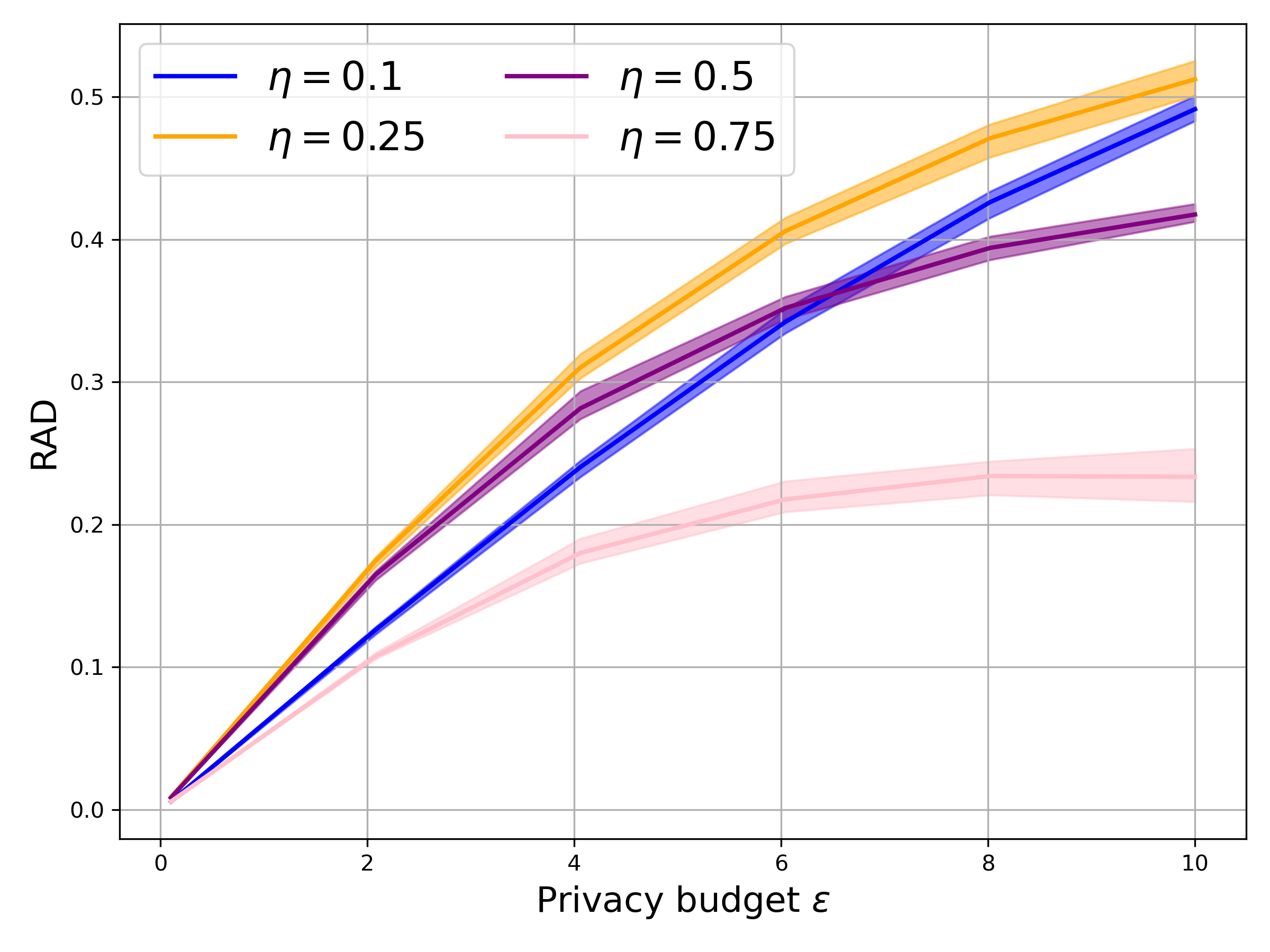}
    \caption{$N=500$.}
  \end{subfigure}\hfill
  \begin{subfigure}[b]{0.2\textwidth}
    \centering
    \includegraphics[width=1.1\linewidth]{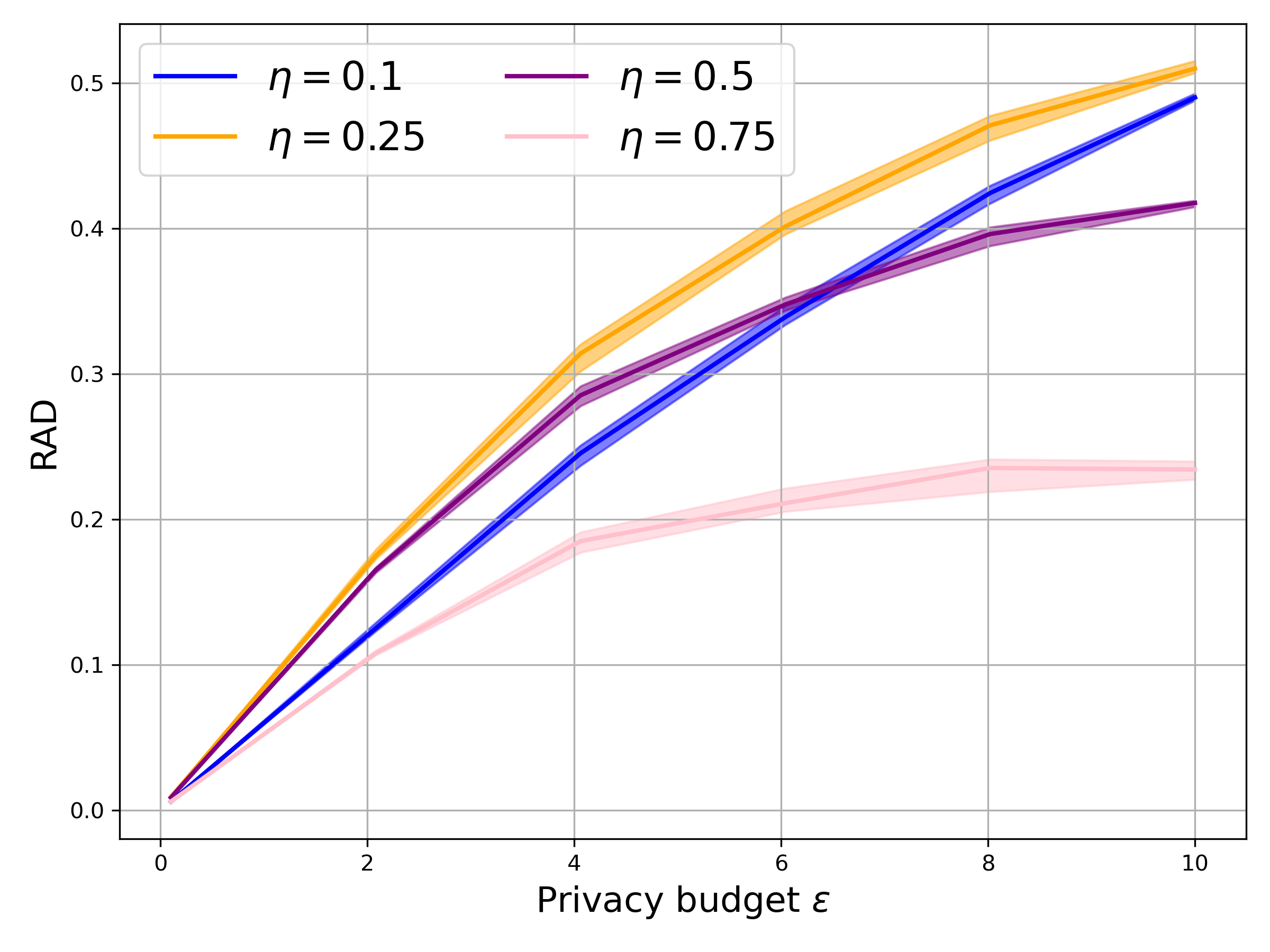}
    \caption{$N=1000$.}
  \end{subfigure}\hfill
  \begin{subfigure}[b]{0.2\textwidth}
    \centering
    \includegraphics[width=1.1\linewidth]{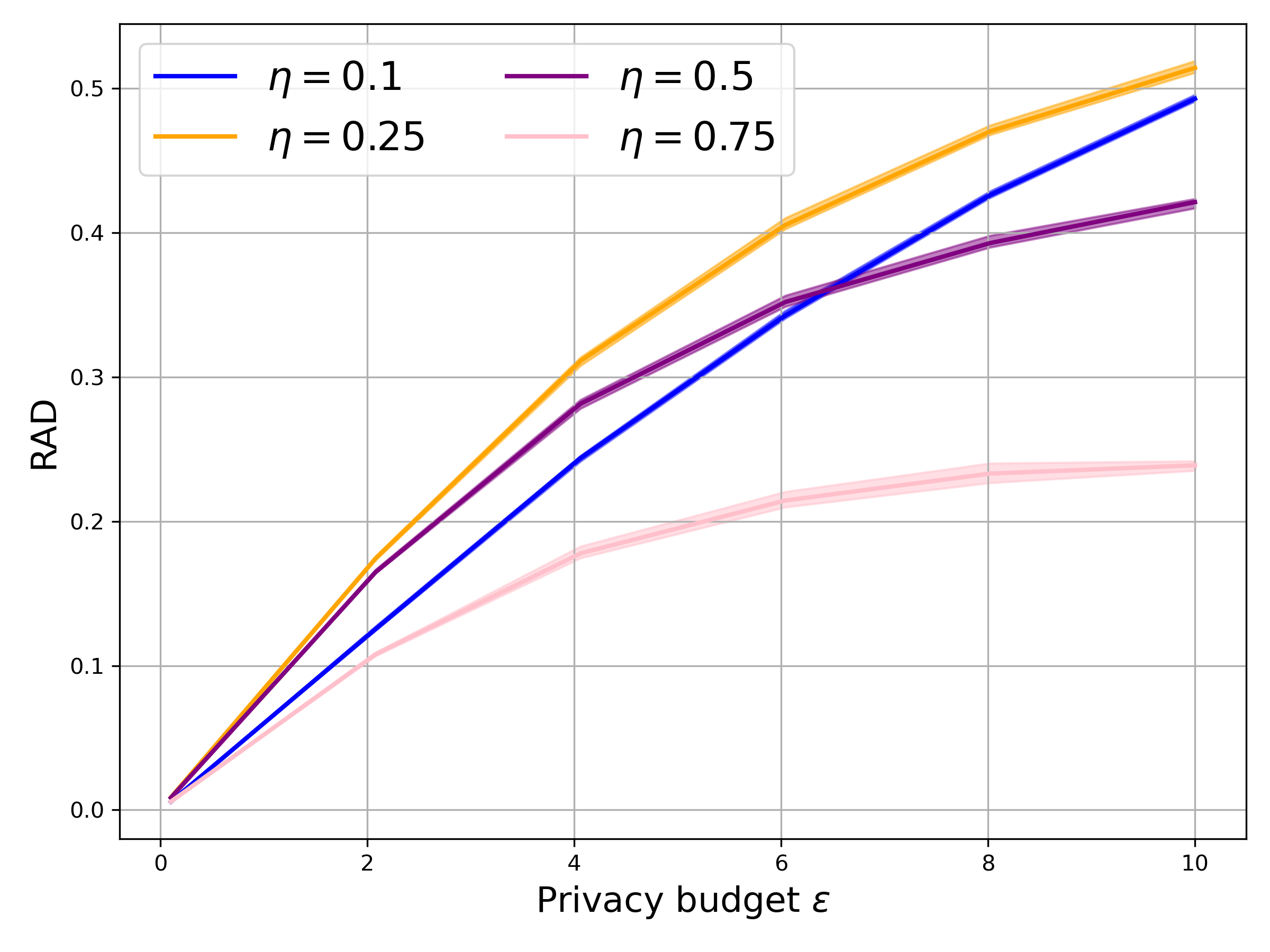}
    \caption{$N=2000$.}
  \end{subfigure}\hfill
    \begin{subfigure}[b]{0.2\textwidth}
    \centering
    \includegraphics[width=1.1\linewidth]{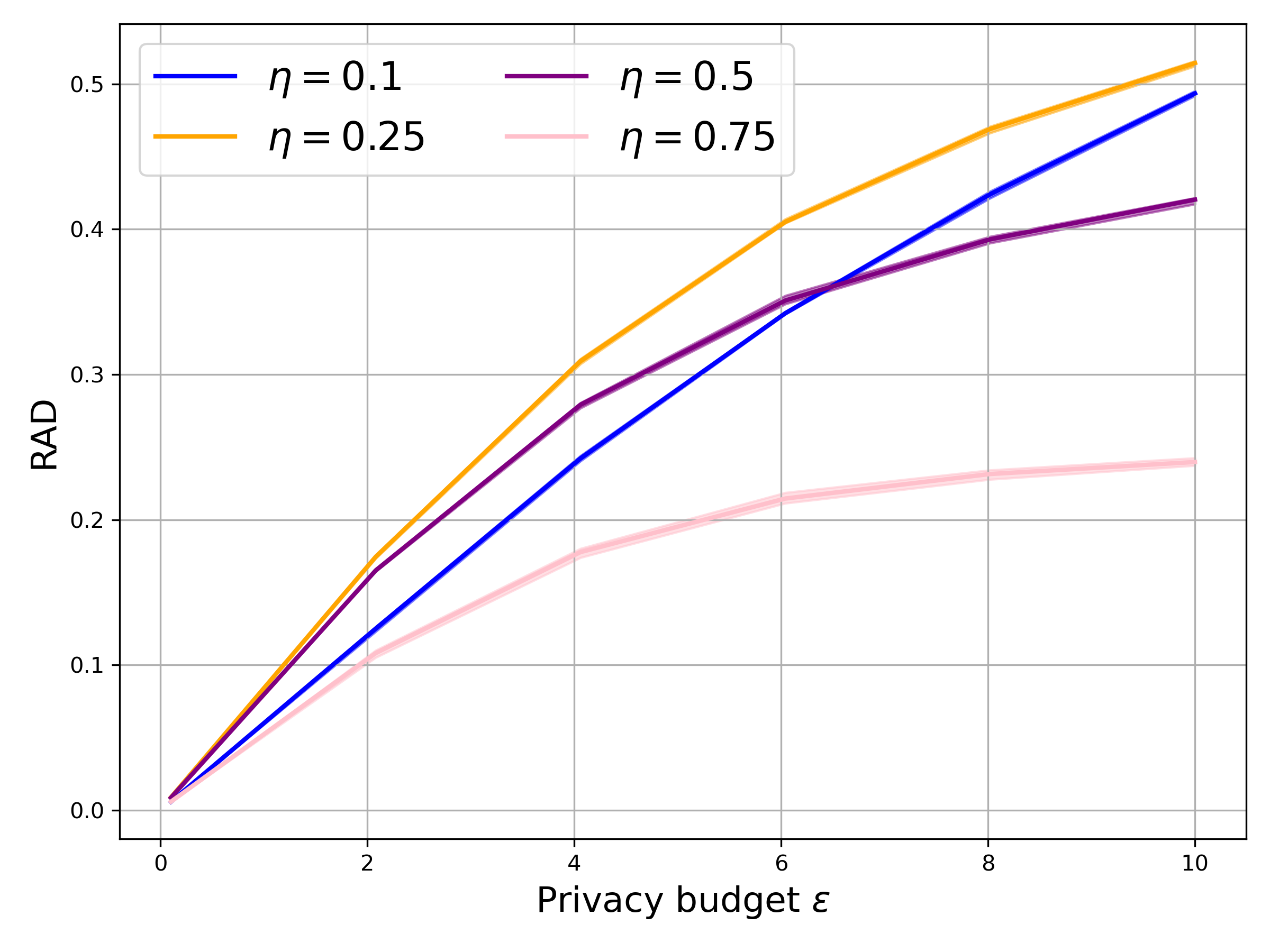}
    \caption{$N=5000$.}
  \end{subfigure}
  \caption{Numerical approximation of~\Cref{th:optimal_bound} with empirical $95\%$ confidence intervals for the exponential mechanism and $\pi=U(0,1)$ continuous and $N_p=N_{\theta}=N_z\equiv N$.}
  \label{fig:MC_results}
\end{figure*}
\begin{figure*}[t] 
  \centering
  \begin{subfigure}[b]{0.2\textwidth}
    \centering
    \includegraphics[width=1.1\linewidth]{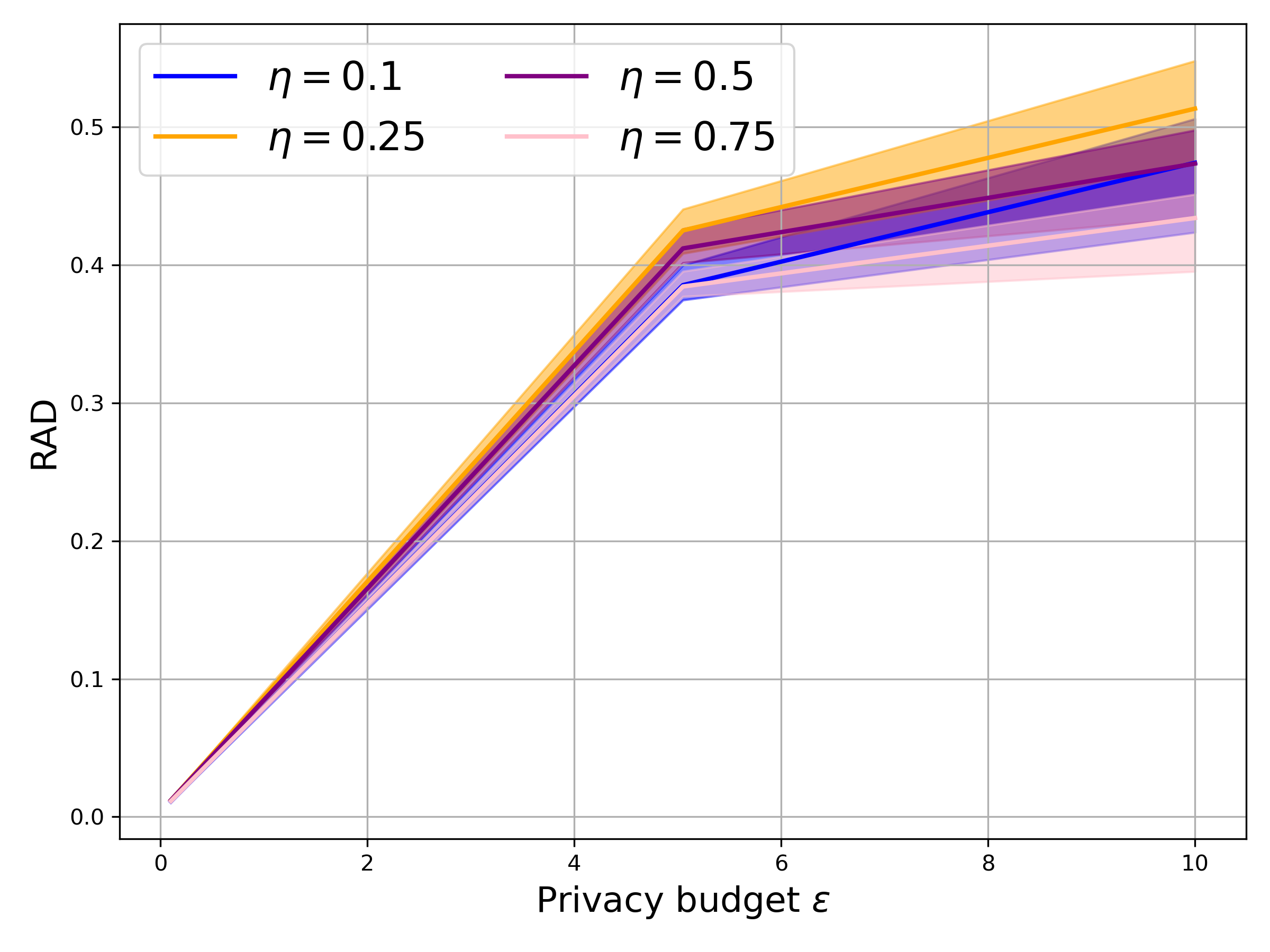}
    \caption{$N=500$.}
  \end{subfigure}\hfill
  \begin{subfigure}[b]{0.2\textwidth}
    \centering
    \includegraphics[width=1.1\linewidth]{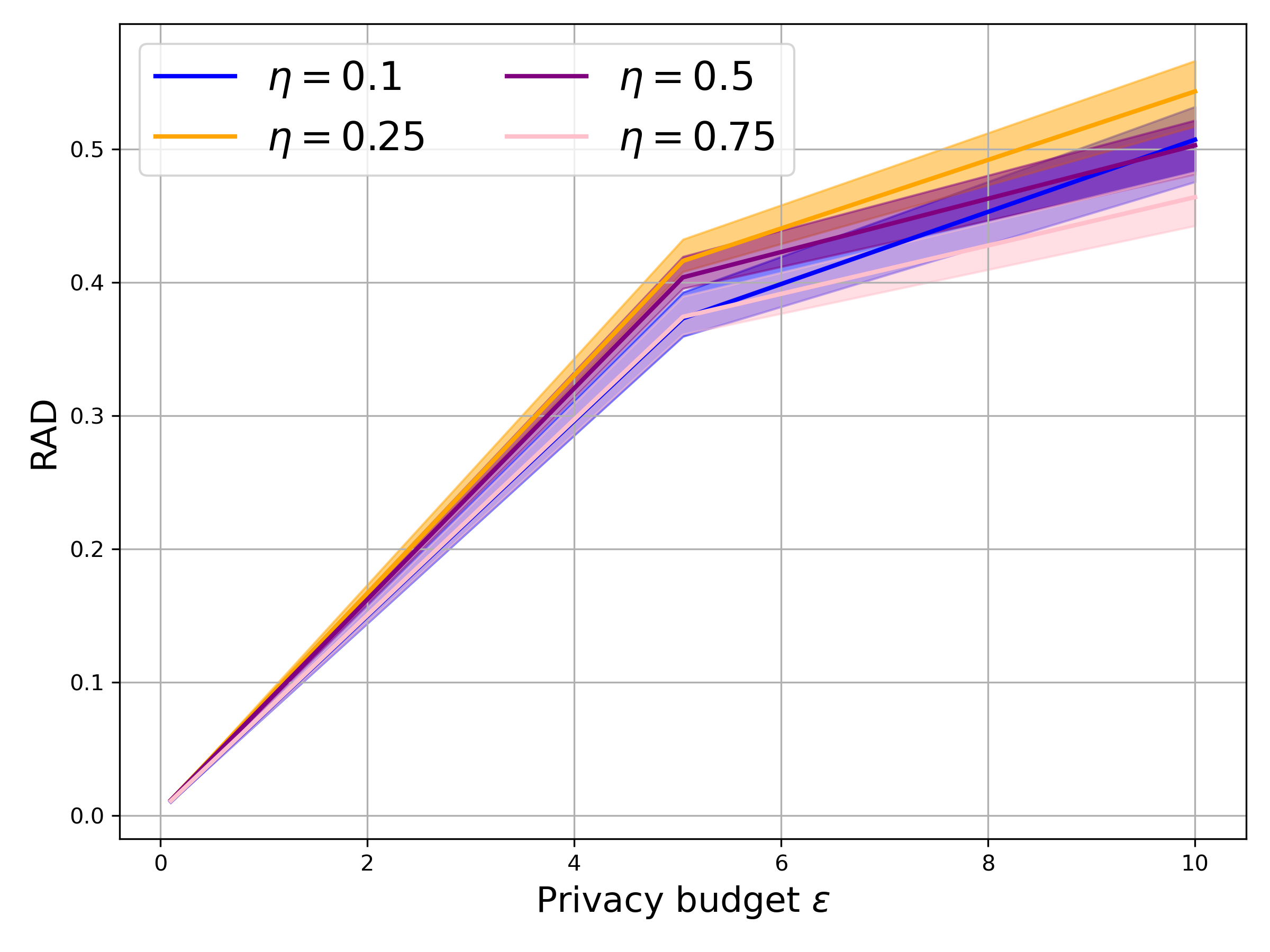}
    \caption{$N=1000$.}
\end{subfigure}\hfill
 \begin{subfigure}[b]{0.2\textwidth}
    \centering
    \includegraphics[width=1.1\linewidth]{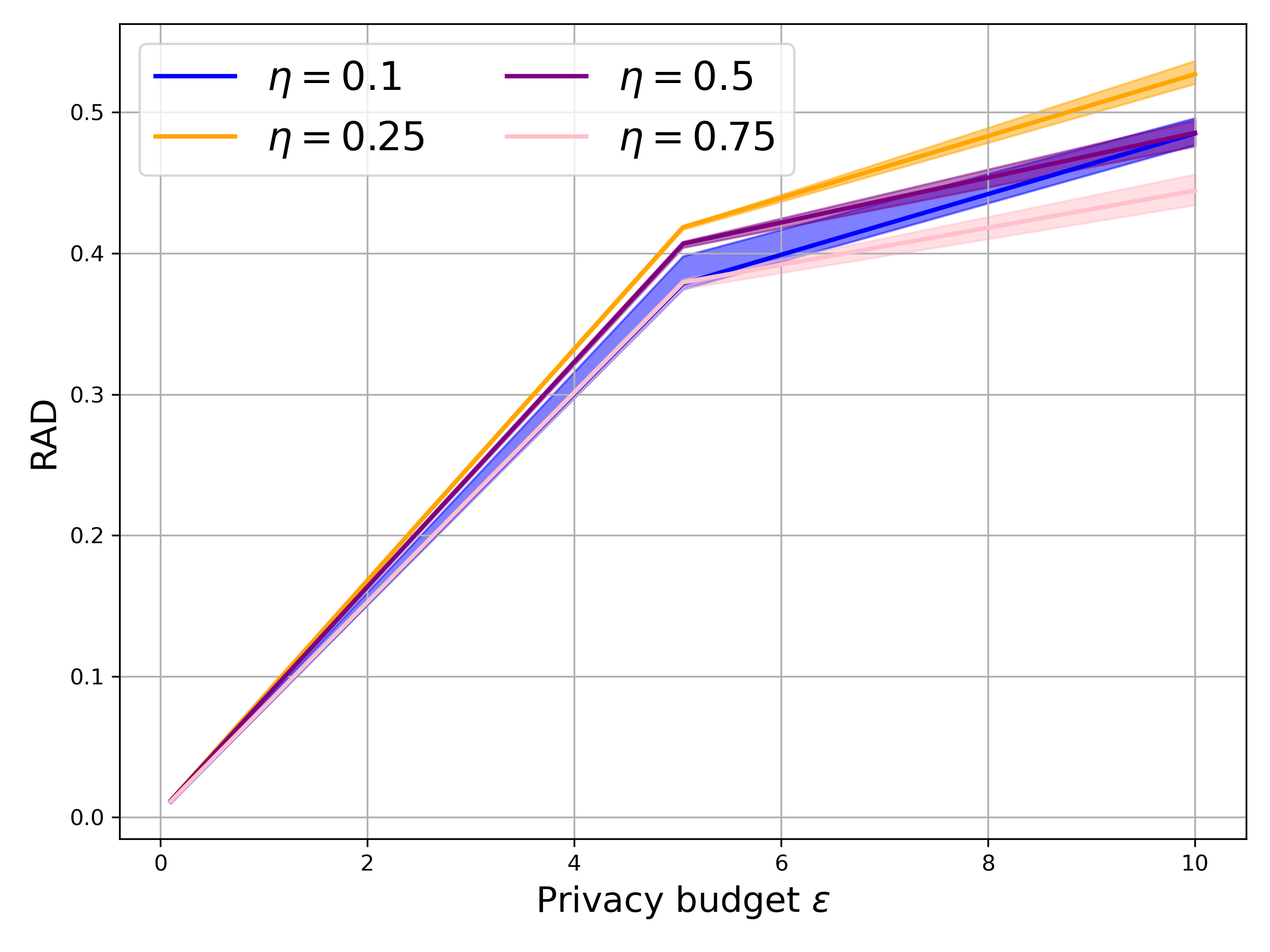}
    \caption{ $N=2000$.}
\end{subfigure}
\hfill
 \begin{subfigure}[b]{0.2\textwidth}
    \centering
    \includegraphics[width=1.1\linewidth]{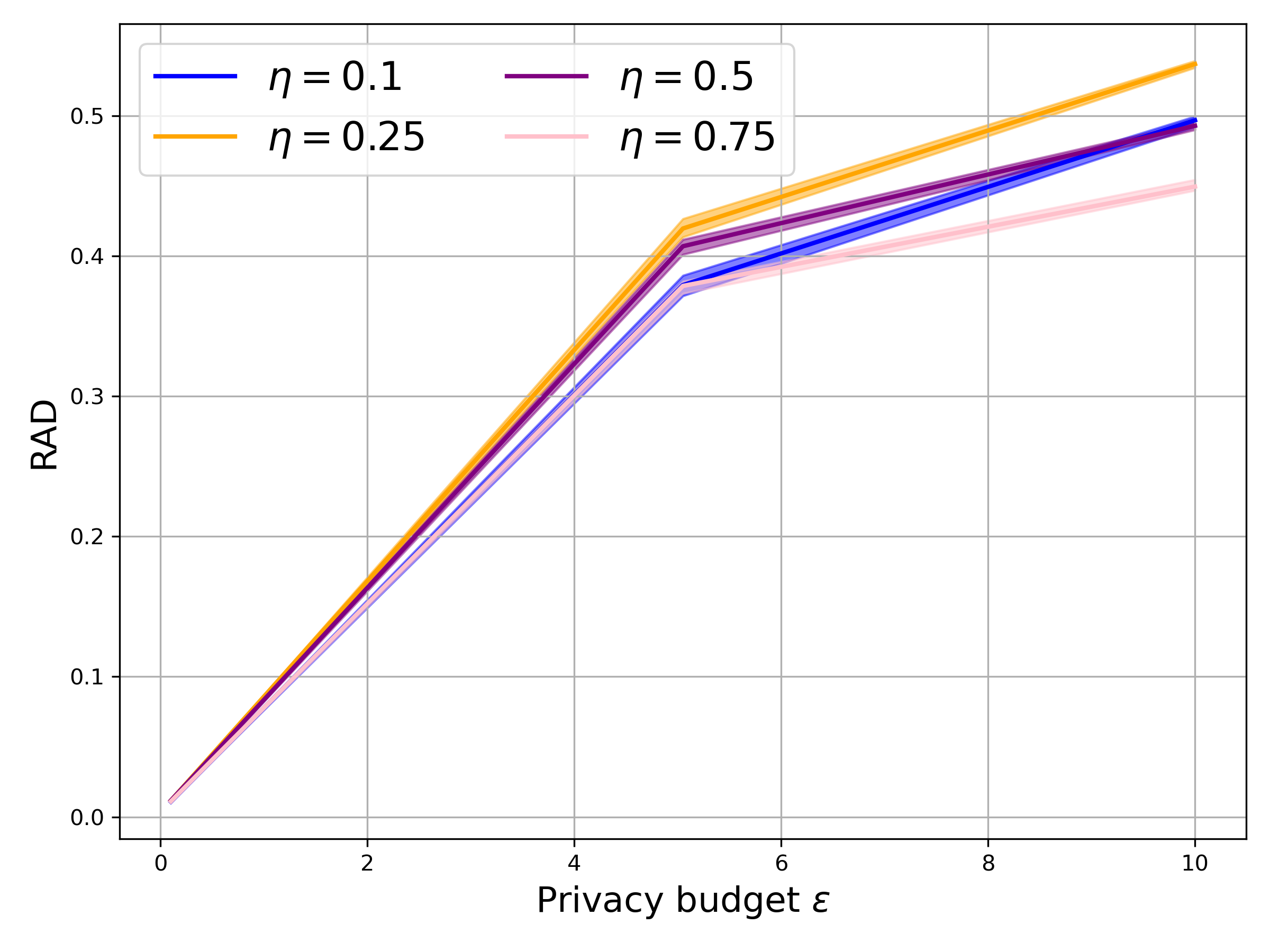}
    \caption{ $N=5000$.}
  \end{subfigure}
  \caption{Numerical approximation of~\Cref{th:optimal_bound} with empirical $95\%$ confidence for the exponential mechanism and $\pi=\mathrm{Beta}(0.1,0.1)$ continuous and $N_p=N_{\theta}=N_z\equiv N$.}
  \label{fig:MC_results_beta}
\end{figure*}

\end{document}